\definecolor{black}{rgb}{0.0, 0.0, 0.0}
\definecolor{red}{rgb}{1.0, 0.5, 0.5}
\newcommand{\margnote}[1]{
\ifthenelse{\boolean{shownotes}}%
{\marginpar{\raggedright\tiny\texttt{#1}}}%
{}%
}
\newcommand{\hole}[1]{
\ifthenelse{\boolean{shownotes}}%
{\begin{center} \fbox{ \rule {.25cm}{0cm} \rule[-.1cm]{0cm}{.4cm}
\parbox{.85\textwidth}{\begin{center} \texttt{#1}\end{center}} \rule
{.25cm}{0cm}}\end{center}} {} }
\title[Mean-field limit of VPFP]{On the mean-field limit of Vlasov-Poisson-Fokker-Planck equations}
\author[Chen]{Li Chen}
\address[Li Chen]{\newline Institut f\"ur Mathematik \newline
Universit\"at Mannheim, B6, 68159 Mannheim, Germany}
\email{chen@math.uni-mannheim.de}
\author[Jung]{Jinwook Jung}
\address[Jinwook Jung]{\newline Department of Mathematics and Research Institute for Natural Sciecnes \newline
Hanyang University, 222 Wangsimni-ro, 04763 Seoul, Republic of Korea}
\email{jinwookjung@hanyang.ac.kr}
\author[Pickl]{Peter Pickl}
\address[Peter Pickl]{\newline Fachbereich Mathematik \newline
Universit\"at T\"ubingen,  Auf der Morgenstelle 10, 72076 T\"ubingen, Germany}
\email{p.pickl@uni-tuebingen.de}
\author[Wang]{Zhenfu Wang}
\address[Zhenfu Wang]{\newline Beijing International Center for Mathematical Research \newline
Peking University, Beijing 100871, China}
\email{zwang@bicmr.pku.edu.cn}
\numberwithin{equation}{section}
\newtheorem{theorem}{Theorem}[section]
\newtheorem{lemma}{Lemma}[section]
\newtheorem{proposition}{Proposition}[section]
\newtheorem{remark}{Remark}[section]
\newcommand{\R}{\mathbb R}
\newcommand{\bq}{\begin{equation}}
\newcommand{\eq}{\end{equation}}
\newcommand{\e}{\varepsilon}
\newcommand{\lt}{\left}
\newcommand{\rt}{\right}
\newcommand{\pa}{\partial}
\newcommand{\ud}{\mathrm{d}}
\newcommand{\intr}{\int_{\R^3}}
\newcommand{\intrr}{\iint_{\R^3 \times \R^3}}
\def\moverlay{\mathpalette\mov@rlay}
\def\mov@rlay#1#2{\leavevmode\vtop{%
   \baselineskip\z@skip \lineskiplimit-\maxdimen
   \ialign{\hfil$\m@th#1##$\hfil\cr#2\crcr}}}
\newcommand{\charfusion}[3][\mathord]{
    #1{\ifx#1\mathop\vphantom{#2}\fi
        \mathpalette\mov@rlay{#2\cr#3}
      }
    \ifx#1\mathop\expandafter\displaylimits\fi}
\begin{document}
%%%%%%%%%%%%%%%%
\allowdisplaybreaks

\date{\today}
\thanks{\textbf{Acknowledgment.}  This project was partially supported by the National Key R\&D Program of China, Project Number 2024YFA1015500. L. Chen is partially supported by the German Research Foundation (No. CH 955/8-1). P. Pickl is partially supported by
the Deutsche Forschungsgemeinschaft (DFG, German Research Foundation) – TRR
352 – Project-ID 470903074. Z. Wang is partially supported by NSFC grant No.12171009. 
}

%\subjclass{2010 MSC: 	35Q35,	35Q92, 76T10} 
\keywords{Vlasov-Poisson-Fokker-Planck equations; Mean-field limits}

\begin{abstract} 	The derivation of effective descriptions for interacting many-body systems is an important branch of applied mathematics. We prove a
	 propagation of chaos result  for a system of $N$ particles subject to Newtonian time evolution with or without additional white noise influencing the velocities of the particles. We assume that the particles interact according to a regularized Coulomb-interaction with a regularization parameter that vanishes in the $N\to\infty$ limit.
	 The respective effective description is the so called  Vlasov-Poisson-Fokker-Planck (VPFP), respectively the  Vlasov-Poisson (VP) equation in the case of no or sub-dominant white noise.
	 To obtain our result we combine the relative entropy method from \cite{jabinWang2016} with the control on the difference between the trajectories of the true and the effective description provided in \cite{HLP20} for the VPFP case respectively  in \cite{LP} for the VP case.
	 This allows us to prove strong  convergence of the marginals, i.e. convergence in
 $L^1$.
\end{abstract}

\maketitle \centerline{\date}

%\tableofcontents

%%%%%%%%%%%%%%%%%%%%%%%%%%%%%%%%%%%%%%%%%%%%%%%%%%%%%%%%%%%%%%%%%%%%%%%%%%%%%%%%%5
%
%
%                        Section: Introduction 
%
%
%%%%%%%%%%%%%%%%%%%%%%%%%%%%%%%%%%%%%%%%%%%%%%%%%%%%%%%%%%%%%%%%%%%%%%%%%%%%%%%%%
\section{Introduction}\label{sec:intro}
\setcounter{equation}{0}

The analytic or numerical treatment of systems composed of  many interacting agents is often difficult up to impossible. However, there are cases where simplified effective equations can be used to draw the main features of the system. This is in particular the case when the initial configurations of the agents are independent and/or  white noise is included in the dynamics. 
Assuming that propagation of chaos holds, i.e. that the independence is only mildly violated as time propagates, the law of large numbers can be used to replace the pair-interaction by its expectation value, the so-called mean-field interaction. One arrives at the effective description of the dynamics of the system.
The main ingredient for proving the validity of the effective equation is thus proving that the independence of the particles (``chaos'') in fact persists under time evolution (``propagation of chaos''). 	See for instance \cite{Sznitman91,HaurayMischler14}. 
	It is therefore not surprising that verifying the propagation of chaos for interacting many particle systems has been a vivid area of mathematical research in recent decades. Many of the underlying physical or biological systems naturally consider a singular interaction between the particles. However, depending on the system at hand, singularities in the interaction are a major obstacle to be overcome when proving the validity of propagation of chaos and its consequence: the accuracy of the effective equations.

In the present paper we will review some recent findings in this direction, where - in a sense to be described below - mildly singular interactions are considered to prove closeness of the $N$-body trajectory of the interacting system to the respective mean-field trajectory  in a probabilistic sense with respect to the initial distribution of the particles.   By ``mean-field trajectory'' we denote the trajectory on $N$-body phase space where each particle independently moves according to the one-particle flow of the effective evolution equation one obtains via the mean-field consideration explained above.  

The central insight of this paper is that such an estimate on the level of  trajectories can be used to control the relative entropy on the level of probability measures describing the full interacting system compared to the effective descriptions: one can show that for any $k\in\mathbb{N}$ the $k^{\text{th}}$ marginal $f_t^{N,k}$ of the $N$-body density $f_t^{N}$ will be close to the $k$-fold product of the respective solution of the effective equation $f_t$ in the sense that relative entropy comparing the two will be small.
Such a control of the relative entropy opens the door to understanding closeness of $f_t^{N,k}$ and $f_t^{\otimes k}$ also with respect to other prominent notions of distance, for example in the $L^1$ sense or with respect to the Wasserstein distance.

Although the findings of this paper do also apply to other models, we here focus   on  results for second order systems that are subject to white noise.
We already mentioned that the key difficulty in the derivation of mean field limits comes from the singularity of the force that one encounters for example for systems subject to Newtonian gravitation or Coulomb repulsion. To manage this difficulty  we consider a second order stochastic system with a regularized interaction force $k^N$ that will be specified below. Note already that the regularization will depend on $N$. The white noise will be modeled by a Brownian motion with an  $N$-dependent strength parameter $\sigma_N\geq 0$. We assume that, as $N\to\infty$, $\sigma_N$ converges and denote the limit by $\sigma$. Summarizing, we consider the dynamics of a (stochastic) Newtonian system with 2-body interaction $k^N$ given by 
\bq\label{micro_reg}
\begin{aligned}
	dx_t^{i,N,\sigma_N} &= v_t^{i,N,\sigma_N}dt,\\
	dv_t^{i,N,\sigma_N}&= \frac{1}{N-1}\sum_{j\neq i} k^N(x_t^{i,N,\sigma_N}-x_t^{j,N,\sigma_N})dt + \sqrt{2\sigma_N}dB_t^i.
\end{aligned}
\eq

These equations define a trajectory $(X_t, V_t)$ on $N$-body phase space and a respective flow $\Phi_t^N:\mathbb{R}^{dN}\to\mathbb{R}^{dN}$:
$$
\Phi_t^N (X_0,V_0) = (X_t, V_t) := ( x_t^{1,N,\sigma_N}, \dots,  x_t^{N,N,\sigma_N},  v_t^{1,N,\sigma_N}, \dots,  v_t^{N,N,\sigma_N}).
$$
The initial data of \eqref{micro_reg} is given by a set of independent and identically distributed (i.i.d.) square integrable random variables, i.e. \bq\label{particleID}
(x^{i,N,\sigma_N}_0,v^{i,N,\sigma_N}_0),\quad i=1,\cdots,N,
\eq
with $f_0$ be their common probability density function.
In other words, $(X_0,V_0)$ is distributed according to the density $f_0^{\otimes N}$. We denote the respective time dependent probability density on the $N$-particle phase space by $f_t^N:=f_0^{\otimes N}\circ \Phi_{-t}^N$.

From this one can guess the effective description of the system, as explained above by replacing the interaction by its expectation value, is the so-called mean-field interaction.
On the level of densities one arrives at  the following kinetic equation
\bq\label{main_eq}
\pa_t f_t + v \cdot \nabla_x f_t +(k \star \rho_t) \cdot \nabla_v f_t = \sigma \Delta_v f_t \qquad \mbox{ where }k(x) :=\pm C_d  \frac{x}{|x|^d}, \quad \sigma\geq 0,
\eq
and $\rho_t=\int_{\R^{d}}f_t \mathrm{d} v$.
For $\sigma>0$ \eqref{main_eq} is called the Vlasov-Poisson-Fokker-Planck (VPFP) equation, for $\sigma=0$ it is called the Vlasov-Poisson (VP) equation.

Before we come to the central results of this paper let us first give a short overview of recent findings related to propagation of chaos in particular for singular interactions. Due to their wide applications in applied sciences, independent contributions from different research groups have  substantially contributed to the  developments of the field. Many new ideas and methods have been introduced to prove mean-field limit results in different settings and for different kinds of singular kernels.
A detailed review of the state of the art of  recent developments can for example be found in \cite{BDJ24,CHJ24,golse2016dynamics,jabin2017mean}. 

	Propagation of chaos for second order systems with singular interactions has been studied already by Oelschl\"ager in \cite{oelschlager91} for the case that $k^N$ is an approximation of the gradient of the Dirac-Delta distribution. This work is based on an estimate of a smoothed version of the modulated energy, which then implies convergence to the Euler system. 

The modulated energy method was developed for hydrodynamic limits, dating back to Dafermos \cite{Defermos79,Defermos20} in the context of conservation laws and Brenier  \cite{Brenier2000} in the  study of quasi-neutral limits from the Vlasov-Poisson equation to the incompressible Euler system for suitable initial data. Brenier's result \cite{Brenier2000}  was  later improved by Masmoudi for general initial data\cite{Masmoudi2001}. The modulated energy method has proven useful in various situations, for instance in the high field limit \cite{GNPS05}, the quasi-neutral limits \cite{Wang05,HLW06}, the quasi-neutral limit with vanishing viscosity  \cite{HLW08}, etc.  Also results in a hydrodynamic limit have to be mentioned. In a limiting regime where  the velocities are relaxed the modulated energy  method can be used to prove convergence of solutions of the kinetic Vlasov equation (or Vlasov-Fokker-Planck) to the Euler equation \cite{Kang18}. The same holds true for systems with velocity alignment \cite{KMT15}, in such situations even singular interactions like Coulomb can be treated \cite{CCJ21}.  
Even without noise, the modulated energy method can be used to treat singular interactions. Assuming mono-kineticity of the initial data as well as repulsiveness of the interaction Serfaty and Duerinckx \cite{serfaty2020mean} showed convergence of the Newtonian many-body system towards the pressureless Euler-Poisson system. 
Inspired by \cite{Brenier2000} and \cite{serfaty2020mean}, a combined mean-field and quasi-neutral limit, from Newton's law to incompressible Euler, has been proved in \cite{HKI21}. 

Parallel to this, the relative entropy method  was used by Jabin and Wang   to derive mean-field limit of the second order system for certain types of singular interactions  \cite{jabinWang2016,jabin2018quantitative}.  Later, the relative entropy method is combined with the modulated energy method to form the modulated free energy method to treat the mean field limit problem for stochastic systems with singular interactions in \cite{bresch2019mean}. 

In recent years, also methods which are not based on estimating the relative entropy between the microscopic and the effective description have been found to prove propagation of chaos.
Hauray and Jabin considered singular interaction forces scaling as  $1/\vert x \vert^{\lambda}$ in three dimensions with ${\lambda} < 1$ \cite{HJ07} as well as values for  ${\lambda}$ smaller  than but close to $2$, and a lower bound on the cut-off at $|x|= N^{-1/6}$  \cite{HJ15}. In their work they could quantify the rate of convergence in Wasserstein distance for sufficiently large enough $N$.   Kiessling in  \cite{K14} assumes no cut-off but some additional technical condition which can be read as a  bound on the maximal forces of the microscopic system along the trajectories and repulsiveness on the interaction.  

	By introducing an  $N$-dependent cut-off of the singularity, Lazarovic and Pickl proved a convergence in probability result  for the trajectories in \cite{LP}, which implies  weak propagation of chaos result for the VP case. Following these techniques,  Carrillo, Choi, and Salem could generalize this result for a system including white noise. In this case the limiting equation is Vlasov Poisson Focker Planck \cite{CCS19}. In addition they proved convergence of the marginals of the densities towards the limiting equation in Wasserstein distance rather explicitly by giving  a convergence rate estimate. In \cite{HLP20}, Huang, Liu, and Pickl improved the cut-off rate of the Coulomb potential for the VPFP case by using the regularization effect caused  by the noise term.

Recently, Bresch, Jabin, and Soler \cite{BJS23} developed a stability estimate for the error term in the BBGKY hierarchy for the VPFP case on a two dimensional torus. Bresch, Duerinckx, and Jabin \cite{BDJ24} applied the duality method and proved propagation of chaos (for second order system with singular potentials) in the sense of distributions.

	Note, that the different techniques we mentioned lead to different notions of convergence of the many-body system towards the effective equation. Convergence of the relative entropy of the respective densities directly implies convergence in a strong sense ($L^1$). However, all results we mentioned that control the relative entropy for a second order system with singular interaction were made in a special situation where the limiting description is given by Euler. 
Within the framework of modulated energy, the information on the mesoscopic level, i.e. the  kinetic Vlasov dynamics, is somehow missing, which is the main difference between first order and second order systems. However, the mesoscopic scaling or description is very important in many applications and shall not be bypassed.  
In contrast to that, the results we mentioned above that are not based on an estimate of the relative entropy obtained only the weak convergence (Wasserstein distance, convergence in distribution). 

The goal of this article is to show the strong form of propagation of chaos towards the VP (or VPFP) system. We will show that the quantitative estimates given by Huang, Liu, and Pickl \cite{HLP20} (for VPFP case) and Lazarovic and Pickl \cite{LP}  on the trajectory differences in the sense of probability are the right ingredient that enables us to follow Jabin and Wang \cite{jabinWang2016, jabin2018quantitative} and get an estimate on the relative entropy in the three dimensional case.
The estimates on the convergence in probability in \cite{HLP20} and \cite{LP} can be used in a rather direct manner to obtain the relative entropy estimates. We will therefore take over the setting and assumptions from \cite{HLP20} (VPFP case) and cite the relevant estimate. To avoid unnecessary repetitions we will not carry out the full proof in the VP case but only state the result. The proof is practically equivalent to the proof in the VPFP case.

For the force-term  in \eqref{micro_reg} we use the same version of a  regularized force $k^N$ as used in \cite{CCS19}, 
\begin{equation}
	\label{kN_LP}
	k^N(x)=\left\{\begin{array}{ll}
		k(x),  & \mbox{if\, } |x|\geq N^{-\delta}, \\[3mm]
		\pm C_d x N^{d\delta},  &  \mbox{if\, } |x|<N^{-\delta},
	\end{array}\right.
\end{equation}
where $\delta \in (0,1)$ will be given below.  We recall that the initial data of \eqref{micro_reg} is given as a set of i.i.d.  square integrable random variables as in \eqref{particleID} 
with $f_0$ be their common probability density function. 
We fix  a complete probability setting for \eqref{micro_reg}: let $(\Omega, \mathcal{F}, (\mathcal{F}_{t\geq 0}) , \mathbb{P})$ be a complete filtered probability space,
$d$-dimensional $\mathcal{F}_t$-Brownian motions $\{ (B^i_t)_{t\geq 0} \}_{i=1}^N$  be independent of each other, and let  the initial data $\{ (x^{i,N,\sigma_N}_0,v^{i,N,\sigma_N}_0) \}_{i=1}^N$ be independent of $\{ (B^i_t)_{t\geq 0} \}_{i=1}^N$.  

Since the interaction kernel $k^N$ now is regular enough, the Cauchy problem \eqref{micro_reg} together with the above given initial data has a unique square integrable solution  
$$
\Phi_t^N = (X_t, V_t) := ( x_t^{1,N,\sigma_N}, \dots,  x_t^{N,N,\sigma_N},  v_t^{1,N,\sigma_N}, \dots,  v_t^{N,N,\sigma_N}),
$$
and denote $f_t^{N}(x_1,\cdots,x_N,v_1,\cdots,v_N)$ by its joint distribution at time $t$. Note that our assumption \eqref{particleID} on the initial data implies $f_0^N = \bigotimes_{i=1}^N f_0(x_i, v_i)$.

We also present the $N$-particle relative entropy between the joint distribution $f_t^N$ and the tensorized law $f_t^{\otimes N}$, and also  the $k$-particle relative entropy between the $k$-marginal $f_t^{N, k}$ and $f_t^{\otimes k}$,  from \cite{jabinWang2016}, are defined as 
\begin{align*}
\mathcal{H}_N(f_t^{N} \ | \  f_t^{\otimes N})=\frac1N \mathcal{H}(f_t^{N} \ | \  f_t^{\otimes N})&:= \frac1N\int_{\mathbb{R}^{2dN}} f_t^{N} \log \frac{f_t^{N}}{ f_t^{\otimes N}} \ud z_1 \cdots \ud  z_N, \\
\mathcal{H}_k(f_t^{N,k} \ | \  f_t^{\otimes k})&:= \frac 1 k \int_{\R^{2dk}} f_t^{N,k} \log \frac{f_t^{N,k}}{ f_t^{\otimes k}}\ud z_1 \cdots \ud z_k, 
\end{align*}
respectively, where we used the short notation $z_i=(x_i, v_i) \in \R^d \times \R^d$.

\vskip5mm

We now come to the main results of our paper. Depending on different regularity assumptions on the initial datum we get convergence of the relative entropy of $f_t^N$ towards the product $f_t^{\otimes N}$ with different convergence rate (Theorem \ref{main_thm1} and Theorem \ref{main_thm2}). The $\sigma$-dependence of both these estimates is such that  one gets convergence for cases where, as $N\to\infty$, $\sigma$ does not tend to zero too fast.
Under further regularity assumption we are also able to handle the cases where  $\sigma$ tends to zero arbitrarily fast (Theorem \ref{cor_VP}).

\begin{theorem}\label{main_thm1}
Let $T>0$,  $\delta \in (0,1/d)$ be given and assume that the initial data $f_0\geq 0$ satisfies that 
\[
f_0 \in (L^1\cap L^\infty\cap \mathcal{P}_2)(\R^{2d}), \quad \int_{\R^{2d}} f_0 \log f_0\,\ud x \ud v <\infty, \quad (1+|v|^2)^{\frac{m_0}{2}} f_0 \in L^\infty(\R^{2d})\mbox{ for some } m_0>d,
\]
where $\mathcal{P}_2(\R^{2d})$ denotes the space of probability measures with finite 2nd moments. Let $f_t^N$ and $ f_t$ be the joint distribution to the particle system \eqref{micro_reg} with the regularized interaction \eqref{kN_LP} and the (unique) weak solution to the limit PDE \eqref{main_eq}, respectively. Assume further that $\sigma_N \equiv \sigma>0$.  Then there exists a positive constant $C=C(T)$, independent of $N$ and $\sigma$,  satisfying that 
\[
\sup_{0\le t \le T} \mathcal{H}_N (f_t^N \ | \  f_t^{\otimes N}) \le \frac{C\exp(C\sqrt{\log N})}{\sigma N^{2\delta}},
\]
for sufficiently large $N$.
\end{theorem}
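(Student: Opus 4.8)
The plan is to run the relative entropy argument of Jabin--Wang \cite{jabinWang2016}, using the trajectory comparison of \cite{HLP20} in place of their combinatorial law-of-large-numbers estimate. Abbreviate the particle positions by $x_t^i:=x_t^{i,N,\sigma_N}$, and set $\bar f_t:=f_t^{\otimes N}$, $h_t:=f_t^N/\bar f_t$, $b_i^N(x_1,\dots,x_N):=\frac1{N-1}\sum_{j\neq i}k^N(x_i-x_j)$, $\nabla_{\mathbf v}:=(\nabla_{v_1},\dots,\nabla_{v_N})$. Since $f_t^N$ (in Liouville form) and $\bar f_t$ solve kinetic transport--diffusion equations sharing the \emph{same} free transport $\sum_i v_i\cdot\nabla_{x_i}$ and the \emph{same} diffusion $\sigma\sum_i\Delta_{v_i}$, the standard relative entropy computation -- justified by first mollifying $k^N$, which is globally Lipschitz, and passing to the limit -- gives
\[
\frac{\ud}{\ud t}\mathcal{H}_N(f_t^N\,|\,\bar f_t)=-\frac{\sigma}{N}\int f_t^N\,|\nabla_{\mathbf v}\log h_t|^2+\frac1N\int f_t^N\sum_{i=1}^N\big(b_i^N-(k\star\rho_t)(x_i)\big)\cdot\nabla_{v_i}\log h_t,
\]
the free-transport terms cancelling and the diffusion terms combining into the relative Fisher information. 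Estimating the last term by Young's inequality $|a\cdot b|\le\frac\sigma2|b|^2+\frac1{2\sigma}|a|^2$, absorbing $\frac\sigma{2N}\int f_t^N|\nabla_{\mathbf v}\log h_t|^2$ into the Fisher term, and using exchangeability yields
\[
\frac{\ud}{\ud t}\mathcal{H}_N(f_t^N\,|\,\bar f_t)\le\frac1{2\sigma}\,\mathbb{E}\Big[\big|b_1^N(X_t)-(k\star\rho_t)(x_t^1)\big|^2\Big].
\]
Because $f_0^N=f_0^{\otimes N}$, we have $\mathcal{H}_N(f_0^N\,|\,f_0^{\otimes N})=0$, so it remains only to bound the right-hand side uniformly on $[0,T]$.

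The crux is to establish $\mathbb{E}\big[|b_1^N(X_t)-(k\star\rho_t)(x_t^1)|^2\big]\lesssim N^{-2\delta}\exp(C\sqrt{\log N})$ on $[0,T]$, and this is where \cite{HLP20} enters. Let $(\bar X_t,\bar V_t)$ be the mean-field trajectory -- $N$ independent copies of the one-particle VPFP flow driven by the \emph{same} Brownian motions $\{B^i\}$ and the \emph{same} initial data as \eqref{micro_reg} -- so that the law of $(\bar X_t,\bar V_t)$ equals $\bar f_t$. Write
\[
b_1^N(X_t)-(k\star\rho_t)(x_t^1)=\underbrace{b_1^N(X_t)-b_1^N(\bar X_t)}_{\mathrm{(I)}}+\underbrace{b_1^N(\bar X_t)-(k^N\star\rho_t)(\bar x_t^1)}_{\mathrm{(II)}}+\underbrace{\big((k^N-k)\star\rho_t\big)(\bar x_t^1)}_{\mathrm{(III)}}+\underbrace{(k\star\rho_t)(\bar x_t^1)-(k\star\rho_t)(x_t^1)}_{\mathrm{(IV)}}.
\]
Term $\mathrm{(III)}$ is a deterministic consistency error: $k^N-k$ vanishes outside $\{|x|<N^{-\delta}\}$ and $\int_{|y|\le N^{-\delta}}\big(|k^N(y)|+|k(y)|\big)\,\ud y\lesssim N^{-\delta}$ in every dimension, so $|\mathrm{(III)}|\lesssim\|\rho_t\|_{L^\infty}N^{-\delta}$; this is the origin of the $N^{-2\delta}$ in the conclusion. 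Term $\mathrm{(II)}$ is a law of large numbers for the i.i.d.\ mean-field ensemble: conditioning on $\bar x_t^1$, the remaining $\bar x_t^j$ are i.i.d.\ with density $\rho_t\in L^\infty$, whence $\mathbb{E}|\mathrm{(II)}|^2\lesssim\frac1N\iint|k^N(x-y)|^2\rho_t(x)\rho_t(y)\,\ud x\,\ud y\lesssim N^{-2\delta}$, the hypothesis $\delta<1/d$ being exactly what makes this last step work. Terms $\mathrm{(I)}$ and $\mathrm{(IV)}$ are absorbed using the closeness of the true and mean-field trajectories: with $\mathrm{Lip}(k^N)\lesssim N^{d\delta}$ and the uniform-on-$[0,T]$ Lipschitz bound on $k\star\rho_t$ -- a regularity property of the VPFP solution furnished by the hypotheses on $f_0$, in particular the moment bound $(1+|v|^2)^{m_0/2}f_0\in L^\infty$, $m_0>d$, which propagates $\rho_t\in L^\infty$ and the needed regularity of $k\star\rho_t$ -- one has $|\mathrm{(I)}|\lesssim N^{d\delta}\big(|x_t^1-\bar x_t^1|+\tfrac1{N-1}\sum_j|x_t^j-\bar x_t^j|\big)$ and $|\mathrm{(IV)}|\lesssim|x_t^1-\bar x_t^1|$, while on the small-probability set where the trajectories are not close one controls $\mathrm{(I)}$ by the deterministic bound $|b_1^N|\lesssim N^{(d-1)\delta}$. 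Inserting the estimate of \cite{HLP20}, which (for $\delta$ in the admissible range and up to the sub-polynomial factor $\exp(C\sqrt{\log N})$) makes $\sup_{t\le T}\max_i|x_t^i-\bar x_t^i|$ smaller than $N^{-(d+1)\delta}$ with overwhelming probability, gives $\mathbb{E}|\mathrm{(I)}|^2+\mathbb{E}|\mathrm{(IV)}|^2\lesssim N^{-2\delta}\exp(C\sqrt{\log N})$.

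Collecting $\mathrm{(I)}$--$\mathrm{(IV)}$ proves the required bound on $\mathbb{E}\big[|b_1^N(X_t)-(k\star\rho_t)(x_t^1)|^2\big]$; inserting it into the differential inequality and integrating over $[0,t]$ with vanishing initial datum gives $\sup_{0\le t\le T}\mathcal{H}_N(f_t^N\,|\,f_t^{\otimes N})\le CT\exp(C\sqrt{\log N})/(\sigma N^{2\delta})$, with $C$ independent of $N$ and $\sigma$. I expect the main obstacle to be term $\mathrm{(I)}$: one must verify that the trajectory estimate of \cite{HLP20} decays fast enough to beat the $N^{2d\delta}$ growth of $\mathrm{Lip}(k^N)^2$, which is precisely what forces the cut-off range $\delta\in(0,1/d)$, and the correction $\exp(C\sqrt{\log N})$ appearing in the final bound is inherited from \cite{HLP20}. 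A secondary technical point is extracting from the VPFP well-posedness theory exactly the regularity of $\rho_t$ and $k\star\rho_t$ that terms $\mathrm{(III)}$ and $\mathrm{(IV)}$ require -- the role of the integrability and moment hypotheses on $f_0$.
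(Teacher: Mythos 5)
Your overall architecture (relative entropy evolution, Young's inequality against the Fisher information, reduction to $\mathbb{E}\bigl[|b_1^N(X_t)-k\star\rho_t(x_t^1)|^2\bigr]$, then a trajectory comparison with a mean-field ensemble) is the same as the paper's, and your terms $\mathrm{(II)}$ and $\mathrm{(III)}$ are handled essentially as the paper handles $\mathbb{M}_4$ and $\mathbb{M}_3$. But there is a genuine gap in term $\mathrm{(I)}$. You bound it by the global Lipschitz constant $\mathrm{Lip}(k^N)\lesssim N^{d\delta}$ times the trajectory error, and then assert that \cite{HLP20} gives $\sup_{t\le T}\max_i|x_t^i-\bar x_t^i|\le N^{-(d+1)\delta}$ with overwhelming probability. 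No such estimate exists in \cite{HLP20}, \cite{CCS19} or \cite{LP}: the available results (Theorem \ref{conv_prob1}, i.e.\ Lemma 3.2 of \cite{CCS19}, and Theorem 1.2 of \cite{HLP20}) give closeness only at the level $N^{-\delta}$ (respectively $N^{-\lambda_2}$ with $\lambda_2<1/3$), and the threshold is tied to the cut-off scale by the very structure of their proofs. With closeness only at level $N^{-\delta}$, your bound on $\mathrm{(I)}$ is $N^{d\delta}\cdot N^{-\delta}=N^{(d-1)\delta}$, which diverges. The missing idea is to avoid the worst-case Lipschitz constant: the paper uses Lemma \ref{kernel_diff}, i.e.\ $|k^N(x)-k^N(x+\xi)|\le \ell^N(x)|\xi|$ with $\ell^N(x)\sim |x|^{-d}$ away from the cut-off ball and $\sim N^{d\delta}$ inside it, and then exploits that the mean-field companions have bounded density, so that $\|\ell^N\star\bar\rho_t\|_{L^\infty}\lesssim \log N$; the dangerous constant $N^{d\delta}$ is thus only paid on a set of measure $\sim N^{-d\delta}$, and the analogue of your $\mathrm{(I)}$ (the paper's $\mathbb{M}_5$) comes out as $N^{-2\delta}(\log N)^2$. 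Without this averaging step (or some substitute) your argument does not close.

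Two secondary points. First, your term $\mathrm{(IV)}$ invokes a uniform Lipschitz bound on $k\star\rho_t$, which does not follow from the hypotheses of Theorem \ref{main_thm1}: they only yield $\rho_t\in L^1\cap L^\infty$, hence the log-Lipschitz estimate \eqref{log_lip}; this is what the paper uses for $\mathbb{M}_1$ and it suffices, but as stated your claim is unjustified (genuine Lipschitz control of the field is exactly the extra regularity assumed only in the VP result, Theorem \ref{cor_VP}). Second, your mean-field ensemble is driven by the limit kernel $k$, so its one-particle law is $f_t$, whereas the trajectory estimates you cite compare the interacting system with the regularized McKean--Vlasov process \eqref{MV_reg}, whose law is $\bar f_t$; you therefore need an additional comparison between the two McKean--Vlasov flows (the paper's Proposition \ref{prop_diff}, entering via the $\dot H^{-1}$/Wasserstein bound of \cite{Loe06} in $\mathbb{M}_2$), or you must restructure your terms $\mathrm{(I)}$, $\mathrm{(II)}$, $\mathrm{(IV)}$ around $\bar\rho_t$ rather than $\rho_t$. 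These two points are fixable; the treatment of $\mathrm{(I)}$ is the substantive defect.
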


\begin{remark}
Following the idea of \cite{jabinWang2016}, namely the combination of the classical Csisz\'ar-Kullback-Pinsker inequality (see \cite{Villani02}) and the sub-additivity of  the (scaled) relative entropy for instance in   \cite[Lemma 21]{CHH24}, we obtain immediately that  
\[
\|f_t^{N,k} - f_t^{\otimes k}\|_{L^1(\R^{2dk})}^2 \le  2 k \mathcal{H}_k (f_t^{N, k} \vert f_t^{\otimes k})  \leq 2 k  \mathcal{H}_N(f_t^{N} \ | \ f_t^{\otimes N}), 
\]
 where $f_t^{N,k}$ is the $k$-marginal  of $f_t^N$. We recall that since $f_t^N$ is  a symmetric probablity measure, its $k$-marginal $f_t^{N,k}$ is defined as 
 \[
 f_t^{N, k}(z_1, \cdots,  z_k) = \int_{(\mathbb{R}^{2d})^{N-k} }  f_t^N (z_1, \cdots, z_k, z_{k+1}, \cdots, z_N) \ud z_{k+1} \cdots \ud z_{N}.
 \] 
 In particular, we have the $L^1$-convergence of the first marginal $f_t^{N,1}$ to the (unique) weak solution $ f_t$ to \eqref{main_eq}, i.e.
 \begin{equation} \label{L1convergence}
 \sup_{t \in [0, T]} \|f_t^{N,1} - f_t\|_{L^1(\R^{2d})} \le \frac{C}{\sqrt{\sigma} N^{\delta-}}.
 \end{equation}
This implies of course further that
 \begin{equation} \label{L1convergencerho}
	\sup_{t \in [0, T]} \Big\|\displaystyle\int_{\R^d}f_t^{N,1}dv - \rho_t\Big\|_{L^1(\R^{d})} \le \frac{C}{\sqrt{\sigma} N^{\delta-}}.
\end{equation}

\end{remark}

In the case when $d=3$, following the smoothed setting of $k_N$ given by \cite{HLP20} and applying more regularity of the solution $f_t$ of \eqref{main_eq}, one can obtain a better convergence rate in $N$.  Let $\psi = \psi(x) \in \mathcal{C}^2(\R^3)$ which satisfies
$\mbox{supp } \psi(x) \subseteq B(0,1)$ with $\intr \psi(x)\ud x=1$. By using its rescaled function $\psi_\delta^N(x) := N^{3\delta} \psi(N^\delta x)$ like in \cite{HLP20}, we choose the different type of regularized version $k^N$ in \eqref{micro_reg} given by
\bq\label{kNHLP}
k^N(x) := k * \psi_\delta^N(x).
\eq
Then we have the following result with an improved convergence rate. 

\begin{theorem}
	\label{main_thm2}Let $d=3$ and $T>0$ be given. Assume that the initial data $f_0\geq 0$ satisfies the Assumption 1.1 in \cite{HLP20}, i.e.
\begin{align*}
   & f_0 \in (W^{1,1}\cap W^{1,\infty})(\R^6), \quad \intrr f_0 \log f_0\ud x \ud v <\infty, \\ 
   &(1+|v|^2)^{\frac{m_0}{2}} f_0 \in (W^{1,1}\cap W^{1,\infty})(\R^6)\mbox{ for some } m_0>6.\\
   & \mbox{There exists a constant } Q_v>0 \mbox{ such that } f_0(x,v)=0 \mbox{ for } |v|\geq Q_v.
\end{align*}
 Let $f_t^N$ and $ f_t$ be the joint distribution to the particle system \eqref{micro_reg} with the regularized interaction \eqref{kNHLP} and unique solution to \eqref{main_eq}, respectively.  Again assume that $\sigma_N \equiv \sigma>0$. Then for any  $\lambda_2 \in (3/10,1/3)$, there exists $\lambda_1 \in (0, \lambda_2/3)$ and positive constant $C=C(T)$,  independent of $N$ and $\sigma$, such that for  $\delta \in \lt[1/3, \min\lt\{ \frac{\lambda_1 + 3\lambda_2 +1}{6}, \frac{1-\lambda_2}{2} \rt\}\rt)$, it holds
 \[
	\sup_{0 \le t \le T} \mathcal{H}_N (f_t^N \ | \ f_t^{\otimes k})\leq \frac{C(\log N)^{3/2}}{\sigma N^{2\lambda_2}}, \quad \mbox{for sufficiently large } N.
\]
\end{theorem}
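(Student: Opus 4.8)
The plan is to run the relative-entropy method of Jabin and Wang \cite{jabinWang2016,jabin2018quantitative}, but to feed into it the quantitative trajectory estimate of Huang, Liu and Pickl \cite{HLP20} in place of the large-deviation control on the force fluctuations used there. Since $f_t$ solves \eqref{main_eq}, the tensorized density $f_t^{\otimes N}$ solves the linear Fokker--Planck equation on $\mathbb{R}^{6N}$ with velocity drift $(k\star\rho_t)(x_i)$ on the $i$-th block and diffusion $\sigma\sum_i\Delta_{v_i}$, while $f_t^N$ solves the same equation with velocity drift $\frac{1}{N-1}\sum_{j\neq i}k^N(x_i-x_j)$; both drifts are divergence free in the velocity variables and the diffusion acts only in velocity. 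The standard computation of the relative entropy between two such flows then gives
\begin{align*}
\frac{\ud}{\ud t}\,\mathcal{H}_N(f_t^N\,|\,f_t^{\otimes N})={}&-\frac{\sigma}{N}\int_{\mathbb{R}^{6N}} f_t^N\sum_{i=1}^N\Big|\nabla_{v_i}\log\tfrac{f_t^N}{f_t^{\otimes N}}\Big|^2\,\ud z_1\cdots\ud z_N\\
&+\frac{1}{N}\int_{\mathbb{R}^{6N}} f_t^N\sum_{i=1}^N\Big(\tfrac{1}{N-1}\sum_{j\neq i}k^N(x_i-x_j)-(k\star\rho_t)(x_i)\Big)\cdot\nabla_{v_i}\log\tfrac{f_t^N}{f_t^{\otimes N}}\,\ud z_1\cdots\ud z_N.
\end{align*}
Estimating the second line by Young's inequality so as to absorb the Fisher-information term, and using the exchangeability of $f_t^N$, we reduce to a pointwise-in-time bound
\[
\frac{\ud}{\ud t}\,\mathcal{H}_N(f_t^N\,|\,f_t^{\otimes N})\le\frac{1}{4\sigma}\,\mathbb{E}\Big[\,\Big|\tfrac{1}{N-1}\sum_{j\neq 1}k^N(x^1_t-x^j_t)-(k\star\rho_t)(x^1_t)\Big|^2\,\Big],
\]
the expectation being taken over the solution $\Phi_t^N=(X_t,V_t)$ of \eqref{micro_reg}. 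This already accounts for the factor $1/\sigma$; moreover $\mathcal{H}_N(f_0^N\,|\,f_0^{\otimes N})=0$ since $f_0^N=f_0^{\otimes N}$, so no Gronwall argument is needed and it suffices to bound the right-hand side by $C(\log N)^{3/2}N^{-2\lambda_2}/\sigma$ uniformly on $[0,T]$.

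Write $F(x_1,\dots,x_N):=\frac{1}{N-1}\sum_{j\neq 1}k^N(x_1-x_j)-(k\star\rho_t)(x_1)$, and let $(\bar X_t,\bar V_t)=(\bar x^1_t,\dots,\bar x^N_t,\bar v^1_t,\dots,\bar v^N_t)$ be the mean-field trajectory: the $N$ particles driven by the same initial data and Brownian motions as in \eqref{micro_reg}, but each evolving independently according to $\ud\bar x^i_t=\bar v^i_t\,\ud t$, $\ud\bar v^i_t=(k\star\rho_t)(\bar x^i_t)\,\ud t+\sqrt{2\sigma}\,\ud B^i_t$; then the common law of $(\bar x^i_t,\bar v^i_t)$ is $f_t$ and the law of $(\bar X_t,\bar V_t)$ is exactly $f_t^{\otimes N}$. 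Splitting $\mathbb{E}|F(X_t)|^2\le 2\,\mathbb{E}|F(\bar X_t)|^2+2\,\mathbb{E}|F(X_t)-F(\bar X_t)|^2$, the first term is a genuine law-of-large-numbers fluctuation: conditionally on $\bar x^1_t$ the positions $\bar x^j_t$, $j\neq 1$, are i.i.d.\ with density $\rho_t$ and $\mathbb{E}[k^N(\bar x^1_t-\bar x^j_t)\mid\bar x^1_t]=(k^N\star\rho_t)(\bar x^1_t)$, whence
\[
\mathbb{E}|F(\bar X_t)|^2\lesssim\frac{1}{N}\,\mathbb{E}\,|k^N(\bar x^1_t-\bar x^2_t)|^2+\|(k^N-k)\star\rho_t\|_{L^\infty}^2\lesssim\frac{\|\rho_t\|_{L^\infty}}{N}\,\|k^N\|_{L^2(\mathbb{R}^3)}^2+\big(N^{-\delta}\log N\big)^2,
\]
the last bound using the integrability and decay of $\rho_t$. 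A direct computation with the mollified kernel \eqref{kNHLP} gives $\|k^N\|_{L^2(\mathbb{R}^3)}^2\lesssim N^{\delta}$, so $\mathbb{E}|F(\bar X_t)|^2\lesssim N^{\delta-1}+N^{-2\delta}(\log N)^2$; for $\delta$ in the stated range both terms are $\lesssim N^{-2\lambda_2}$ with room to spare (the first because $\delta<\frac{1-\lambda_2}{2}$ forces $\delta-1<-2\lambda_2$ when $\lambda_2<\frac13$, the second because $\delta\ge\frac13>\lambda_2$). Here one uses that, under the \cite{HLP20} hypotheses on $f_0$, the solution $f_t$ of \eqref{main_eq} has $\rho_t\in L^1\cap L^\infty$ and $\nabla\rho_t\in L^\infty$ uniformly on $[0,T]$.

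It remains to control $\mathbb{E}|F(X_t)-F(\bar X_t)|^2$. Writing
\[
F(X_t)-F(\bar X_t)=\frac{1}{N-1}\sum_{j\neq 1}\big(k^N(x^1_t-x^j_t)-k^N(\bar x^1_t-\bar x^j_t)\big)-\big((k\star\rho_t)(x^1_t)-(k\star\rho_t)(\bar x^1_t)\big),
\]
the last difference is $\lesssim\|\nabla(k\star\rho_t)\|_{L^\infty}|x^1_t-\bar x^1_t|\lesssim|x^1_t-\bar x^1_t|$, while the first sum is exactly the discrepancy between the empirical forces felt along the true and the mean-field trajectories. Although the regularized Coulomb kernel \eqref{kNHLP} has Lipschitz constant as large as $N^{3\delta}$, so that the naive bound $N^{3\delta}\max_j|x^j_t-\bar x^j_t|$ is useless when $\delta\ge\frac13$, the point of \cite{HLP20} is to control this force discrepancy (together with $\max_j(|x^j_t-\bar x^j_t|+|v^j_t-\bar v^j_t|)$ in a suitable probabilistic sense) by exploiting that very few particle pairs ever enter the singular region; this works precisely for $\delta$ below the threshold $\frac{\lambda_1+3\lambda_2+1}{6}$ under the restrictions $\lambda_1\in(0,\lambda_2/3)$, $\lambda_2\in(\tfrac{3}{10},\tfrac13)$. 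Quoting the relevant estimate of \cite{HLP20} and inserting it yields $\mathbb{E}|F(X_t)-F(\bar X_t)|^2\lesssim(\log N)^{3/2}N^{-2\lambda_2}$ uniformly on $[0,T]$, the log power coming from the exponential-moment and maximal inequalities there. Combining with the previous paragraph, $\mathbb{E}|F(X_t)|^2\lesssim(\log N)^{3/2}N^{-2\lambda_2}$, and integrating the differential inequality of the first paragraph from the vanishing initial entropy gives $\sup_{0\le t\le T}\mathcal{H}_N(f_t^N\,|\,f_t^{\otimes N})\le C(T)(\log N)^{3/2}N^{-2\lambda_2}/\sigma$, as claimed.

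The main obstacle is the last step: extracting from \cite{HLP20} a bound on $\mathbb{E}|F(X_t)-F(\bar X_t)|^2$ of the form $(\log N)^{c}N^{-2\lambda_2}$ and checking that the admissible range of $\delta$ and the constraints on $(\lambda_1,\lambda_2)$ there are exactly what closes the estimate; the genuine difficulty — controlling the regularized Coulomb force along the coupled true/mean-field trajectories with a cut-off as fine as $N^{-\delta}$, $\delta\ge\tfrac13$ — lives entirely in \cite{HLP20}, and the new input here is the observation that this output plugs directly into the relative-entropy identity, with the trivial initial entropy removing any need for a Gronwall step. The remaining ingredients — the rigorous justification of the entropy-dissipation identity (integrability of $f_t^N\log(f_t^N/f_t^{\otimes N})$ and of its time derivative, and vanishing of boundary terms, obtained by the usual truncation/regularization) and the variance computation for $F(\bar X_t)$ — are routine once the stated regularity of $f_t$ is in hand.
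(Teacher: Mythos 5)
Your overall frame coincides with the paper's: the entropy dissipation identity, Cauchy--Schwarz against the Fisher information, exchangeability, zero initial entropy, and a coupling of the true trajectories to a mean-field flow so that the force error can be split into a law-of-large-numbers part and a trajectory-discrepancy part. The genuine gap is in the key step where you couple directly to the mean-field flow driven by $k\star\rho_t$ (so that the auxiliary particles have law $f_t$) and then claim that the bound $\mathbb{E}|F(X_t)-F(\bar X_t)|^2\lesssim(\log N)^{3/2}N^{-2\lambda_2}$ can be obtained by ``quoting the relevant estimate of \cite{HLP20}''. The estimates of \cite{HLP20} (the convergence-in-probability bound cited here as Theorem \ref{conv_prob2}, the law of large numbers \cite[Lemma 2.6]{HLP20}, and the force-discrepancy control \cite[Proposition 3.2]{HLP20} used for $\mathbb{L}_5$) are all formulated relative to the \emph{regularized} McKean--Vlasov flow \eqref{MV_reg}, whose law $\bar f_t$ solves \eqref{VFP_reg} with kernel $k^N$ and is \emph{not} $f_t$. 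With your coupling you would still have to compare $\frac{1}{N-1}\sum_{j\neq1}k^N(\bar x^{1}_t-\bar x^{j}_t)$ along the regularized and the limiting mean-field flows, and this cannot be done by any crude argument: $\|\nabla k^N\|_{L^\infty}\sim N^{3\delta}$ while the coupling distance between the two flows is only of order $N^{-\delta}e^{C\sqrt{\log N}}$ in $L^2(\mathbb{P})$ (and the complementary event only carries $\|k^N\|_{L^\infty}\sim N^{2\delta}$), so the product is far larger than $N^{-2\lambda_2}$ for $\delta\ge 1/3$. You do not supply this bridge, and it is not in \cite{HLP20}.

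The paper avoids exactly this obstruction by keeping the regularized McKean--Vlasov flow as the reference (so that Theorem \ref{conv_prob2}, \cite[Lemma 2.6]{HLP20} and \cite[Proposition 3.2]{HLP20} apply verbatim to $\mathbb{L}_1$, $\mathbb{L}_4$, $\mathbb{L}_5$) and transferring to the limit dynamics only at the level of \emph{convolved} fields: the terms $\mathbb{L}_2$ and $\mathbb{L}_3$ compare $k\star\rho_t$, $k\star\bar\rho_t$ and $k^N\star\bar\rho_t$ in $L^\infty$/$L^2$, using Lemma \ref{lem_diff3} to get $W_2(\bar\rho_t,\rho_t)\lesssim N^{-\delta}e^{C\sqrt{\log N}}$ and Loeper's $\dot H^{-1}$--Wasserstein estimate, where no Lipschitz constant of $k^N$ ever enters. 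To repair your argument you would either have to insert this extra intermediate system and the corresponding two smoothed-field terms (at which point you recover the paper's five-term decomposition), or redo the analysis of \cite{HLP20} with the limiting flow as reference, which is not a quotation but new work. Your law-of-large-numbers step for $\mathbb{E}|F(\bar X_t)|^2$ (direct variance computation with $\|k^N\|_{L^2}^2\lesssim N^{\delta}$, plus the bias $\|(k^N-k)\star\rho_t\|_{L^\infty}\lesssim N^{-\delta}$) is fine and is in fact a slight simplification of the paper's treatment of $\mathbb{L}_4$, but it does not address the missing bridge above.
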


The third main result of this article  is to obtain the $L^1$ convergence of the marginal densities of the microscopic model \eqref{micro_reg} to the Vlasov-Poisson equation, i.e. \eqref{main_eq} with $\sigma=0$. 

\begin{theorem}\label{cor_VP}
In addition to those assumptions made in Theorem \ref{main_thm1}, we further assume that for $m>3$,
\[\begin{aligned} 
\nabla_{x,v}\log f_0 \in L^\infty(\R^6) \quad\mbox{ and }\quad\|\nabla \tilde \rho\|_{L^1(0,T;L^1\cap L^\infty)} <\infty.
\end{aligned}\]
Then there exists a positive constant $C=C(T)$,  independent of $N$ and $\sigma$,  that satisfies 
\[
\sup_{0\le t \le T} \mathcal{H}_N (f_t^N \ | \  \tilde f_t^{\otimes N}) \le C\sqrt\sigma\lt( \sqrt\sigma +1\rt)\exp\lt( C\sqrt{\log N}\rt) + CN^{-\delta}\log N, 
\]
where $\tilde f_t$ is the (unique) classical solution to \eqref{main_eq} in the case when  $\sigma=0$ with  the initial data $f_0$.
\end{theorem}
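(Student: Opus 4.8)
The plan is to run the relative‑entropy evolution of \cite{jabinWang2016} directly against the tensorized Vlasov--Poisson solution $\tilde f_t^{\otimes N}$. Differentiating $\mathcal{H}_N(f_t^N\,|\,\tilde f_t^{\otimes N})$ and carrying out the standard integrations by parts, the transport terms cancel as usual and one is left with three contributions: (i) the non‑positive Fisher‑information term $-\frac{\sigma}{N}\int_{\R^{6N}} f_t^N\,\big|\nabla_{\mathbf{v}}\log\frac{f_t^N}{\tilde f_t^{\otimes N}}\big|^2$, which we keep only in order to absorb part of the next term; (ii) the \emph{noise‑mismatch} term $-\frac{\sigma}{N}\sum_i\int f_t^N\,(\nabla_v\log\tilde f_t)(z_i)\cdot\nabla_{v_i}\log\frac{f_t^N}{\tilde f_t^{\otimes N}}$, which by Young's inequality is $\le \frac{\sigma}{2N}\int f_t^N\big|\nabla_{\mathbf{v}}\log\frac{f_t^N}{\tilde f_t^{\otimes N}}\big|^2+\frac{\sigma}{2}\,\|\nabla_v\log\tilde f_t\|_{L^\infty}^2$, the first half being absorbed into (i); and (iii) the force‑error term. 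A crucial point is that the microscopic force error $\mathcal{E}_i^N:=\frac1{N-1}\sum_{j\neq i}k^N(x_i-x_j)-(k\star\tilde\rho_t)(x_i)$ depends only on the positions; splitting $\nabla_{v_i}\log\frac{\tilde f_t^{\otimes N}}{f_t^N}=\nabla_{v_i}\log\tilde f_t(z_i)-\nabla_{v_i}\log f_t^N$, the piece paired with $\nabla_{v_i}\log f_t^N$ integrates by parts in $v_i$ to zero, so term (iii) is exactly $-\frac1N\sum_i\int f_t^N\,\mathcal{E}_i^N\cdot(\nabla_v\log\tilde f_t)(z_i)$. Using the symmetry of $f_t^N$ one therefore arrives at
\[
\frac{d}{dt}\mathcal{H}_N(f_t^N\,|\,\tilde f_t^{\otimes N})\;\le\;\|\nabla_v\log\tilde f_t\|_{L^\infty}\,\mathbb{E}_{f_t^N}\big[|\mathcal{E}_1^N|\big]+\tfrac{\sigma}{2}\,\|\nabla_v\log\tilde f_t\|_{L^\infty}^2 .
\]
Since $\tilde f_0=f_0$ we have $\mathcal{H}_N(f_0^N\,|\,\tilde f_0^{\otimes N})=0$, so it remains to control the two quantities on the right uniformly on $[0,T]$.

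For the coefficient $\|\nabla_v\log\tilde f_t\|_{L^\infty}$ I would use that $\tilde f_t$ is transported along the Vlasov--Poisson characteristics, so $\nabla_{x,v}\log\tilde f_t$ is obtained from $\nabla_{x,v}\log f_0$ through the inverse transpose of the flow Jacobian, whose norm is controlled by $\exp\big(\int_0^t(1+\|\nabla_x(k\star\tilde\rho_s)\|_{L^\infty})\,\ud s\big)$; and $\|\nabla_x(k\star\tilde\rho_s)\|_{L^\infty}=\|k\star\nabla\tilde\rho_s\|_{L^\infty}\ls\|\nabla\tilde\rho_s\|_{L^1}^{1/3}\|\nabla\tilde\rho_s\|_{L^\infty}^{2/3}$ is integrable in time precisely by the hypothesis $\|\nabla\tilde\rho\|_{L^1(0,T;L^1\cap L^\infty)}<\infty$, so that, together with $\nabla_{x,v}\log f_0\in L^\infty$, one gets $\sup_{[0,T]}\|\nabla_{x,v}\log\tilde f_t\|_{L^\infty}\le C(T)$. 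For the $L^1$ force error $\mathbb{E}_{f_t^N}[|\mathcal{E}_1^N|]$ I would couple \eqref{micro_reg} to the mean‑field characteristics $(\bar y_t^i,\bar w_t^i)$ driven by $k\star\tilde\rho_t$ with the same initial data (so that the $(\bar y_t^i,\bar w_t^i)$ are i.i.d.\ with law $\tilde f_t$) and decompose $\mathcal{E}_1^N$ into: the difference between the empirical forces $\frac1{N-1}\sum_{j\neq1}k^N(x^1-x^j)$ and $\frac1{N-1}\sum_{j\neq1}k^N(\bar y^1-\bar y^j)$ along the two trajectories; the genuine fluctuation $\frac1{N-1}\sum_{j\neq1}k^N(\bar y^1-\bar y^j)-(k^N\star\tilde\rho_t)(\bar y^1)$ of an i.i.d.\ sample, of order $N^{(\delta-1)/2}$ via the variance bound $\|k^N\|_{L^2(\R^3)}^2\ls N^{\delta}$ (hence $\ls N^{-\delta}$ since $\delta<1/3$); the cut‑off error $\|(k^N-k)\star\tilde\rho_t\|_{L^\infty}\ls\|k^N-k\|_{L^1}\|\tilde\rho_t\|_{L^\infty}\ls N^{-\delta}$; and the Lipschitz error $|(k\star\tilde\rho_t)(\bar y^1)-(k\star\tilde\rho_t)(x^1)|\le\|\nabla(k\star\tilde\rho_t)\|_{L^\infty}|x^1-\bar y^1|$. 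The first and last of these are governed by the quantitative trajectory estimates of \cite{HLP20} in $d=3$ (respectively \cite{LP}), rerun for the present coupling between the noisy particle system and the deterministic Vlasov--Poisson mean‑field flow; applied here they give $\mathbb{E}_{f_t^N}[|\mathcal{E}_1^N|]\le C\big(N^{-\delta}\log N+\sqrt\sigma(\sqrt\sigma+1)\exp(C\sqrt{\log N})\big)$ on $[0,T]$, where the $\sqrt\sigma$ and $\sigma$ factors enter because the microscopic trajectories carry the Brownian displacement $\sqrt{2\sigma}\,B_t$ absent from the Vlasov--Poisson characteristics (so that the separation of the two trajectories picks up the first and the second moment of that displacement), while the $\exp(C\sqrt{\log N})$ is the Gr\"onwall factor already present in \cite{HLP20,LP}. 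On the complementary, super‑polynomially small, event on which the trajectories have separated, $|\mathcal{E}_1^N|$ is bounded crudely by $\|k^N\|_{L^\infty}\ls N^{2\delta}$, so this event contributes a negligible amount.

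Inserting these two bounds into the differential inequality above, the right‑hand side is independent of $\mathcal{H}_N$, so no Gr\"onwall loop is needed: integrating over $[0,T]$ and using $\sigma\le\sqrt\sigma\,\exp(C\sqrt{\log N})$ for $\sigma\le1$ yields $\sup_{[0,T]}\mathcal{H}_N(f_t^N\,|\,\tilde f_t^{\otimes N})\le C(T)\big(\sqrt\sigma(\sqrt\sigma+1)\exp(C\sqrt{\log N})+N^{-\delta}\log N\big)$, which is the claim. I expect the main obstacle to be the estimate of $\mathbb{E}_{f_t^N}[|\mathcal{E}_1^N|]$: one must convert the convergence‑\emph{in‑probability} trajectory statements of \cite{HLP20,LP} into a genuine $L^1(f_t^N)$ bound with the stated rates, which is delicate because $k^N$ is only mildly regular (its Lipschitz constant grows polynomially in $N$) and because the comparison here is between the noisy particle system and the \emph{noiseless} Vlasov--Poisson mean‑field flow — a situation lying between \cite{HLP20} (noisy system, VPFP mean field) and \cite{LP} (noiseless system, VP mean field); tracking how the noise of strength $\sigma$ propagates through the Gr\"onwall argument is exactly what produces the $\sqrt\sigma(\sqrt\sigma+1)$ term. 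A secondary but necessary step is the propagation of the log‑gradient and weighted‑moment bounds from $f_0$ to $\tilde f_t$ (and the companion moment bounds for $f_t^N$) that legitimizes the integration by parts and the pairings above.
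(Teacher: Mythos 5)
Your entropy skeleton is exactly the paper's: the same cancellation of the transport terms, the same absorption of the noise-mismatch term into the Fisher information leaving $\tfrac{\sigma}{2}\|\nabla_v\log\tilde f_t\|_{L^\infty}^2$, the same observation that the force error pairs only with $\nabla_v\log\tilde f_t$ (since it is independent of $v$), and the same propagation of $\|\nabla_{x,v}\log\tilde f_t\|_{L^\infty}$ along the Vlasov--Poisson characteristics (the paper's Lemma \ref{logf_bd}, with the log-type bound \eqref{pot_contr} in place of your interpolation inequality --- either works under the stated hypothesis). The gap is in the one step you yourself flag as the main obstacle: the bound $\mathbb{E}_{f_t^N}[|\mathcal{E}_1^N|]\le C\big(N^{-\delta}\log N+\sqrt\sigma(\sqrt\sigma+1)e^{C\sqrt{\log N}}\big)$. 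You obtain it by coupling the \emph{noisy} particle system directly to the \emph{noiseless} Vlasov--Poisson characteristics and asserting that the trajectory estimates of \cite{HLP20,LP} can be ``rerun'' for this hybrid coupling. No such estimate exists in those references: \cite{LP} compares noiseless with noiseless, and \cite{CCS19,HLP20} compare the noisy system with the noisy McKean--Vlasov flow driven by the \emph{same} Brownian motions, so the noise cancels in the difference and never enters their anisotropic-norm/stopping-time bootstrap. In your coupling the Brownian displacement does not cancel, and redoing the in-probability argument with a $\sigma$-dependent forcing (and with the law-of-large-numbers events now taken along the noiseless flow) is a genuinely new piece of analysis, not a citation; as written, the central quantitative input of the proof is assumed rather than proved.

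The paper circumvents exactly this by a different decomposition of the force error: it keeps the noisy regularized McKean--Vlasov flow $\Psi_t^N$ of \eqref{MV_reg} as intermediate, so the particle-versus-mean-field terms ($\mathbb{K}_1$, $\mathbb{K}_4$, $\mathbb{K}_5$) are covered verbatim by the already available noisy-versus-noisy estimates (Theorem \ref{conv_prob1} and the law of large numbers), and the entire effect of removing the noise is isolated in the single term $\mathbb{K}_2=\mathbb{E}\big[|k\star(\tilde\rho_t-\bar\rho_t)(\bar x_t^{1,N})|\big]$, which is handled at the level of \emph{one-particle} SDE characteristics only: Lemma \ref{lem_diff2} compares the Vlasov--Poisson characteristics \eqref{char_VP} with the limiting McKean--Vlasov process \eqref{MV_lim}, the Brownian term contributing $\mathbb{E}|\sqrt{2\sigma}B_t|\le\sqrt{2\sigma t}$ and a log-Lipschitz Gr\"onwall producing the factor $e^{C\sqrt{\log N}}$; combined with Proposition \ref{prop_diff} this yields the $\sqrt\sigma(\sqrt\sigma+1)$ contribution without ever touching an $N$-particle estimate in a new regime. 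If you want to complete your route, you must either prove the hybrid $N$-particle trajectory estimate you invoke, or restructure your splitting of $\mathcal{E}_1^N$ through the noisy intermediate flow as the paper does; your remaining ingredients (the $N^{(\delta-1)/2}\le N^{-\delta}$ fluctuation bound, the $\|(k^N-k)\star\tilde\rho_t\|_{L^\infty}\lesssim N^{-\delta}$ cut-off bound, the final integration in time) are correct and match the paper.
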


\begin{remark}
Actually the assumption on the boundedness of $\nabla\tilde\rho$ can be obtained in the following. Once we assume the initial data $f_0$ satisfies 
\[
(1+|v|^2)^{\frac{m}{2}} f_0,  \ (1+|v|^2)^{\frac{m}{2}}\nabla_{x,v} f_0 \in L^\infty(\R^6), \quad f_0 \in L^1 \cap W^{1,\infty}(\R^6),
\] 
we can use the method of characteristics to construct a classical solution $\tilde f_t$ to \eqref{VPeq} (locally in time for attactive case) satisfying 
\[
(1+|v|^2)^{\frac{m}{2}}\tilde f,  \ (1+|v|^2)^{\frac{m}{2}}\nabla_{x,v}\tilde f \in L^\infty(0,T;\R^6), \quad \tilde f \in L^\infty(0,T;L^1 \cap W^{1,\infty}(\R^6))
\]
and this naturally implies $\tilde \rho \in L^\infty(0,T;W^{1,\infty}(\R^6))$ since
\[
\|\nabla \tilde \rho_t\|_{L^\infty} \le \lt(\intr (1+|v|^2)^{-\frac{m}{2}}\,dv\rt)  \|(1+|v|^2)^{\frac{m}{2}}\nabla_x \tilde f_t\|_{L^\infty}.
\]

\end{remark}

The rest of the paper is organized as follows.  In Section 2,  we focus on the convergence towards Vlasov-Poisson-Fokker-Planck equation, namely we will give the proofs of theorems \ref{main_thm1} and \ref{main_thm2}, introduce the intermediate problem and cite the concrete result from \cite{HLP20}. We then proceed the relative entropy estimates in two steps: estimates between the solution of the Liouville equation and the intermediate solutions, and then the estimates between the intermediate solutions and  the real VPFP ones. Furthermore, we proceed the limit to the Vlasov-Poisson system in Section  3. Besides presenting the proof of Theorem \ref{cor_VP}, we give another proof for the $L^1$ convergence towards the Vlasov-Poisson system.

%%%%%%%%%%%%%%%%%%%%%%%%%%%%%%%%%%%%%%%%%%%%%%%%%%%%%%%%%%%%%%%%%%%%%%%%%%%%%%%%%5
%
%
%                        Section: Introduction 
%
%
%%%%%%%%%%%%%%%%%%%%%%%%%%%%%%%%%%%%%%%%%%%%%%%%%%%%%%%%%%%%%%%%%%%%%%%%%%%%%%%%%
\section{Propagation of chaos in $L^1$: Towards Vlasov-Poisson-Fokker-Planck}\label{sec:MF}
\setcounter{equation}{0}
In this section, we present the proof of Theorems \ref{main_thm1} and \ref{main_thm2}, which rely on  the relative entropy estimates for $N$-particle distribution functions of \eqref{micro_reg} and the $N$-copies of the Vlasov-Poisson-Fokker-Planck \eqref{main_eq}. We focus on  the Vlasov-Poisson-Fokker-Planck case and thus  in this section we all take $\sigma_N=\sigma >0$. 
By introducing an intermediate problem, the smoothed version of \eqref{main_eq} and its corresponding McKean-Vlasov process, and citing the concrete results from \cite{CCS19,LP} and \cite{HLP20} separately, we then obtain the relative entropy estimates in two steps: estimates between the solution of the Liouville equation and the intermediate solutions, and the estimates between the intermediate solutions and the solution of \eqref{main_eq}.

For simplicity, we drop everywhere the dependence on $\sigma$ and denote 
$$
\Phi_t^N = (X_t, V_t) := ( x_t^{1,N}, \dots,  x_t^{N,N},  v_t^{1,N}, \dots,  v_t^{N,N})
$$
to be the unique solution of \eqref{micro_reg}. 
The so-called intermediate particle system, the McKean-Vlasov process, has been introduced to do the key estimate. It is formally the McKean-Vlasov system or mean field system of \eqref{micro_reg}, i.e. for $i =1, 2, \cdots, N$, 
\begin{equation}\label{MV_reg}
\begin{cases} 
	\ud \bar x_t^{i,N} &=\bar v_t^{i, N} \ud t,\\
	\ud \bar v_t^{i,N} &= \int_{\R^d} k^N(\bar x_t^{i,N} - x)\bar \rho_t(x)\ud x \, \ud t +\sqrt{2\sigma} \ud B_t^i, 
\end{cases}
\end{equation}
where $ \bar\rho_t(x)=\int_{\R^d}\bar f_t(x,v) \ud v$ and $\bar f_t$ is the law of $(\bar x_t^{i,N},\bar v_t^{i,N})$.

It is well known that for fixed $N$ the above system with initial data \eqref{particleID} has a unique square integrable i.i.d. solution $(\bar x_t^{i,N},\bar v_t^{i,N})$, whose law satisfies the following Vlasov-Fokker-Planck equation with regularized interaction $k^N$,
\begin{equation}\label{VFP_reg}
\pa_t \bar f_t + v \cdot \nabla_x \bar f_t + k^N\star \bar\rho_t \cdot \nabla_v \bar f_t = \sigma\Delta_v\bar f_t
\end{equation}
that subjects to the initial data
\[
\bar f_0(x,v) = f_0(x,v).  
\]
We can thus construct an auxiliary flow 
\[
\Psi_t^N = (\bar X_t, \bar V_t) := (\bar x_t^{1,N}, \dots, \bar x_t^{N,N}, \bar v_t^{1,N}, \dots, \bar v_t^{N,N}), 
\]
and introduce
\[
\|\Phi_t^N - \Psi_t^N\|_\infty:= \sqrt{\log N}\|X_t - \bar X_t\|_\infty + \|V_t - \bar V_t\|_\infty, 
\]
as in \cite{LP}. 

\subsection{The proof of Theorem \ref{main_thm1}}
We consider the regularized kernel of type \eqref{kN_LP}. The following convergence in probability result has been obtained in  \cite[Lemma 3.2]{CCS19} (see also  \cite[Theorem 4.2]{LP}for the Vlasov-Poisson case). 

\begin{theorem}\label{conv_prob1}
Under the assumptions in Theorem \ref{main_thm1}, for any $\delta\in (0,\frac{1}{d})$ and $\alpha>0$, there exists $C(T,\alpha)$ such  that  for sufficiently large  $N$, 
\[
\mathbb{P}\bigg(\sup_{t \in [0,T]}\|\Phi_t^N - \Psi_t^N\|_\infty >N^{-\delta} \bigg)< C(T,\alpha)N^{-\alpha}. 
\]
\end{theorem}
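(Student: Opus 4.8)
\emph{Proof idea.} Since this is precisely \cite[Lemma 3.2]{CCS19} (and \cite[Theorem 4.2]{LP} in the noiseless case), the plan is to reproduce the following trajectory‑coupling argument. First I would drive the microscopic system \eqref{micro_reg} and the McKean--Vlasov system \eqref{MV_reg} with the \emph{same} Brownian motions $\{(B^i_t)\}$ and the \emph{same} i.i.d.\ initial data \eqref{particleID}, so that the noise cancels in the difference. Abbreviating $x^i_t:=x^{i,N}_t$, $\bar x^i_t:=\bar x_t^{i,N}$ etc.\ and setting $q^i_t:=x^i_t-\bar x^i_t$, $p^i_t:=v^i_t-\bar v^i_t$, the difference then solves the random ODE $\dot q^i_t=p^i_t$, $\dot p^i_t=\mathsf{I}^i_t+\mathsf{II}^i_t$ with
\[
\mathsf{I}^i_t:=\frac{1}{N-1}\sum_{j\neq i}\big(k^N(x^i_t-x^j_t)-k^N(\bar x^i_t-\bar x^j_t)\big),\qquad
\mathsf{II}^i_t:=\frac{1}{N-1}\sum_{j\neq i}k^N(\bar x^i_t-\bar x^j_t)-(k^N\star\bar\rho_t)(\bar x^i_t);
\]
note that $\sigma>0$ plays no role here, which is why the VP version \cite{LP} is available as well. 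I would then introduce the stopping time $\tau^N:=T\wedge\inf\{t\ge0:\ \|\Phi^N_t-\Psi^N_t\|_\infty\ge N^{-\delta}\}$, so that on $[0,\tau^N]$ one has $\sup_i|q^i_t|\le N^{-\delta}$, and reduce the claim to proving $\mathbb{P}(\tau^N<T)\le C(T,\alpha)N^{-\alpha}$.

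For the fluctuation term I would use that the solutions of \eqref{MV_reg} are i.i.d.\ with common law $\bar f_t$, so that, conditionally on $\bar x^i_t$, the quantity $\mathsf{II}^i_t$ is an average of $N-1$ centered i.i.d.\ copies of $k^N(\bar x^i_t-\cdot)$. Combining the $t$‑uniform bounds $\|k^N\|_{L^\infty}\lesssim N^{(d-1)\delta}$ and $\int_{\mathbb{R}^d}|k^N(x-y)|^2\bar\rho_t(y)\,\ud y\lesssim N^{(d-2)\delta}$ (the latter by splitting at $|x-y|=N^{-\delta}$ and using the propagated bound $\bar\rho_t\in L^1\cap L^\infty$; a harmless extra logarithm appears when $d=2$) with Bernstein's inequality, a union bound over $i$, and a time‑discretization of $[0,T]$ — legitimate because, with overwhelming probability, the velocities of \eqref{MV_reg} stay bounded on $[0,T]$ by a power of $N$ (using $\|k^N\star\bar\rho\|_{L^\infty}\lesssim1$, $f_0\in\mathcal P_2$, and $\max_i\sup_{[0,T]}|B^i_t|\lesssim\sqrt{\log N}$) — I would obtain, for every $\alpha>0$,
\[
\mathbb{P}\Big(\sup_{t\in[0,T]}\max_{i}|\mathsf{II}^i_t|>\eta_N\Big)\le C(T,\alpha)N^{-\alpha},\qquad \eta_N:=C(\alpha)\,N^{(\delta(d-2)-1)/2}(\log N)^{1/2}.
\]

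For the stability term I would exploit the precise structure of $k^N$: it is globally Lipschitz with $\|\nabla k^N\|_{L^\infty}\lesssim N^{d\delta}$, while $|k^N(a)-k^N(b)|=|k(a)-k(b)|\lesssim|a-b|\min(|a|,|b|)^{-d}$ whenever $|a|,|b|\ge N^{-\delta}$. Introduce the ``good'' event $\mathcal G$ on which, for all $t\in[0,T]$ and all $i$,
\[
\#\{j\neq i:\ |\bar x^i_t-\bar x^j_t|<5N^{-\delta}\}\lesssim N^{1-d\delta}\qquad\text{and}\qquad \frac{1}{N}\sum_{j\neq i}\min(|\bar x^i_t-\bar x^j_t|,N^{-\delta})^{-d}\lesssim\log N ;
\]
by Chernoff/Bernstein bounds for these sums of bounded i.i.d.\ terms (the relevant deviation probabilities being $\exp(-cN^{1-d\delta}\log N)$‑small, which beats every polynomial precisely because $\delta<1/d$), together with the same discretization, one gets $\mathbb{P}(\mathcal G^c)\le C(T,\alpha)N^{-\alpha}$. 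On $\mathcal G\cap[0,\tau^N]$, I would split $\mathsf{I}^i_t$ into ``close'' pairs ($|x^i_t-x^j_t|<2N^{-\delta}$ or $|\bar x^i_t-\bar x^j_t|<5N^{-\delta}$), of which there are $\lesssim N^{1-d\delta}$ and each contributes $\le\|\nabla k^N\|_{L^\infty}|q^i_t-q^j_t|\lesssim N^{d\delta}\sup_\ell|q^\ell_t|$, and ``far'' pairs, each contributing $\lesssim\sup_\ell|q^\ell_t|\min(|\bar x^i_t-\bar x^j_t|,N^{-\delta})^{-d}$; summing, $\max_i|\mathsf{I}^i_t|\lesssim(\log N)\sup_\ell|q^\ell_t|$.

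Putting these together, on $\mathcal G\cap[0,\tau^N]$ the quantity $W_t:=\|\Phi^N_t-\Psi^N_t\|_\infty=\sqrt{\log N}\sup_i|q^i_t|+\sup_i|p^i_t|$ satisfies $W_t\le\int_0^t\big(\sqrt{\log N}\sup_i|p^i_s|+C(\log N)\sup_i|q^i_s|+\eta_N\big)\,\ud s\le\int_0^t\big(C\sqrt{\log N}\,W_s+\eta_N\big)\,\ud s$, where the weight $\sqrt{\log N}$ on the position difference is chosen exactly so that the $\log N$ prefactor from $\mathsf{I}^i_t$ collapses to $\sqrt{\log N}\,W_s$. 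Since $W_0=0$, Gronwall gives $W_t\lesssim\eta_N\,e^{C\sqrt{\log N}\,T}$ on $\mathcal G$ for $t\le\tau^N$, and because $\delta<1/d$ implies $(\delta(d-2)-1)/2<-\delta$ strictly while $e^{C\sqrt{\log N}T}$ is sub‑polynomial, this is $<N^{-\delta}$ for $N$ large; since $W$ is continuous with $W_0=0$, this forces $\tau^N=T$ on $\mathcal G$, i.e.\ $\{\tau^N<T\}\subseteq\mathcal G^c$, and $\mathbb{P}(\mathcal G^c)\le C(T,\alpha)N^{-\alpha}$ completes the proof. The main obstacle is the stability term $\mathsf{I}^i_t$: one must use the piecewise form of $k^N$ (linear inside the cut‑off, Coulombic outside) together with the counting/maximal‑function control on $\mathcal G$ to obtain a bound of the form $(\log N)\sup_\ell|q^\ell_t|$ — carrying the trajectory difference as a factor and with only a $\log N$ prefactor — since a cruder estimate yields either an $N^{d\delta}$ prefactor (so Gronwall produces the useless factor $e^{cN^{d\delta}T}$) or an inhomogeneous forcing term that, after Gronwall, overshoots the threshold $N^{-\delta}$.
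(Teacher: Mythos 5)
The paper does not actually prove Theorem \ref{conv_prob1}: it is quoted verbatim from \cite[Lemma 3.2]{CCS19} (and \cite[Theorem 4.2]{LP} for $\sigma=0$), so there is no in-paper argument to compare against line by line. Your reconstruction follows the same overall strategy as those references and as the machinery the paper does import (synchronous coupling with the same Brownian motions and initial data, the anisotropic norm $\sqrt{\log N}\,\|X_t-\bar X_t\|_\infty+\|V_t-\bar V_t\|_\infty$, a stopping-time/bootstrap reduction, a law-of-large-numbers bound for the fluctuation term $\mathsf{II}$, the local Lipschitz bound for $k^N$ of Lemma \ref{kernel_diff} for the stability term $\mathsf{I}$, and Gronwall with the $\sqrt{\log N}$ weight absorbing the $\log N$ prefactor), and the exponent bookkeeping is right: $\eta_N\sim N^{((d-2)\delta-1)/2}$ beats $N^{-\delta}$ exactly because $\delta<1/d$, and $e^{C\sqrt{\log N}}$ is subpolynomial. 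Where you diverge is in implementation: \cite{CCS19,LP} run the argument through high-moment estimates — essentially the $2m$-moment law of large numbers restated here as Proposition 2.1, chosen with $m=m(\alpha)$ large and combined with Markov/Gronwall at the level of expectations — whereas you use Bernstein-type concentration, union bounds, an almost-sure ``good event'' controlling $\frac1N\sum_j \ell^N(\bar x^i_t-\bar x^j_t)\lesssim\log N$ and the number of close pairs, and a time discretization. Both routes work; the moment route avoids the discretization (no need for the polynomial a priori velocity bound you invoke, which with $f_0\in\mathcal P_2$ only gives $\max_i|\bar v^i_0|\le N^{K}$ up to probability $N^{1-2K}$ — fine, but it must be set up with $K=K(\alpha)$), while your concentration route gives a cleaner pathwise picture on the good event. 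The one hypothesis you use tacitly and should state is the uniform-in-$N$ bound $\sup_{t\le T}\|\bar\rho_t\|_{L^\infty}\le C$ for the regularized equation \eqref{VFP_reg} (propagated from the velocity-moment assumption $m_0>d$); this is also what the paper and \cite{CCS19} rely on, so it is an ingredient to cite, not a gap in the strategy.
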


In the next step, we will estimate the relative entropy  between the distribution $f_t^N$ of \eqref{micro_reg} and the tensorized law $\bar f_t^{\otimes N}$, where $\bar f_t$ solves  \eqref{VFP_reg}. We follow the framework provided in \cite{jabinWang2016} for second order systems. The $N-$particle  Liouville equation of the system \eqref{micro_reg} reads 
\begin{equation}\label{Liou}
\pa_t f_t^{N} + \sum_{i=1}^N v_i \cdot \nabla_{x_i} f_t^{N} + \sum_{i=1}^N \bigg(\frac{1}{N-1}\sum_{j\neq i}k^N(x_i-x_j)\cdot \nabla_{v_i} f_t^{N} \bigg) = \sigma \sum_{i=1}^N\Delta_{v_i} f_t^{N}.
\end{equation}
Since for any fixed $N$, the interaction force $k^N$ is at least $C_b^1$,  the above Liouville equation has a unique entropy solution in the sense for instance in \cite[Proposition 2.1]{jabinWang2016}. Furthermore, the evolution equation satisfied by  the tensorized law $\bar f_t^{\otimes N}=\bar f_t(x_1, v_1)\cdots \bar f_t(x_N, v_N)$ reads 
\begin{equation}\label{NVlasov}
\pa_t \bar f_t^{\otimes N} + \sum_{i=1}^N v_i \cdot \nabla_{x_i} \bar f_t^{\otimes N} + \sum_{i=1}^N \lt(k^N\star \bar\rho_t(x_i)\cdot \nabla_{v_i} \bar f_t^{\otimes N} \rt) = \sigma\sum_{i=1}^N\Delta_{v_i} \bar f_t^{\otimes N}.
\end{equation}
Before we present our detailed proof, let us review several known facts. 

Following the  strategy as in the proof of   \cite[Proposition 7.2]{LP}\label{law_large_num},  we recall the  special form of  {\em  Law of Large Numbers} as  \cite[Lemma 2.2]{CCS19}. 
\begin{proposition}
Let $(Y_1, \dots, Y_N)$ be i.i.d. random variables with common  law $\rho\in (L^1\cap L^\infty)(\R^d)$ and define the associated empirical measure $\rho_N := \frac1N \sum_{i=1}^N \delta_{Y_i}$. Suppose that $h:\R^d \to \R^d$ satisfies that 
\[
|h(x)| \le c_0 \min\{N^{\kappa\delta}, |x|^{-\kappa}\}\quad\mbox{for some } \kappa,\delta>0, \, \mbox{and \, } \kappa<d, 
\]
and choose that $\e = 2\kappa\delta + (1-d\delta )\mathds{1}_{1>d\delta} \in (0,2)$. Then for any $m >\frac{1}{2-\e}$, there exists $C_m>0$ such that 
\[
\mathbb{E} \bigg[ \sup_{1 \le i \le N} |h\star\rho_N (Y_i) - h\star\rho(Y_i)|^{2m} \bigg] \le C_m N^{1-(2-\e)m}.
\]

\end{proposition}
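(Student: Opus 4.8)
The plan is to run the standard quantitative law of large numbers scheme for empirical averages against a mildly singular kernel (cf.\ \cite{HJ07,jabinWang2016}). Abbreviate $D_i := h\star\rho_N(Y_i) - h\star\rho(Y_i)$. The first reduction passes from the supremum to a single index: since the $Y_i$ are exchangeable,
\[
\mathbb{E}\Big[\sup_{1\le i \le N}|D_i|^{2m}\Big]\le \sum_{i=1}^N\mathbb{E}\big[|D_i|^{2m}\big] = N\,\mathbb{E}\big[|D_1|^{2m}\big],
\]
so it suffices to prove the one-particle bound $\mathbb{E}[|D_1|^{2m}]\le C_m N^{-(2-\e)m}$. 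Writing $D_1 = \frac1N\sum_{j=1}^N\big(h(Y_1-Y_j)-h\star\rho(Y_1)\big)$, I would first peel off the diagonal term $j=1$, equal to $\frac1N\big(h(0)-h\star\rho(Y_1)\big)$; using $|h(0)|\le c_0 N^{\kappa\delta}$ and $\|h\star\rho\|_{L^\infty}\le C$ (finite because $\kappa<d$, splitting the defining integral at $|z|=1$), this term is bounded pointwise by $CN^{\kappa\delta-1}$, hence its $2m$-th power by $CN^{-(2-2\kappa\delta)m}\le CN^{-(2-\e)m}$ since $\e\ge 2\kappa\delta$.

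For the off-diagonal part, condition on $Y_1$. The variables $W_j := h(Y_1-Y_j)-h\star\rho(Y_1)$, $j=2,\dots,N$, are then i.i.d.\ with conditional mean zero, so a Rosenthal / Marcinkiewicz--Zygmund inequality yields
\[
\mathbb{E}\Big[\Big|\sum_{j=2}^N W_j\Big|^{2m}\,\Big|\,Y_1\Big]\le C_m\Big(N^m\big(\mathbb{E}[W_2^2\mid Y_1]\big)^m + N\,\mathbb{E}\big[|W_2|^{2m}\mid Y_1\big]\Big);
\]
equivalently, one expands the $2m$-th power and notes that, by mean-zero-ness and independence, only index patterns in which every index is repeated survive, so at most $m$ distinct indices occur and there are $O(N^m)$ such patterns.

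It remains to bound the conditional moments, which is the only genuinely non-routine point. Since $\mathbb{E}[W_2^2\mid Y_1]\le(|h|^2\star\rho)(Y_1)$ and $\mathbb{E}[|W_2|^{2m}\mid Y_1]\le C_m\big((|h|^{2m}\star\rho)(Y_1)+\|h\star\rho\|_{L^\infty}^{2m}\big)$, everything reduces to estimating $\big\||h|^\ell\star\rho\big\|_{L^\infty}$ for $\ell\in\{2,2m\}$. Here one uses $|h(z)|\le c_0\min\{N^{\kappa\delta},|z|^{-\kappa}\}$ and $\rho\in L^1\cap L^\infty$ and splits the convolution at the cutoff radius $|z|=N^{-\delta}$: the inner ball contributes $\lesssim N^{\ell\kappa\delta}\,\|\rho\|_{L^\infty}\,|B_{N^{-\delta}}|\sim N^{(\ell\kappa-d)\delta}$, while the outer part is $\lesssim\int_{|z|>N^{-\delta}}|z|^{-\ell\kappa}\rho(Y_1-z)\,\ud z$, which is $O(1)$ when $\ell\kappa<d$ and $\sim N^{(\ell\kappa-d)\delta}$ when $\ell\kappa>d$. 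Taking $\ell=2$ produces exactly $\mathbb{E}[W_2^2\mid Y_1]\lesssim N^{\e-1}$ --- this is precisely where the exponent $\e$ comes from, the indicator $\mathds{1}_{1>d\delta}$ recording whether the regularization scale $N^{-\delta}$ lies above or below the typical interparticle spacing $N^{-1/d}$, which decides whether the inner-ball term carries the extra factor $N^{1-d\delta}$ --- while taking $\ell=2m$ (with $2m\kappa>d$ for $m$ large) gives $\mathbb{E}[|W_2|^{2m}\mid Y_1]\lesssim_m N^{2m\kappa\delta-d\delta}$. Inserting these two bounds into the concentration inequality, dividing by $N^{2m}$, adding back the diagonal contribution and multiplying by $N$, one checks that each of the three resulting powers of $N$ is at most $N^{1-(2-\e)m}$ (the middle, variance term being exactly this power), and the hypothesis $m>\frac{1}{2-\e}$ makes the exponent strictly negative. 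The main obstacle is thus the sharp estimate of the singular convolutions $\||h|^\ell\star\rho\|_{L^\infty}$ together with the bookkeeping of the $N^{-\delta}$-versus-$N^{-1/d}$ competition encoded in $\e$; once these are in hand, the concentration step and the exponent arithmetic are entirely routine.
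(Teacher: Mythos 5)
The paper itself does not prove this proposition: it is recalled from \cite[Lemma 2.2]{CCS19} (following the strategy of \cite[Proposition 7.2]{LP}), so there is no in-paper proof to compare against. Your argument is exactly the standard route behind such estimates --- union bound over $i$, peeling off the diagonal $j=1$ term, conditional Rosenthal/Marcinkiewicz--Zygmund for the centered conditionally i.i.d.\ variables $W_j$ given $Y_1$, and $L^\infty$ bounds on $|h|^\ell\star\rho$ obtained by splitting the convolution at the cutoff radius $N^{-\delta}$ --- and your exponent bookkeeping does reproduce $N^{1-(2-\e)m}$ in the regime where the result is actually used.

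One point needs to be made explicit. Your key claim $\mathbb{E}[W_2^2\mid Y_1]\lesssim N^{\e-1}$ does not follow from your own convolution estimate for all $\kappa,\delta$ admitted by the statement: that estimate gives $\||h|^2\star\rho\|_{L^\infty}\lesssim \max\{1,\,N^{(2\kappa-d)\delta}\}$ (with a $\log N$ at $2\kappa=d$), and the $O(1)$ contribution from $|z|\gtrsim 1$ is only $\le N^{\e-1}$ when $\e\ge 1$, i.e.\ $2\kappa\ge d$ in the case $d\delta<1$ (resp.\ $2\kappa\delta\ge 1$ when $d\delta\ge1$). The same restriction is what guarantees $m>1/(2-\e)\ge 1$, which you quietly use both for Rosenthal ($2m\ge 2$; for $2m<2$ you would fall back on Jensen) and to make the terms $N^{1-2m}\|h\star\rho\|_{L^\infty}^{2m}$ and $N^{1-2m}\||h|^{2m}\star\rho\|_{L^\infty}$ at most $N^{-(2-\e)m}$. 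In the complementary regime $\e<1$ the conditional variance is genuinely of order one, so $\mathbb{E}[|D_1|^{2m}]\sim N^{-m}$ by CLT scaling and the asserted bound $N^{1-(2-\e)m}$ cannot hold for large $m$; the proposition must therefore be read with this implicit restriction, which is satisfied in its application here ($\kappa=d-1$, $\delta<1/d$, hence $\e=1+(d-2)\delta\ge 1$, strictly avoiding the logarithmic borderline for $d=3$). If you state this restriction (equivalently, assume $\e\ge1$), your proof is complete; as written, asserting the variance bound ``exactly'' for all admissible parameters is the only gap.
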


To compare the McKean-Vlasov system for the regularized Vlasov-Fokker-Planck equation with that for \eqref{main_eq}, we introduce

\begin{equation}\label{MV_lim}
\begin{cases} 
	\ud  x_t &= v_t  \ud t,\\
	\ud  v_t &= \intr k( x_t - x) \rho_t(x)\, \ud x \ud t +\sqrt{2\sigma} \ud B_t^i, 
\end{cases}
\end{equation}
where $ \rho_t(x)=\int_{\R^3} f_t(x,v) \ud v$ and $f_t$ is the (unique) solution to \eqref{main_eq}. 

\begin{proposition}\cite[Proposition 3.2]{CCS19}\label{prop_diff}
Let $(\bar x_t, \bar v_t)$ and $(x_t, v_t)$ be solutions to \eqref{MV_reg} (with the regularized kernel \eqref{kNHLP} ) and \eqref{MV_lim}, respectively, given the same independent  and identically distributed initial data. Assume further that $\|\bar\rho_t\|_{L^1(0,T; L^1\cap L^\infty)}, \|\rho_t\|_{L^1(0,T;L^1\cap L^\infty)} <\infty$.  Then it holds for large enough $N$ that
\[
\mathbb{E}\bigg[\sup_{0\le t\le T}(\sqrt{\log N} |\bar x_t - x_t| + |\bar v_t - v_t|)^2 \bigg]^{1/2}\le CN^{-\delta }\exp (C\sqrt{\log N}).
\]
\end{proposition}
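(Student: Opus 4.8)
The plan is to compare the two processes \emph{pathwise}, exploiting that they are driven by the same Brownian motion and started from the same datum, and to run a Gr\"onwall argument on a suitably weighted second moment of the trajectory difference. Write $e^x_t:=\bar x_t-x_t$ and $e^v_t:=\bar v_t-v_t$. Since the noise is common it cancels in the difference, so $(e^x_t,e^v_t)$ solves the \emph{random ODE} (with no stochastic-integral term)
\[
\dot e^x_t=e^v_t,\qquad \dot e^v_t=(k^N\star\bar\rho_t)(\bar x_t)-(k\star\rho_t)(x_t),\qquad e^x_0=e^v_0=0 .
\]
I would split the force difference into three pieces,
\[
\underbrace{(k^N\star\bar\rho_t)(\bar x_t)-(k^N\star\bar\rho_t)(x_t)}_{\mathrm{(I)}}
+\underbrace{\big((k^N-k)\star\bar\rho_t\big)(x_t)}_{\mathrm{(II)}}
+\underbrace{\big(k\star(\bar\rho_t-\rho_t)\big)(x_t)}_{\mathrm{(III)}},
\]
and estimate them separately: \emph{(I)} is the displacement error, \emph{(II)} the regularization error, and \emph{(III)} the mean-field stability error, which is the heart of the matter.

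For \emph{(I)} the key observation is that, although $k^N$ itself has Lipschitz constant of size $N^{d\delta}$, the convolved field $k^N\star\bar\rho_t$ is genuinely Lipschitz with constant $\lesssim(\log N)(1+\|\bar\rho_t\|_{L^\infty})$ — this is exactly the logarithmic loss produced by cutting the (non-integrable) kernel $\nabla k$ off at scale $N^{-\delta}$ and testing it against a bounded density; hence $|\mathrm{(I)}|\lesssim(\log N)(1+\|\bar\rho_t\|_{L^\infty})\,|e^x_t|$. For \emph{(II)}, since $k^N-k$ is concentrated at scale $N^{-\delta}$ with $\|k^N-k\|_{L^1}\lesssim N^{-\delta}$ (up to a harmless logarithmic factor if $\psi$ is not radial), one gets the \emph{deterministic} bound $|\mathrm{(II)}|\lesssim N^{-\delta}(\log N)\,\|\bar\rho_t\|_{L^\infty}$. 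For \emph{(III)} I would invoke the Loeper-type $L^2$ estimate for the Coulomb kernel in $d=3$, namely $\|k\star(\bar\rho_t-\rho_t)\|_{L^2}\lesssim\max(\|\bar\rho_t\|_{L^\infty},\|\rho_t\|_{L^\infty})^{1/2}\,W_2(\bar\rho_t,\rho_t)$, together with the elementary facts that evaluating at the random point $x_t$ costs only one factor of the density,
\[
\mathbb E\big[|\mathrm{(III)}_t|^2\big]=\intr|k\star(\bar\rho_t-\rho_t)(y)|^2\rho_t(y)\,\ud y\le\|\rho_t\|_{L^\infty}\,\|k\star(\bar\rho_t-\rho_t)\|_{L^2}^2,
\]
and that $(\bar x_t,x_t)$ is a coupling of $(\bar\rho_t,\rho_t)$, so $W_2(\bar\rho_t,\rho_t)^2\le\mathbb E|e^x_t|^2$. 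Altogether $\mathbb E|\mathrm{(III)}_t|\lesssim(\|\bar\rho_t\|_{L^\infty}+\|\rho_t\|_{L^\infty})(\mathbb E|e^x_t|^2)^{1/2}$.

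This is where the weight $\sqrt{\log N}$ enters. Set $P(t):=(\log N)\,\mathbb E|e^x_t|^2+\mathbb E|e^v_t|^2$. Differentiating $P$ along the ODE, inserting the three bounds, and using $2(\log N)|e^x_t||e^v_t|\le\sqrt{\log N}\big((\log N)|e^x_t|^2+|e^v_t|^2\big)$ to absorb the $\log N$ Lipschitz constant of \emph{(I)} and the $\dot e^x=e^v$ coupling into a $\sqrt{\log N}\,P$ term, while handling \emph{(III)} via $2(\mathbb E|e^v_t|^2)^{1/2}(\mathbb E|\mathrm{(III)}_t|^2)^{1/2}\le C(\|\bar\rho_t\|_{L^\infty}+\|\rho_t\|_{L^\infty})P(t)/\sqrt{\log N}$, one reaches a differential inequality of the form
\[
\frac{\ud}{\ud t}\sqrt{P(t)}\le C\big(1+\|\bar\rho_t\|_{L^\infty}+\|\rho_t\|_{L^\infty}\big)\sqrt{\log N}\,\sqrt{P(t)}+C\,N^{-\delta}(\log N)\,\|\bar\rho_t\|_{L^\infty},\qquad \sqrt{P(0)}=0 .
\]
Working with $\sqrt{P}$ rather than $P$ keeps all coefficients integrable in time, so the hypotheses $\bar\rho,\rho\in L^1(0,T;L^1\cap L^\infty)$ suffice; Gr\"onwall then gives $\sup_{[0,T]}P(t)\le C(T)\,N^{-2\delta}\exp(C\sqrt{\log N})$. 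Finally, since $(e^x_t,e^v_t)$ carries no martingale part, the same decomposition run pathwise gives $\sup_{[0,T]}(\sqrt{\log N}|e^x_t|+|e^v_t|)\le e^{C\sqrt{\log N}}\int_0^T(|\mathrm{(II)}_s|+|\mathrm{(III)}_s|)\,\ud s$ $\mathbb P$-a.s.; squaring, taking expectations, using the deterministic bound on \emph{(II)} and plugging the bound on $P$ into $\int_0^T(\mathbb E|\mathrm{(III)}_s|^2)^{1/2}\ud s$ (Minkowski's integral inequality), one obtains the claimed $\mathbb E\big[\sup_{[0,T]}(\sqrt{\log N}|\bar x_t-x_t|+|\bar v_t-v_t|)^2\big]^{1/2}\le C(T)\,N^{-\delta}\exp(C\sqrt{\log N})$.

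The main obstacle is term \emph{(III)}: the laws $\bar\rho_t$ and $\rho_t$ are close only in the weak $W_2$ sense, yet they must be tested against the singular Coulomb kernel $k$, so one genuinely needs the PDE input — Loeper's inequality — to convert $W_2$-closeness of \emph{bounded} densities into a controlled bound on $k\star(\bar\rho_t-\rho_t)$ with \emph{linear} dependence on $W_2$; a sub-linear bound would destroy the convergence rate, and it is also what forces the $L^\infty$ bounds on the densities into the hypotheses. The secondary, more bookkeeping-type, difficulty is matching the weight to the cut-off scale: the mollified field has Lipschitz constant of order $\log N$, the weight is of order $\sqrt{\log N}$, and their product governs the Gr\"onwall exponent — this is precisely why the norm $\|\Phi_t^N-\Psi_t^N\|_\infty$ is defined with the factor $\sqrt{\log N}$ and why the final rate carries $\exp(C\sqrt{\log N})$ rather than a power of $N$.
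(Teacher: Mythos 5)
Your proposal is correct, and it reaches the stated rate, but the route through the key stability term is genuinely different from the paper's. The paper (proof of Lemma \ref{lem_diff3}, following \cite[Proposition 3.2]{CCS19}) never separates a term $k\star(\bar\rho_t-\rho_t)$: it writes the force gap as $k^N\star\bar\rho_t(\bar x_t)-k^N\star\rho_t(x_t)$ plus the regularization error $(k^N-k)\star\rho_t(x_t)$, and handles the first piece by the doubling/independent-copy trick, representing both convolutions as expectations over an independent copy $(y_t,\bar y_t)$ of $(x_t,\bar x_t)$ and then invoking only the pointwise local-Lipschitz bound $|k^N(x)-k^N(y)|\le C|x-y|(\max\{|x|,N^{-\delta}\}^{-3}+\max\{|y|,N^{-\delta}\}^{-3})$ tested against the bounded densities (whence the $\log N$), so the displacement error and the law-difference error are absorbed into a single coupled term controlled by $\mathbb{E}[(D_t^N)^2]$ via exchangeability; the Gr\"onwall is then run directly on $\mathbb{E}[\sup_{s\le t}(D_s^N)^2]$. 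You instead split off (III) explicitly and convert $W_2$-closeness into an $L^2$ bound on $k\star(\bar\rho_t-\rho_t)$ via Loeper's estimate \cite{Loe06}, together with the synchronous-coupling bound $W_2(\bar\rho_t,\rho_t)\le(\mathbb{E}|\bar x_t-x_t|^2)^{1/2}$, and you need a second, pathwise Gr\"onwall plus Minkowski to upgrade $\sup_t\mathbb{E}$ to $\mathbb{E}\sup_t$ (legitimate here precisely because the common noise cancels, as you note). What each buys: the paper's coupling argument is more elementary (no $\dot H^{-1}$/optimal transport input, no two-pass argument) and controls the running supremum directly, while your decomposition isolates the three error sources cleanly and uses exactly the Loeper mechanism that the paper defers to its $\mathbb{M}_2$ estimate in the proof of Theorem \ref{main_thm1}; the price is the additional hypothesis-level bookkeeping (both densities in $L^\infty$ at a.e.\ time, integrable coefficients, the $\sqrt{P}$ trick), all of which you handle correctly. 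Minor cosmetic points: your extra $\log N$ in the bound for (II) and the $L^1$-norm of the density hidden in the Lipschitz constant of $k^N\star\bar\rho_t$ are harmless, since everything is eventually absorbed into $\exp(C\sqrt{\log N})$.
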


\noindent
Finally, we recall the following technical lemma that is essentially  \cite[Lemma 6.3]{LP} or \cite[Lemma 3.1]{CCS19}. 
\begin{lemma}\label{kernel_diff}
For any $\xi \in \R^d$ with $|\xi| <2N^{-\delta}$, it holds that
\[
|k^N(x) - k^N(x+\xi)| \le \ell^N(x) |\xi|,
\]
where $\ell^N(x)$ is given by
\[
\ell^N(x) := \left\{\begin{array}{ll}
		\frac{C(d)}{|x|^d}, & \mbox{if}\,  |x|\ge N^{-\delta}, \\
		C(d)N^{d\delta}, &\mbox{if}\,    |x|< N^{-\delta}. 
	\end{array}\right.
\]
\end{lemma}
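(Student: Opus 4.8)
The plan is to reduce the estimate to a pointwise bound on $\nabla k^N$, valid off the sphere $\{|y|=N^{-\delta}\}$, together with a fundamental-theorem-of-calculus argument along the segment joining $x$ to $x+\xi$; the only feature requiring care is that $k^N$ is not $C^1$ across that sphere. First I would record the gradient bound. By \eqref{kN_LP}, $k^N$ is continuous on $\R^d$, of class $C^1$ on each of the open sets $\{|y|<N^{-\delta}\}$ and $\{|y|>N^{-\delta}\}$, and the two branches agree on the sphere; on the inner region $k^N(y)=\pm C_d N^{d\delta}y$, so $|\nabla k^N(y)|=C_d N^{d\delta}$, whereas on the outer region $k^N(y)=\pm C_d y|y|^{-d}$ and a direct differentiation yields $|\nabla k^N(y)|\le C(d)|y|^{-d}$. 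Hence, for every $y$ with $|y|\neq N^{-\delta}$,
\[
|\nabla k^N(y)|\le C(d)\min\{N^{d\delta},\,|y|^{-d}\}\le C(d)|y|^{-d},
\]
and in particular $k^N$ is globally Lipschitz with constant at most $C(d)N^{d\delta}$.

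Next I would parametrize the segment by $\gamma(s)=x+s\xi$, $s\in[0,1]$, and use that $s\mapsto|\gamma(s)|^2=|x|^2+2s\,(x\cdot\xi)+s^2|\xi|^2$ is a polynomial of degree at most two, so it equals $N^{-2\delta}$ for at most two values of $s$. For every other $s\in[0,1]$ the point $\gamma(s)$ lies in one of the two open regions on which $k^N$ is $C^1$, so $s\mapsto k^N(\gamma(s))$ is differentiable there with derivative $(\nabla k^N)(\gamma(s))\,\xi$; since $k^N\circ\gamma$ is Lipschitz, it is absolutely continuous, and therefore
\[
k^N(x)-k^N(x+\xi)=-\int_0^1(\nabla k^N)(\gamma(s))\,\xi\,\ud s,\qquad\mbox{whence}\qquad|k^N(x)-k^N(x+\xi)|\le|\xi|\sup_{s}|\nabla k^N(\gamma(s))|,
\]
the supremum running over those $s\in[0,1]$ with $|\gamma(s)|\neq N^{-\delta}$.

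It then remains to bound $|\nabla k^N(\gamma(s))|$ by $\ell^N(x)$, which is exactly where the hypothesis $|\xi|<2N^{-\delta}$ enters, through $|\gamma(s)|\ge|x|-|\xi|>|x|-2N^{-\delta}$ and $|x|\le|\gamma(s)|+|\xi|<|\gamma(s)|+2N^{-\delta}$. If $|x|<N^{-\delta}$, the gradient bound immediately gives $|\nabla k^N(\gamma(s))|\le C(d)N^{d\delta}=\ell^N(x)$. If $|x|\ge N^{-\delta}$, I would split on the size of $|\gamma(s)|$: when $|\gamma(s)|\ge|x|/2$ one obtains $|\nabla k^N(\gamma(s))|\le C(d)|\gamma(s)|^{-d}\le 2^dC(d)|x|^{-d}$; when $|\gamma(s)|<|x|/2$ the inequality $|x|<|\gamma(s)|+2N^{-\delta}$ forces $|x|<4N^{-\delta}$, i.e.\ $N^{d\delta}<4^d|x|^{-d}$, so $|\nabla k^N(\gamma(s))|\le C(d)N^{d\delta}\le 4^dC(d)|x|^{-d}$. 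In every case $|\nabla k^N(\gamma(s))|\le 4^d\ell^N(x)$, and absorbing the harmless factor $4^d$ into the dimensional constant appearing in $\ell^N$ completes the argument. The only genuine subtlety is the absence of $C^1$ regularity across $\{|y|=N^{-\delta}\}$, which the quadratic dependence of $|\gamma(s)|^2$ on $s$ disposes of cleanly; what remains is the elementary bookkeeping of matching the two branches of $\ell^N$ to the position of $\gamma(s)$ relative to $N^{-\delta}$, and this tracking of constants is the step I would expect to be the most error-prone.
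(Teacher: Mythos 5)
Your proof is correct: the piecewise gradient bound $|\nabla k^N(y)|\le C(d)\min\{N^{d\delta},|y|^{-d}\}$, the fundamental-theorem-of-calculus argument along the segment (the sphere $\{|y|=N^{-\delta}\}$ being crossed at most twice), and the case analysis in which $|\xi|<2N^{-\delta}$ forces $|x|<4N^{-\delta}$ whenever $|\gamma(s)|<|x|/2$ all go through, with the factor $4^d$ absorbed into $C(d)$. The paper does not prove this lemma itself but cites it from \cite{LP} and \cite{CCS19}, and your argument is essentially the standard one given in those references.
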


\noindent
Now we proceed to the proof of Theorem \ref{main_thm1}.

\begin{proof}[Proof of Theorem \ref{main_thm1}]
We compute the time evolution of the relative entropy as follows. For simplicity, we omit the integral domain which is $\R^{2dN}$. We adopt the notations that $z_i = (x_i, v_i) \in \R^d \times \R^d$ and $Z_N= (z_1, \cdots z_N) \in \R^{2dN}$.  We proceed as 
\begin{align*}
&\frac{d}{dt}\mathcal{H}_N(f_t^{N} \ | \ f_t^{\otimes N}) \\
= &\frac1N\int  \pa_t f_t^{N} \log \frac{f_t^{N}}{f_t^{\otimes N}}\,dZ_N  - \frac1N \int f_t^{N}  \pa_t \log  f_t^{\otimes N}\,dZ_N\\
%=& - \frac\sigma N \iint_{\R^{3N}\times \R^{3N}} %\sum_{i=1}^N  \frac{ %|\nabla_{v_i}f_t^{N}|^2}{f_t^{N}}\,dz_N + %\frac1N\iint_{\R^{3N}\times \R^{3N}} \lt(\sum_{i=1}^N v_i %\cdot \nabla_{x_i} f_t^{N} \rt)\log \bar f_t^{\otimes %N}\,dz_N\\
%&\quad -\frac1N\iint_{\R^3\times \R^3}\frac{1}{N-1} %\sum_{j\neq i} k^N(x_i-x_j)  \cdot \nabla_{v_i} f_t^{N}  %\log \bar f_t^{\otimes N}\,dz_N\\
%&\quad + \frac{\sigma}{N}\iint_{\R^{3N}\times \R^{3N}}  %\sum_{i=1}^N \frac{\nabla_{v_i}  f_t^{N} \cdot %\nabla_{v_i} \bar f_t^{\otimes N}}{\bar f_t^{\otimes %N}}\,dz_N + \frac1N\iint_{\R^{3N}\times \R^{3N}} %\lt(\sum_{i=1}^N v_i \cdot \nabla_{x_i} \bar f_t^{\otimes %N}\rt)\frac{f_t^{N}}{\bar f_t^{\otimes N}}\,dz_N\\
%&\quad + \frac1N\iint_{\R^{3N}\times %\R^{3N}}\lt[\sum_{i=1}^N  k^N \star \bar \rho_t(x_i) %\cdot \nabla_{v_i} \bar f_t^{\otimes N} \rt] %\frac{f_t^{N}}{\bar f_t^{\otimes N}}\, dz_N\\
%&\quad -\frac\sigma N \iint_{\R^{3N}\times \R^{3N}} %\sum_{i=1}^N \frac{f_t^{N} \Delta_{v_i} \bar f_t^{\otimes %N}}{\bar f_t^{\otimes N}}\,dz_N\\
=& -\frac{\sigma}{N}\sum_{i=1}^N \int  f_t^{N} \lt|\nabla_{v_i} \log \lt(\frac{f_t^{N}}{f_t^{\otimes N}} \rt) \rt|^2 \,dZ_N\\
&\quad -\frac1 N \sum_{i=1}^N  \int f_t^{N} \,  \bigg[  \bigg(\frac{1}{N-1}\sum_{j\neq i} k^N(x_i-x_j) - k \star \rho_t(x_i)\bigg) \cdot  \nabla_{v_i}\log \lt(\frac{f_t^{N}}{ f_t^{\otimes N}} \rt) \bigg]  \,dZ_N. 
\end{align*}
Applying Cauchy-Schwarz's inequality to the last term, one obtains that 
\[
\begin{split}
 \frac{d}{dt}\mathcal{H}_N(f_t^{N} \vert \ f_t^{\otimes N}) &\le -\frac{\sigma }{2N}  \sum_{i=1}^N \int f_t^{N}   \lt|\nabla_{v_i} \log \lt(\frac{f_t^{N}}{ f_t^{\otimes N}} \rt) \rt|^2 \,dZ_N\\
&\quad + \frac{1}{2N\sigma} \sum_{i=1}^N  \int f_t^{N} \bigg| \frac{1}{N-1}\sum_{j\neq i} k^N(x_i-x_j) - k \star  \rho_t(x_i)\bigg|^2  \,dZ_N.
\end{split}
\]
Using the exchangeability of the particle system \eqref{micro_reg}, one can rewrite the inequality above as 
\[\begin{aligned}
&\frac{d}{dt}\mathcal{H}_N (f_t^{N} \ | \ f_t^{\otimes N}) + \frac{\sigma }{2N} \sum_{i=1}^N \int f_t^{N}\,  \lt|\nabla_{v_i} \log \lt(\frac{f_t^{N}}{ f_t^{\otimes N}} \rt) \rt|^2 \,dZ_N\\
&\le \frac{1}{2\sigma } \mathbb{E}\bigg[ \bigg| k \star  \rho_t(x_t^{1,N}) - \frac{1}{N-1} \sum_{j=2}^{N}  k^N (x_t^{1,N} - x_t^{j,N}) \bigg|^2 \bigg]=: \frac{1}{2\sigma} \mathbb{M},
\end{aligned}\]
where the expectation is taken over the law $f_t^N$. 
We then split the estimate for $\mathbb{M}$ into several parts as follows, 

\[\begin{aligned}
\mathbb{M} &\le   5 \mathbb{E}\Big[ \lt| k \star \rho_t(x_t^{1,N}) -k \star \rho_t ( \bar x_t^{1,N}) \rt|^2 \Big]\\
&\quad +5\mathbb{E}\Big[ \lt| k \star \rho_t(\bar x_t^{1,N}) -k \star \bar\rho_t ( \bar x_t^{1,N}) \rt|^2 \Big]\\
&\quad + 5\mathbb{E}\Big[ \lt| (k-k^N) \star \bar \rho_t(\bar x_t^{1,N}) \rt|^2 \Big]\\
&\quad +  5 \mathbb{E}\bigg[\,  \bigg| k^N \star \bar \rho_t(\bar x_t^{1,N}) - \frac{1}{N-1} \sum_{j\neq 1} k^N (\bar x_t^{1,N} - \bar x_t^{j,N}) \bigg|^2 \, \bigg]\\
&\quad +  5 \mathbb{E}\bigg[ \bigg|\frac{1}{N-1} \sum_{j\neq 1} k^N (\bar x_t^{1,N} - \bar x_t^{j,N}) -\frac{1}{N-1} \sum_{j\neq 1} k^N ( x_t^{1,N} - x_t^{j,N})  \bigg|^2 \bigg]=: \sum_{i=1}^5 \mathbb{M}_i.
\end{aligned}\]
To estimate $\mathbb{M}_1$, we first recall the following estimate for instance in \cite{CJpre}. Given a Newtonian or Coulombian kernel $k(x) = \pm C_d \frac{x}{|x|^d}$, there exists a universal constant $C$, such that for any $x_1, x_2 \in \R^d$, 
\bq\label{log_lip}
\int_{\R^d} |k (x_1 - y) - k (x_2 - y)| |h(y)|\,dy \le C\|h\|_{(L^1\cap L^\infty)(\R^d)}|x_1-x_2|(1-\log^-|x_1-x_2|),
\eq
 where  $\log^- r := \min\{0, \log r\}$.
 
We define  the event $\mathcal{A}_T$ as
\[
\mathcal{A}_T := \{\sup_{0\le t \le T} \|\Phi_t^N - \Psi_t^N\|_\infty \le N^{-\delta} \}. 
\]
Then   Theorem \ref{conv_prob1} implies that  for any $\alpha >0$, 
\[\begin{aligned}
\mathbb{M}_1 &\le  5 \mathbb{E}\lt[ \lt| k \star \rho_t(x_t^{1,N}) -k \star  \rho_t (\bar x_t^{1,N})  \rt|^2 \bigg| \mathcal{A}_T^c \rt]\\
&\quad + 5  \mathbb{E}\lt[  \lt| k \star  \rho_t(x_t^{1,N}) -k \star \rho_t (\bar x_t^{1,N})  \rt|^2 \bigg| \mathcal{A}_T \rt]\\
&\le 20 \|k \star \rho_t\|_{L^\infty(\R^d)}^2 \mathbb{P}(\mathcal{A}_T^c)+ C\|{\rho_t}\|^2_{(L^1\cap L^\infty)(\R^d)}\frac{N^{-2\delta}}{\log N}\Big(1-\log\frac{N^{-\delta}}{\sqrt{\log N}}\Big)^2\\
&\le C  N^{-\alpha}+ C N^{-2\delta}\log N\le CN^{-2\delta}\log N,
\end{aligned}\]
where we have used the fact that $\alpha>0$ is arbitrary and that $r(1-\log r)\leq r_0(1-\log r_0)$ for $0<r=|x_1-x_2|<r_0=\frac{N^{-\delta}}{\sqrt{\log N}}<1$.\\

\noindent For $\mathbb{M}_2$, one observes that
\[
\mathbb{M}_2\le C\|\bar \rho_t\|_{L^\infty} \|k\star (\bar\rho_t - \rho_t)\|_{L^2}^2.
\]
Since $k = \pm \nabla W$, where $W$ is the fundamental solution to the Poisson equation, i.e. $-\Delta W = \delta_0$, one has
\[
\|k\star (\bar\rho_t - \rho_t)\|_{L^2}^2 \le \|\bar\rho_t - \rho_t\|_{\dot{H}^{-1}}^2 \le C (\|\bar\rho_t\|_{ L^\infty} + \|\rho_t\|_{ L^\infty}) W_2^2(\bar\rho_t, \rho_t),
\]
where we used \cite[Theorem 2.9]{Loe06}. Since Proposition \ref{prop_diff} implies
\[
 W_2(\bar\rho_t, \rho_t) \le CN^{-\delta} \exp(C\sqrt{\log N})
\]
we get
\[
\mathbb{M}_2 \le C\|\bar \rho_t\|_{L^\infty}(\|\bar\rho_t\|_{L^\infty} + \|\rho_t\|_{L^\infty}) N^{-2\delta} \exp(C\sqrt{\log N}). 
\]

\noindent For $\mathbb{M}_3$, we note that $k^N(x) = k(x)$ if $|x| \ge N^{-\delta}$. Thus
\[\begin{aligned}
|(k^N-k)\star \bar\rho_t (x)| &= \lt|\int_{\R^d} (k^N(x-y)-k(x-y))\bar\rho_t(y)\,dy \rt|\\
&\le C\lt|\int_{|x-y|\le N^{-\delta}} \lt(  N^{d\delta}|x-y| + \frac{1}{|x-y|^{d-1}} \rt) \bar\rho_t(y)\,dy \rt|\\
&\le C\|\bar\rho_t\|_{L^\infty}\int_{|z|\le N^{-\delta}} \lt(N^{(d-1)\delta} + \frac{1}{|z|^{d-1}}\rt)\,dz\le C\|\bar\rho_t\|_{L^\infty}N^{-\delta}.
\end{aligned}\]
Hence,
\[
\mathbb{M}_3\le C\|\bar\rho_t\|^2_{L^\infty}N^{-2\delta}.
\]

\noindent For $\mathbb{M}_4$, we apply 
$h(x) = k^N(x)$ and $ \kappa=d-1$
to Proposition \ref{law_large_num} and choose $m>\frac{1}{1-\delta}$ sufficiently large so that
\[\begin{aligned}
	\mathbb{M}_4 &\le C  \mathbb{E}\bigg[ \bigg| k^N \star \bar \rho_t(\bar x_t^{1,N}) - \frac{1}{N-1} \sum_{j\neq 1} k^N (\bar x_t^{1,N} - \bar x_t^{j,N}) \bigg|^{2m}  \bigg]^{\frac 1m}\\
	&\le C_m  N^{-1+(d-2)\delta+\frac 1m}\le C N^{-2\delta}.
\end{aligned}
\]	

\noindent
Finally, for $\mathbb{M}_5$, we use Lemma \ref{kernel_diff} and Theorem \ref{conv_prob1} , with the choice $\alpha=d$ and $\ell^N(0)=0$, to yield 

\[\begin{aligned}
\mathbb{M}_5 & \le 5  \mathbb{E}\bigg[ \bigg| \frac{1}{N-1} \sum_{j\neq 1} k^N (\bar x_t^{1,N} - \bar x_t^{j,N}) -\frac{1}{N-1} \sum_{j\neq 1} k^N ( x_t^{1,N} - x_t^{j,N})  \bigg|^2 \bigg| \mathcal{A}_T^c \bigg] \\
&\quad + 5 \mathbb{E}\bigg[\bigg| \frac{1}{N-1} \sum_{j\neq 1} k^N (\bar x_t^{1,N} - \bar x_t^{j,N}) -\frac{1}{N-1} \sum_{j\neq 1} k^N ( x_t^{1,N} - x_t^{j,N})  \bigg|^2 \bigg| \mathcal{A}_T \bigg]\\
&\le C   \|k^N\|_{L^\infty(\R^d)}^2\mathbb{P}(\mathcal{A}_T^c) + C\mathbb{E}\lt[ \lt|\frac{N^{-\delta}}{\sqrt{\log N}(N-1)}\sum_{j\neq 1} \ell^N(\bar x_t^{1,N}- \bar x_t^{j,N}) \rt|^2\bigg| \mathcal{A}_T\rt]\\
&\le C(\alpha) N^{2(d-1)\delta-\alpha} + CN^{-2\delta}\mathbb{E}\bigg[ \bigg|\frac{1}{N-1}\sum_{j\neq 1} \ell^N(\bar x_t^{1,N}- \bar x_t^{j,N}) \bigg|^2\bigg]\\
&\le C N^{-2\delta} + CN^{-2\delta}\frac{1}{(N-1)^2}\sum_{i,j\neq 1}\mathbb{E}\bigg[ \ell^N(\bar x_t^{1,N}- \bar x_t^{i,N})\ell^N(\bar x_t^{1,N}- \bar x_t^{j,N}) \bigg]\\
&\le C N^{-2\delta} + CN^{-2\delta}\iiint \ell^N(x-y)\ell^N(x-z)\bar\rho_t(y)\bar\rho_t(z)\bar\rho_t(x) dydzdx\\
&\le C N^{-2\delta} + C N^{-2\delta}\|\ell^N \star\bar\rho_t\|_{L^\infty}^2\le C N^{-2\delta}(\log N)^2,
\end{aligned}\]
where we have used
\[\begin{aligned}
|\ell^N\star\bar\rho_t| &\le C\|\bar\rho_t\|_{L^\infty}\lt(\int_{\{|x|\le N^{-\delta}\}} N^{d\delta}\,dx + \int_{\{N^{-\delta}<|x|\le 1\}} \frac{1}{|x|^d}\,dx\rt) + C\|\bar\rho_t\|_{L^1}\\
&\le C(1+\log N).
\end{aligned}\]

\noindent
Therefore, we get the estimate for relative entropy
\[\begin{aligned}
\frac{d}{dt}&\mathcal{H}_N(f_t^{N} \ |  f_t^{\otimes N}) + \frac{\sigma}{2N}\iint \sum_{i=1}^N \lt|\nabla_{v_i} \log \lt(\frac{f_t^{N}}{ f_t^{\otimes N}} \rt) \rt|^2 f_t^{N}\,dZ_N\\
&\leq \frac{CN^{-2\delta}}{\sigma}(\exp\lt(C\sqrt{\log N}\rt) + (\log N)^2)\leq \frac{C}{\sigma}N^{-2\delta}\exp\lt(C\sqrt{\log N}\rt),
\end{aligned}\]
Together with the fact that the initial relative entropy $\mathcal{H}(f_0^{\otimes N}\ | \ f_0^{\otimes N})=0$ and $e^{C\sqrt{\log N}} \gg  (\log N)^2$ if $N$ is sufficiently large, we deduce our desired result.

\end{proof}

\subsection{Proof of Theorem \ref{main_thm2}} 
 Here, we provide the enhanced convergence rate of system \eqref{micro_reg} with the different choice of the regularized kernel \eqref{kNHLP} towards \eqref{main_eq}. Overall settings and arguments are almost the same as the previous section, but
%
%In \cite{HLP20}, an intermediate interacting particle system has been introduced to do the key estimate. It is formallly the McKean-Vlasov system or mean field system of \eqref{micro_reg}, 
%\bq\label{MV_reg}
%\begin{aligned} 
%	d\bar x_t^{i,N} &=\bar v_t^{i, N} dt,\\
%	d\bar v_t^{i,N} &= \intr k^N(\bar x_t^{i,N} - x)\bar \rho_t(x)\,dxdt +\sqrt{2\sigma} dB_t^i.
%\end{aligned}
%\eq
%where $ \bar\rho_t(x)=\int_{\R^3}\bar f_t(x,v)dv$ and $\bar f_t$ is the law of $(\bar x_t^{i,N},\bar v_t^{i,N})$.
%It is well known that for fixed $N$ the above system with initial data \eqref{particleID} has a unique square integrable i.i.d. solution $(\bar x_t^{i,N},\bar v_t^{i,N})$, whose law satisfies the following Vlasov-Fokker-Planck equation with regularized interaction,
%\bq\label{VFP_reg}
%\pa_t \bar f_t + v \cdot \nabla_x \bar f_t + k^N\star \bar\rho_t \cdot \nabla_v \bar f_t = \sigma_N\Delta_v\bar f_t
%\eq
%that subjects to initial data
%\[
%\bar f_0(x,v) = f_0(x,v).  
%\]
%We can thus construct an auxiliary flow 
%$$
%\Psi_t^N = (\bar X_t, \bar V_t) := (\bar x_t^{1,N}, \dots, \bar x_t^{N,N}, \bar v_t^{1,N}, \dots, \bar v_t^{N,N}).
%$$ 
%By introducing
%\[
%\|\Phi_t^N - \Psi_t^N\|_\infty:= \sqrt{\log N}\|X_t - \bar X_t\|_\infty + \|V_t - \bar V_t\|_\infty,
%\]
the difference is that we use the following convergence in probability result  given in \cite{HLP20}. 

\begin{theorem}\cite[Theorem 1.2]{HLP20}\label{conv_prob2}
For any $T>0$, assume that flows $\Phi_t^N = (X_t, V_t)$ and $\Psi_t^N = (\bar X_t, \bar V_t)$ satisfy the regularized interacting particle system \eqref{micro_reg} and  the correspoding $N-$independent copies of the McKean-Vlasov process \eqref{MV_reg} respectively,  with the regularized kernel \eqref{kNHLP} and the same initial data $\Phi_0^N = \Psi_0^N$ sampled from $(f_0)^{\otimes N}$. Moreover, let the assumptions on $f_0$ in Theorem \ref{main_thm2} hold. Then for any $\alpha>0$ and $\lambda_2\in (0,1/3)$, there exist some $\lambda_1 \in (0, \lambda_2/3)$ and $N_0(\alpha)$,  such that for $N\ge N_0$,  the following estimate holds with the cut-off index $\delta \in [1/3, \min\{ (\lambda_1 + 3\lambda_2+1)/6, (1-\lambda_2)/2\})$, 
\[
\mathbb{P}\lt(\max_{t \in [0,T]}\|\Phi_t^N - \Psi_t^N\|_\infty >N^{-\lambda_2}\rt) < N^{-\alpha}.
\]
\end{theorem}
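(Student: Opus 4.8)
The plan is to prove this bound by the probabilistic trajectory‑comparison scheme of Lazarovici--Pickl, in the form developed in \cite{HLP20} to exploit the non‑degenerate velocity noise. Since \eqref{micro_reg} and \eqref{MV_reg} are driven by the \emph{same} Brownian motions and start from $\Phi_0^N=\Psi_0^N$, the difference process carries no stochastic integral: with $e_x^i:=x_t^{i,N}-\bar x_t^{i,N}$ and $e_v^i:=v_t^{i,N}-\bar v_t^{i,N}$ one has $\dot e_x^i=e_v^i$ and $\dot e_v^i=\mathrm F^i_t$, where $\mathrm F^i_t:=\frac1{N-1}\sum_{j\neq i}k^N(x_t^{i,N}-x_t^{j,N})-k^N\star\bar\rho_t(\bar x_t^{i,N})$. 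I would introduce the stopping time $\tau:=\inf\{t\ge 0:\|\Phi_t^N-\Psi_t^N\|_\infty>N^{-\lambda_2}\}\wedge T$ and reduce the claim to $\mathbb{P}(\tau<T)<N^{-\alpha}$; on $[0,\tau)$ every position error satisfies $|e_x^j|\le N^{-\lambda_2}/\sqrt{\log N}$, which is the a priori input for the force estimate. Differentiating $J_t:=\|\Phi_t^N-\Psi_t^N\|_\infty$ on $[0,\tau)$ gives $\dot J_t\le\sqrt{\log N}\,J_t+\max_i|\mathrm F^i_t|$, the prefactor $\sqrt{\log N}$ being exactly what the weighting in the definition of $\|\cdot\|_\infty$ is designed to balance against the logarithmic loss in the force estimate.

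I would then split $\mathrm F^i_t=A^i+B^i+C^i$ with $A^i:=\frac1{N-1}\sum_{j\neq i}k^N(x^i_t-x^j_t)-k^N\star\bar\rho_t(x^i_t)$, $B^i:=k^N\star\bar\rho_t(x^i_t)-k^N\star\bar\rho_t(\bar x^i_t)$, and $C^i:=k^N\star\bar\rho_t(\bar x^i_t)-\frac1{N-1}\sum_{j\neq i}k^N(\bar x^i_t-\bar x^j_t)$. The middle term is benign: writing $k^N=k\ast\psi^N_\delta$ and splitting it into a near part (in $L^1$ with $N$‑uniform norm, since the $N^{-\delta}$‑ball contributes $O(N^{-\delta})$) and a far part (bounded), the singular spike of $\nabla k^N$ is averaged out, so $\|\nabla(k^N\star\bar\rho_t)\|_{L^\infty}=\|k^N\star\nabla\bar\rho_t\|_{L^\infty}\le C(\|\nabla\bar\rho_t\|_{L^1\cap L^\infty})$ with $C$ \emph{independent of} $N$, using the $W^{1,\infty}$‑regularity of $\bar f_t$ inherited from $f_0$ via \eqref{VFP_reg}; hence $|B^i|\le C\,|e_x^i|\le C\,J_t$. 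The term $C^i$ involves only the mean‑field particles $\bar x^j_t$, which for fixed $N$ are i.i.d.\ with law $\bar f_t$; applying the law‑of‑large‑numbers estimate of Proposition \ref{law_large_num} with $h=k^N$ and a high moment $m$, combined with a time mesh of width $N^{-p}$ (the drifts are $O(N^{2\delta})$, the Brownian increments have the usual modulus, and velocities stay essentially bounded by the compact $v$‑support of $f_0$ and moment propagation) and a union bound over the $N$ particles and $N^p$ mesh points, yields $\sup_{t\le T}\max_i|C^i|\le N^{-\beta}$ on an event of probability $\ge 1-N^{-\alpha}$, with $\beta>\lambda_2$ precisely under $\delta<(1-\lambda_2)/2$. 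The half of $A^i$ given by $\frac1{N-1}\sum_{j\neq i}k^N(x^i_t-\bar x^j_t)-k^N\star\bar\rho_t(x^i_t)$ is a law‑of‑large‑numbers term uniform in the evaluation point and is absorbed into the same good event.

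The delicate piece is $\widetilde A^i:=\frac1{N-1}\sum_{j\neq i}\bigl(k^N(x^i_t-x^j_t)-k^N(x^i_t-\bar x^j_t)\bigr)$, which records the response of the regularized force to perturbing mean‑field positions by $e_x^j$ of size $\le N^{-\lambda_2}/\sqrt{\log N}$. Since $\lambda_2<1/3\le\delta$, this perturbation \emph{exceeds} the cut‑off length $N^{-\delta}$, so Lemma \ref{kernel_diff} does not apply directly and the worst‑case bound $\|\nabla k^N\|_{L^\infty}\sim N^{d\delta}$ is useless. Here one must invoke the \emph{regularization effect of the noise}: the relative position $x^i_t-x^j_t$ has, through $B^i-B^j$ in the velocities, a diffusive component preventing any pair from lingering in $B(0,N^{-\delta})$, so that the time‑integrated ``Lipschitz budget'' $\int_0^T\frac1{N-1}\sum_{j\neq i}\ell^N(x^i_t-x^j_t)\,dt$ (with $\ell^N$ from Lemma \ref{kernel_diff}, whose $\bar\rho_t$‑average is already $\lesssim\log N$) stays $\lesssim\sqrt{\log N}$ with overwhelming probability. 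Concretely I would split the $j$‑sum into ``far'' pairs (both $x^i_t-x^j_t$ and $\bar x^i_t-\bar x^j_t$ outside $B(0,\sim N^{-\lambda_2})$, handled by the log‑Lipschitz/Calder\'on--Zygmund regularity of $k$ tested against the smooth $\bar\rho_t$, contributing $\lesssim J_t$) and a thin set of ``close'' pairs made subdominant by the noise together with a high‑probability bound on the number of particles in a small ball around any trajectory. Establishing this collisional control — and matching its requirements with the fluctuation rate above — is what forces $\delta\in[1/3,\min\{(\lambda_1+3\lambda_2+1)/6,(1-\lambda_2)/2\})$ and produces the auxiliary exponent $\lambda_1\in(0,\lambda_2/3)$: one first establishes trajectory control at the coarse scale $N^{-\lambda_1}$ to feed the local‑density estimate, then bootstraps to $N^{-\lambda_2}$. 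I expect this step to be the main obstacle.

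Assembling the pieces, on the intersection of the good events one has $\dot J_t\le\bigl(\sqrt{\log N}+C+g_t\bigr)J_t+N^{-\beta}$ on $[0,\tau)$ with $\int_0^Tg_t\,dt\lesssim\sqrt{\log N}$, so Gr\"onwall's inequality and $J_0=0$ give $\sup_{t\le T\wedge\tau}J_t\le e^{C'\sqrt{\log N}}\,N^{-\beta}<N^{-\lambda_2}$ for $N$ large, which is incompatible with $\tau<T$. Hence $\{\tau<T\}$ lies in the complement of the good events, of probability $<N^{-\alpha}$, and the theorem follows.
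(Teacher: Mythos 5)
This statement is not proved in the paper at all: it is quoted verbatim as \cite[Theorem 1.2]{HLP20}, and the present paper only uses it (together with \cite[Lemma 2.1, Lemma 2.6, Proposition 3.2]{HLP20}) as an external input for the proof of Theorem \ref{main_thm2}. So there is no in-paper argument to compare against; what you have written is an attempted reconstruction of the proof of the cited reference itself.

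As such a reconstruction, your outline has the right overall shape (same driving Brownian motions so the difference process is pathwise differentiable, stopping time at the threshold $N^{-\lambda_2}$, splitting of the force error into a fluctuation/law-of-large-numbers part, a mean-field Lipschitz part, and the empirical--empirical comparison, then Gr\"onwall with the anisotropic $\sqrt{\log N}$ weighting). But it contains a genuine gap exactly where you flag it: the term $\widetilde A^i$, i.e.\ the response of the empirical regularized force to position errors of size $N^{-\lambda_2}/\sqrt{\log N}$, which \emph{exceeds} the cut-off scale $N^{-\delta}$ since $\lambda_2<1/3\le\delta$. You describe qualitatively that ``the noise prevents pairs from lingering near collision'' and that a coarse-scale bound at $N^{-\lambda_1}$ should be bootstrapped to $N^{-\lambda_2}$, but you do not formulate, let alone prove, the quantitative statement that is actually needed --- in \cite{HLP20} this is the content of Proposition 3.2 (cited later in this paper in the proof of Theorem \ref{main_thm2}): on a good event of probability $1-N^{-4\delta-2\lambda_2}$ the empirical force difference is bounded by $\Lambda(\log N)\|X_t-\bar X_t\|_\infty+\Lambda(\log N)^2\bigl(N^{6\delta-1-\lambda_1-4\lambda_2}+N^{3\lambda_1-2\lambda_2}+N^{2\delta-1}\bigr)$, and establishing it requires the anisotropic local Lipschitz estimates on $k^N$, a local particle-number/density control that exploits the $L^\infty$ bound of the mean-field law at scale $N^{-\delta}$ (where the Fokker--Planck smoothing enters), a time-discretization with union bounds, and the two-scale argument that generates $\lambda_1$ and the constraint $\delta<(\lambda_1+3\lambda_2+1)/6$. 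Without this estimate your Gr\"onwall closure ``$\int_0^T g_t\,dt\lesssim\sqrt{\log N}$'' is an assumption, not a conclusion, and the claimed probability bound $N^{-\alpha}$ does not follow. In short: correct skeleton, but the technical heart of the cited theorem is left as an acknowledged placeholder, so the proposal does not constitute a proof.
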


\noindent
In addition, we need an analog of Proposition \ref{prop_diff} due to the different choices of our regularized kernel \eqref{kNHLP}, namely the following lemma. 

\begin{lemma}\label{lem_diff3}
Let $(\bar x_t, \bar v_t)$ and $(x_t, v_t)$ be solutions to \eqref{MV_lim} and \eqref{MV_reg} with the regularized kernel \eqref{kNHLP}, respectively, corresponding to the same independent and  identically distributed initial data. Once we have $\|\bar\rho_t\|_{L^1(0,T; L^1\cap L^\infty)}, \|\rho_t\|_{L^1(0,T;L^1\cap L^\infty)} <\infty$, we get
\[
\mathbb{E}\lt[\sup_{0\le t\le T}(\sqrt{\log N} |\bar x_t - x_t| + |\bar v_t - v_t|)^2 \rt]^{1/2}\le CN^{-\delta }\exp (C\sqrt{\log N}).
\]
\end{lemma}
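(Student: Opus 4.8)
The plan is to reproduce the trajectory‐coupling argument behind Proposition~\ref{prop_diff} almost verbatim; since Lemma~\ref{lem_diff3} is exactly its analogue for the smoothed kernel \eqref{kNHLP}, the only genuinely new point is to check that the two structural properties of the regularization used there survive for $k^N=k\star\psi_\delta^N$. First I would couple the two processes by the same Brownian motion $B^i$ and the same initial datum, so that the stochastic integrals cancel in the difference. Writing $e_x(t)=\bar x_t-x_t$, $e_v(t)=\bar v_t-v_t$ (both vanishing at $t=0$) and $e(t):=\sqrt{\log N}\,|e_x(t)|+|e_v(t)|$, one has $\dot e_x=e_v$ and
\[
\dot e_v=\big(k^N\!\star\!\bar\rho_t(\bar x_t)-k^N\!\star\!\bar\rho_t(x_t)\big)+\big((k^N-k)\star\bar\rho_t(x_t)\big)+\big(k\star(\bar\rho_t-\rho_t)(x_t)\big)=:I_t+II_t+III_t .
\]

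For the consistency error $II_t$ I would use $\mathrm{supp}\,\psi_\delta^N\subseteq B(0,N^{-\delta})$, $\|\psi_\delta^N\|_{L^1}=1$, $\|\psi_\delta^N\|_{L^\infty}=N^{3\delta}\|\psi\|_{L^\infty}$, and split the kernel $k^N-k$ into the region $|z|<2N^{-\delta}$, where $|k^N|$ and $|k|$ are estimated separately (the bound $|k^N(z)|\lesssim N^{2\delta}$ coming from $\|\psi_\delta^N\|_{L^\infty}\!\int_{|z-w|\lesssim N^{-\delta}}|k|$), and $|z|\ge 2N^{-\delta}$, where the mean-value inequality gives $|k^N(z)-k(z)|\le CN^{-\delta}|z|^{-3}$; integrating against $\bar\rho_t$ this is the analogue of the $\mathbb{M}_3$-computation in the proof of Theorem~\ref{main_thm1} and yields the deterministic bound $|II_t|\le C\big(1+\|\bar\rho_t\|_{L^1\cap L^\infty}\big)N^{-\delta}\log N$ (the extra $\log N$, absent in the \eqref{kN_LP} case, is harmless). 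For $I_t$ I would write $k^N\star\bar\rho_t=k\star(\psi_\delta^N\star\bar\rho_t)$ and note $\|\psi_\delta^N\star\bar\rho_t\|_{L^1\cap L^\infty}\le\|\bar\rho_t\|_{L^1\cap L^\infty}$ by Young's inequality, so that the log-Lipschitz estimate \eqref{log_lip} applies with an $N$-independent constant: $|I_t|\le C\|\bar\rho_t\|_{L^1\cap L^\infty}\,|e_x(t)|\,(1-\log^-|e_x(t)|)$. Finally $III_t$ I would close against the coupling itself: since $x_t$ has law $\rho_t\in L^\infty$ and $k=\pm\nabla W$ with $-\Delta W=\delta_0$,
\[
\mathbb{E}\big[|III_t|^2\big]=\int|k\star(\bar\rho_t-\rho_t)(x)|^2\rho_t(x)\,dx\le\|\rho_t\|_{L^\infty}\|\bar\rho_t-\rho_t\|_{\dot H^{-1}}^2\le C\|\rho_t\|_{L^\infty}\big(\|\bar\rho_t\|_{L^\infty}+\|\rho_t\|_{L^\infty}\big)\mathbb{E}\big[|e_x(t)|^2\big],
\]
using \cite[Theorem 2.9]{Loe06} and $W_2^2(\bar\rho_t,\rho_t)\le\mathbb{E}[|e_x(t)|^2]$.

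It then remains to close the estimate for $u(t):=\mathbb{E}\big[\sup_{0\le s\le t}e(s)^2\big]^{1/2}$. Integrating in time, taking the supremum and the $L^2(\Omega)$-norm, using $|e_v|\le e$ and $|e_x|\le e/\sqrt{\log N}$, Minkowski and Cauchy--Schwarz for $II_t,III_t$, and the concavity of $r\mapsto r(\log r)^2$ near $0$ together with Jensen for $I_t$, one arrives at a closed inequality
\[
u(t)\le CN^{-\delta}\log N+\sqrt{\log N}\int_0^t u(s)\,ds+\frac{C}{\sqrt{\log N}}\int_0^t\phi_s\,u(s)\Big(1+\log\log N+\log^+\tfrac1{u(s)}\Big)ds+\int_0^t\psi_s\,u(s)\,ds ,
\]
with $\phi_t,\psi_t\in L^1(0,T)$ built from $\|\bar\rho_t\|_{L^1\cap L^\infty}$ and $\|\rho_t\|_{L^1\cap L^\infty}$. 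An Osgood/logarithmic-Grönwall argument then gives $u(T)\le CN^{-\delta}\exp(C\sqrt{\log N})$: the linear term $\sqrt{\log N}\int u$ contributes a factor $\exp(C\sqrt{\log N}\,T)$, while the log-Lipschitz term, whose effective Grönwall time is $O((\log N)^{-1/2})$ thanks to the $1/\sqrt{\log N}$ prefactor, contributes only $\exp(C\sqrt{\log N})$ and, crucially, costs no power of $N$.

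The only delicate bookkeeping is precisely this last point: a naive Osgood estimate on the $\log^-$-singularity from $I_t$ would turn $N^{-\delta}$ into a smaller power $N^{-c\delta}$, and it is exactly the weight $\sqrt{\log N}$ that prevents this — it forces $|e_x|\approx e/\sqrt{\log N}$, so $\log^+(1/|e_x|)\lesssim\log N$, and the accompanying $(\log N)^{-1/2}$ shrinks the relevant Grönwall time. The cleanest way to make this rigorous is a short continuation/bootstrap on the time set where $u(t)\le N^{-\delta/2}$: there the log-Lipschitz term is genuinely linear, $|I_t|\lesssim\sqrt{\log N}\,e(t)$, the inequality above becomes a linear Grönwall inequality with output $CN^{-\delta}\exp(C\sqrt{\log N})$, and for $N$ large this output re-enters $\{u\le N^{-\delta/2}\}$, closing the bootstrap. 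Everything else — existence and uniqueness of the two McKean--Vlasov flows, exchangeability, the $\dot H^{-1}$/$W_2$ stability — is identical to the proof of Proposition~\ref{prop_diff}.
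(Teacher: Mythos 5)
Your proposal is correct in substance but follows a genuinely different route from the paper's proof of this lemma. The paper (following \cite[Proposition 3.2]{CCS19}) never splits off a separate ``density-change'' term: it first proves the pointwise bound $|k^N(x)-k^N(y)|\le C|x-y|\big(\max\{|x|,N^{-\delta}\}^{-3}+\max\{|y|,N^{-\delta}\}^{-3}\big)$ for $k^N=k\star\psi_\delta^N$, then estimates the whole difference $k^N\star\bar\rho_t(\bar x_t)-k^N\star\rho_t(x_t)$ at once by symmetrizing over an independent copy $(y_t,\bar y_t)$; integrating the truncated singularity against the $L^1\cap L^\infty$ densities only costs $\log N$, so after the $\sqrt{\log N}$-weighting one lands directly on a \emph{linear} Gr\"onwall inequality $\frac{d}{dt}\mathbb{E}[\mathscr{D}_t^N]\lesssim\sqrt{\log N}\,\mathbb{E}[\mathscr{D}_t^N]+N^{-2\delta}$, with no Osgood or bootstrap step. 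You instead decompose into argument change ($I_t$, via the log-Lipschitz estimate \eqref{log_lip} applied to $k\star(\psi_\delta^N\star\bar\rho_t)$), density change ($III_t$, via the Loeper $\dot H^{-1}$/$W_2$ bound closed by the coupling itself --- the same chain the paper uses for $\mathbb{M}_2$), and the consistency error $II_t$; this buys you a proof that reuses machinery already present elsewhere in the paper and avoids the independent-copy trick, at the price of a log-Gr\"onwall closing step. That closing step is the one place to tighten: the linearization ``$|I_t|\lesssim\sqrt{\log N}\,e(t)$ on $\{u\le N^{-\delta/2}\}$'' is not literally pathwise valid (on paths where $|e_x|$ is super-exponentially small the factor $1-\log^-|e_x|$ exceeds $\log N$), so it should be run at the level of expectations --- use monotonicity and concavity of $r\mapsto r(1-\log^- r)$ (Jensen) to get $\|I_t\|_{L^2(\Omega)}\le C\,g\big(u(t)/\sqrt{\log N}\big)$, and then regularize, e.g.\ replace $u$ by $u+N^{-2\delta}$, which makes the log factor $\le C\log N$ and turns the inequality into a linear Gr\"onwall one with rate $C\sqrt{\log N}$, giving the stated bound $CN^{-\delta}\exp(C\sqrt{\log N})$ without any loss in the power of $N$. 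With that standard fix your argument closes; both approaches yield the same rate, the paper's being somewhat shorter because the symmetrization makes the estimate linear from the start.
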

\begin{proof}
Although the proof can be deduced  from almost the same argument as in the proof of  Proposition \ref{prop_diff} from \cite[Proposition 3.2]{CCS19}, we provide it here for self-containedness. We split the proof into three steps as follows:\\

\noindent $\bullet$ (Step 1) First, we observe that  the inequality
\[
|k^N(x) - k^N(y)| \le C|x-y| \lt(\frac{1}{\max\{|x|, N^{-\delta}\}^3} + \frac{1}{\max\{|y|,  N^{-\delta}\}^3} \rt),
\]
holds, where $C$ is independent of $x$, $y$ and $N$. When $|x|, |y|>N^{-\delta}$, then one uses the  fact that $k^N(z) = k(z)$ when  $|z| \ge N^{-\delta}$ (see \cite[Lemma 2.1, (i)]{HLP20}) to yield

\bq\label{ineq_k}
|k^N(x) - k^N(y)| = |k(x) - k(y)| \le C|x-y| \lt(\frac{1}{|x|^3} + \frac{1}{|y|^3} \rt).
\eq
When $|x|, |y|\le N^{-\delta}$, then we use $\|\nabla k^N\|_{L^\infty} \le CN^{3\delta}$ (also see \cite[Lemma 2.1, (ii)]{HLP20}) to have
\[
|k^N(x) - k^N(y)| \le \|\nabla k^N\|_{L^\infty}|x-y| \le CN^{3\delta} |x-y|.
\]
When $|x| \le N^{-\delta} <|y|$ or $|y|\le N^{-\delta} <|x|$, we only consider the case $|x| \le N^{-\delta} <|y|$ since the other case is similar. Then we choose a point $\tilde x$, which is an intersection between the line segment connecting $x$ and $y$, and the closed ball $\bar B(0,N^{-\delta})$. Then we can use the estimates in previous cases to attain
\[\begin{aligned}
|k^N(x) - k^N(y)| &\le |k^N(x) - k^N(\tilde x)| + |k^N (\tilde x) - k^N(y)|\\
&\le \|\nabla k^N\|_{L^\infty}|x-\tilde x| + |k(\tilde x) - k(y)|\\
&\le CN^{3\delta }|x-\tilde x| + C|\tilde x - y| \lt(\frac{1}{|\tilde x|^3} + \frac{1}{|y|^3} \rt)\\
&\le C|x-y| \lt(\frac{1}{N^{-3\delta}} + \frac{1}{|y|^3} \rt),
\end{aligned}\] 
where we also used $|x-\tilde x| , |\tilde x -y| \le |x-y|$. Thus, it implies the desired result.\\

\noindent $\bullet$ (Step 2) We set $D_t^N := \sqrt{\log N} |\bar x_t -x_t| + |\bar v_t - v_t|$ and consider an independent copy$(y, \bar y)$ of $(x, \bar x)$. Then we get
\bq\label{est_diff_proc}
\mathbb{E} \lt[ |k^N(x_t-y_t) - k^N(\bar x_t - \bar y_t)| D_t^N\rt] \le C\sqrt{\log N } \lt(\|\bar\rho_t\|_{L^1\cap L^\infty}  + \|\rho_t\|_{L^1\cap L^\infty} \rt)\mathbb{E}[(D_t^N)^2].
\eq
For this, we use the estimates in Step 1 to get

\[\begin{aligned}
\mathbb{E}& \lt[ |k^N(x_t-y_t) - k^N(\bar x_t - \bar y_t)| D_t^N\rt] \\
&\le C \mathbb{E} \lt[\Big(|x_t -\bar x_t| + |y_t - \bar y_t |\Big) \lt(\frac{1}{\max\{|x_t-y_t|,N^{-\delta}\}^3 } + \frac{1}{\max\{|\bar x_t - \bar y_t|, N^{-\delta}\}^3} \rt) D_t^N \rt]\\
&\le C\mathbb{E} \lt[|x_t -\bar x_t| \lt(\frac{1}{\max\{|x_t-y_t|,N^{-\delta}\}^3 } + \frac{1}{\max\{|\bar x_t - \bar y_t|, N^{-\delta}\}^3} \rt) D_t^N \rt]\\
&\quad + C\mathbb{E} \lt[|y_t -\bar y_t|  \lt(\frac{1}{\max\{|x_t-y_t|,N^{-\delta}\}^3 } + \frac{1}{\max\{|\bar x_t - \bar y_t|, N^{-\delta}\}^3} \rt) D_t^N \rt]=: I_1+I_2.
\end{aligned}\]
For $I_1$, one has

\[\begin{aligned}
I_1 &\le C\mathbb{E}_{(x_t, v_t, \bar x_t, \bar v_t)}\lt[|x_t - \bar x_t| D_t^N \mathbb{E}_{(y_t, \bar y_t)} \lt[ \frac{1}{\max\{|x_t-y_t|,N^{-\delta}\}^3 } + \frac{1}{\max\{|\bar x_t - \bar y_t|, N^{-\delta}\}^3} \rt] \rt],
\end{aligned}\]
and we can get
\[\begin{aligned}
&\mathbb{E}_{(y_t, \bar y_t)} \lt[ \frac{1}{\max\{|x_t-y_t|,N^{-\delta}\}^3 } + \frac{1}{\max\{|\bar x_t - \bar y_t|, N^{-\delta}\}^3} \rt]\\
&\le \intr \frac{1}{\max\{|x_t-y|,N^{-\delta}\}^3 }\rho_t(y)\,dy + \intr \frac{1}{\max\{|\bar x_t - y|, N^{-\delta}\}^3}\bar\rho_t (y)\,dy\\
&\le \int_{|x_t-y|\le N^{-\delta}}N^{3\delta} \rho_t(y)\,dy  + \int_{N^{-\delta}<|x_t-y|\le 1} \frac{1}{|x_t -y|^3}\rho_t(y)\,dy + \|\rho_t\|_{L^1}\\
&\quad + \int_{|\bar x_t-y|\le N^{\-\delta}}N^{3\delta} \bar\rho_t(y)\,dy  + \int_{N^{-\delta}<|\bar x_t-y|\le 1} \frac{1}{|\bar x_t -y|^3}\bar \rho_t(y)\,dy + \|\bar\rho_t\|_{L^1}\\
&\le C(1+ \log N) (\|\bar\rho_t\|_{L^1 \cap L^\infty} + \|\rho_t\|_{L^1\cap L^\infty}).
\end{aligned}\]
Hence, we have
\[
I_1 \le C(\log N) \mathbb{E}\lt[ |x_t - \bar x_t| D_t^N\rt] \le C\sqrt{\log N}  \mathbb{E}[(D_t^N)^2].
\]
For $I_2$, we use Cauchy-Schwarz inequality, Young's inequality and the above estimates to have
\[\begin{aligned}
I_2 &\le C\sqrt{\log N}\mathbb{E}\lt[ |y_t -\bar y_t|^2  \lt(\frac{1}{\max\{|x_t-y_t|,N^{-\delta}\}^3 } + \frac{1}{\max\{|\bar x_t - \bar y_t|, N^{-\delta}\}^3} \rt) \rt]\\
&\quad + \frac{C}{\sqrt{\log N}}\mathbb{E}\lt[(D_t^N)^2   \lt(\frac{1}{\max\{|x_t-y_t|,N^{-\delta}\}^3 } + \frac{1}{\max\{|\bar x_t - \bar y_t|, N^{-\delta}\}^3} \rt) \rt]\\
&\le C\sqrt{\log N } \mathbb{E}[(D_t^N)^2],
\end{aligned}\]
and combine the estimates for $I_1$ with those for $I_2$ to yield \eqref{est_diff_proc}.\\

\noindent $\bullet$ (Step 3) Finally, we prove the desired result. We again adopt the notation $D_t^N$ and define
\[
\mathscr{D}_t^N := \sup_{0\le s \le t}(D_s^N)^2.
\]
Then one uses It\^o's lemma to get

\[\begin{aligned}
\mathscr{D}_t^N  &\le 2 \int_0^t  \lt(\sqrt{\log N} |\bar v_s - v_s| + |k^N \star \bar \rho (\bar x_s) - k \star \rho (x_s)| \rt) D_s^N \,ds\\
&\le 2\int_0^t \sqrt{\log N}(D_s^N)^2\,ds + 2\int_0^t  |k^N \star \bar \rho (\bar x_s) - k^N \star \rho (x_s)| D_s^N \,ds\\
&\quad + 2\int_0^t |(k^N - k)\star \rho(x_s)| D_s^N\,ds.
\end{aligned}\]
By taking expectation, one obtains

\[\begin{aligned}
\mathbb{E}[\mathscr{D}_t^N] &\le 2\sqrt{\log N}\int_0^t \mathbb{E} [\mathscr{D}_s^N]\,ds  + 2\int_0^t \mathbb{E} \lt[ |k^N \star \bar \rho (\bar x_t) - k^N \star \rho (x_t)| D_s^N \rt]\,ds\\
&\quad + 2\int_0^t \mathbb{E} \lt[|(k^N - k)\star \rho(x_s)| D_s^N\rt]\,ds\\
&=: 2\sqrt{\log N}\int_0^t \mathbb{E} [\mathscr{D}_s^N]\,ds + J_1 +J_2.
\end{aligned}\]
For $J_1$, we use the estimates in step 2 to have
\[
\begin{aligned}
J_1 &\le C\sqrt{\log N}\int_0^t (\|\bar \rho_s\|_{L^1\cap L^\infty} + \|\rho_s\|_{L^1\cap L^\infty}) \mathbb{E}[(D_s^N)^2]\,ds.
\end{aligned}
\]
For $J_2$, we note that $k^N(x) = k(x)$ if $|x| \ge N^{-\delta}$. Thus
\[\begin{aligned}
|(k^N-k)\star \rho_s (x)| &= \lt|\int_{\R^3} (k^N(x-y)-k(x-y))\rho_s(y)\,dy \rt|\\
&= \lt|\int_{|x-y|\le N^{-\delta}} \lt( \int_{|z|\le N^{-\delta}}(k(x-y-z)-k(x-y))\psi_{\delta}^N(z)\,dz\rt) \rho_s(y)\,dy \rt|\\
&\le \int_{|z|\le N^{-\delta}}\lt(\int_{|x-y|\le N^{-\delta}} |k(x-y-z)-k(x-y)|\rho_s(y)\,dy\rt)\psi_\delta^N(z)\,dz\\
&\le C\|\rho_s\|_{L^\infty}\int_{|z|\le N^{-\delta}} \lt( \int_{|y|\le 2N^{-\delta}}|k(y)|\,dy + \int_{|y|\le N^{-\delta}} |k(y)|\,dy \rt)\psi_\delta^N(z)\,dz\\
&\le C\|\rho_s\|_{L^\infty}N^{-\delta}.
\end{aligned}\]
Hence,
\[
J_2\le C\int_0^t \|\rho_s\|_{L^\infty} \mathbb{E}[(D_s^N)^2]\,ds + CN^{-2\delta}\int_0^t \|\rho_s\|_{L^\infty}\,ds.
\]
Thus, we gather all the results and apply Gr\"onwall's lemma to yield
\[
\mathbb{E}[\mathscr{D}_t^N] \le CN^{-2\delta}\exp(C\sqrt{\log N} t),
\]
where $C= C(T, \|\bar\rho\|_{L^1(0,T;L^1\cap L^\infty)}, \|\rho\|_{L^1(0,T;L^1\cap L^\infty)}, \delta) $ is independent of $N$.
\end{proof} 

\vskip3mm
Now, we are ready to prove Theorem \ref{main_thm2}.

\begin{proof}[Proof of Theorem \ref{main_thm2}]
Again, we compute the time evolution of the relative entropy as in Theorem \ref{main_thm1} and for simplicity, we also omit the integral domain. Then the same computation gives

\[\begin{aligned}
&\frac{d}{dt}\mathcal{H}_N (f_t^{N} \ | \ f_t^{\otimes N}) + \frac{\sigma }{2N} \sum_{i=1}^N \int f_t^{N}\,  \lt|\nabla_{v_i} \log \lt(\frac{f_t^{N}}{ f_t^{\otimes N}} \rt) \rt|^2 \,dZ_N\\
&\le \frac{1}{2\sigma } \mathbb{E}\bigg[ \bigg|\Big( k \star  \rho_t(x_t^{1,N}) - \frac{1}{N-1} \sum_{j=2}^{N}  k^N (x_t^{1,N} - x_t^{j,N})\Big) \bigg|^2 \bigg]=: \frac{1}{2\sigma } \mathbb{L},
\end{aligned}\]
where the expectation is taken over the law $f_t^N$.

We also split the estimate for $\mathbb{L}$ into several parts as follows: 

\[\begin{aligned}
\mathbb{L} &\le   5 \mathbb{E}\Big[ \lt| k \star \rho_t(x_t^{1,N}) -k \star \rho_t ( \bar x_t^{1,N}) \rt|^2 \Big]\\
&\quad +5\mathbb{E}\Big[ \lt| k \star \rho_t(\bar x_t^{1,N}) -k \star \bar\rho_t ( \bar x_t^{1,N}) \rt|^2 \Big]\\
&\quad + 5\mathbb{E}\Big[ \lt| (k-k^N) \star \bar \rho_t(\bar x_t^{1,N}) \rt|^2 \Big]\\
&\quad +  5 \mathbb{E}\bigg[\,  \bigg| k^N \star \bar \rho_t(\bar x_t^{1,N}) - \frac{1}{N-1} \sum_{j\neq 1} k^N (\bar x_t^{1,N} - \bar x_t^{j,N}) \bigg|^2 \, \bigg]\\
&\quad +  5 \mathbb{E}\bigg[ \bigg|\frac{1}{N-1} \sum_{j\neq 1} k^N (\bar x_t^{1,N} - \bar x_t^{j,N}) -\frac{1}{N-1} \sum_{j\neq 1} k^N ( x_t^{1,N} - x_t^{j,N})  \bigg|^2 \bigg]=: \sum_{i=1}^5 \mathbb{L}_i.
\end{aligned}\]

For $\mathbb{L}_1$, We define  the event $\mathcal{A}_T$ as
\[
\mathcal{A}_T := \{\max_{0\le t \le T} \|\Phi_t^N - \Psi_t^N\|_\infty \le N^{-\lambda_2} \}. 
\]
Then by the same argument for $\mathbb{M}_1$ in Theorem \ref{main_thm1},  for any $\alpha >0$, 
\[\begin{aligned}
\mathbb{L}_1 &\le  5 \mathbb{E}\lt[ \lt| k \star \rho_t(x_t^{1,N}) -k \star  \rho_t (\bar x_t^{1,N})  \rt|^2 \bigg| \mathcal{A}_T^c \rt]\\
&\quad + 5  \mathbb{E}\lt[  \lt| k \star  \rho_t(x_t^{1,N}) -k \star \rho_t (\bar x_t^{1,N})  \rt|^2 \bigg| \mathcal{A}_T \rt]\le CN^{-2\lambda_2}\log N.
\end{aligned}\]

\noindent For $\mathbb{L}_2$, we again follow the same argument for $\mathbb{M}_2$ in the proof of Theorem \ref{main_thm1} and use Lemma \ref{lem_diff3} to yield
\[
\mathbb{L}_2 \le C\|\bar \rho_t\|_{L^\infty}(\|\bar\rho_t\|_{L^\infty} + \|\rho_t\|_{L^\infty})^2 N^{-2\delta} \exp(C\sqrt{\log N})\le CN^{-2\lambda_2}. 
\]

\noindent For $\mathbb{L}_3$, we can use the arguments employed for $J_2$ in Step 3 of Lemma \ref{lem_diff3} to get
\[
\mathbb{L}_3\le C\|\bar\rho_t\|^{2}_{L^\infty}N^{-2\delta}.
\]

\noindent For $\mathbb{L}_4$, one uses the law of large numbers in \cite[Lemma 2.6]{HLP20} that we rephrase it as follows. For any  $\alpha>0$, there exists $C_{1,\alpha}>0$,  such that  for any $t\in [0,T]$, the following event  $\mathcal{B}^\alpha_t$, 
\[\begin{aligned}
	\mathcal{B}^\alpha_t &:= \bigg\{ \max_{1\le i \le N}\bigg| k^N \star \bar \rho_t(\bar x_t^{i,N}) - \frac{1}{N-1} \sum_{j\neq i} k^N (\bar x_t^{i,N} - \bar x_t^{j,N})\bigg|\ge C_{1,\alpha}N^{2\delta-1}\log N \bigg\},
\end{aligned}\]
has probability 
\[\begin{aligned}
	\mathbb{P}(\mathcal{B}^\alpha_t) \le N^{-\alpha}.
\end{aligned}\]
Therefore, $\mathbb{L}_4$ can be estimated in the following
\[\begin{aligned}
	\mathbb{L}_4 &\le  \mathbb{E}\bigg[ \bigg| k^N \star \bar \rho_t(\bar x_t^{1,N}) - \frac{1}{N-1} \sum_{j\neq 1} k^N (\bar x_t^{1,N} - \bar x_t^{j,N}) \bigg|^2 \bigg| (\mathcal{B}_t^\alpha)^c \bigg]\\
	&\quad + \mathbb{E}\bigg[ \bigg|  k^N \star \bar \rho_t(\bar x_t^{1,N}) - \frac{1}{N-1} \sum_{j\neq 1} k^N (\bar x_t^{1,N} - \bar x_t^{j,N}) \bigg|^2 \bigg| \mathcal{B}_t^\alpha\bigg]\\
	&\le   C^2_{1,\alpha}N^{2(2\delta-1)}(\log N)^2+ \|k^N\|^2_{L^\infty(\R^3)}N^{-\alpha}\\
	&\le C N ^{2(2\delta-1)}(\log N)^2\le CN^{-2\lambda_2}
\end{aligned}
\]	
for sufficiently large $N$, where we used $2\delta-1 <-\lambda_2$ and we have chosen $\alpha>2$, since $\|k_N\|_{L^\infty(\R^3)}\leq C N^{2\delta}$ (\cite[Lemma 2.1]{HLP20}), so that the second term is smaller than the first term. \\

\noindent
Finally, for $\mathbb{L}_5$, we introduce the event $\mathcal{S}_T(\Lambda)$ as 
\[\begin{aligned}
\mathcal{S}_T(\Lambda) &:= \bigg\{ \max_{1\le i \le N}\bigg|\frac{1}{N-1} \sum_{j\neq i} k^N (\bar x_t^{i,N} - \bar x_t^{j,N}) -\frac{1}{N-1} \sum_{j\neq i} k^N ( x_t^{i,N} - x_t^{j,N}) \bigg|\\
&\hspace{5mm} \le \Lambda (\log N) \|X_t - \bar X_t\|_\infty + \Lambda (\log N)^2 \lt(N^{6\delta - 1-\lambda_1-4\lambda_2} + N^{3\lambda_1-2\lambda_2} + N^{2\delta-1} \rt), \, \forall t \in [0,T] \bigg\}.
\end{aligned}\]
Then the result \cite[Proposition 3.2]{HLP20} implies, there exists $C_{\lambda_2} >0$,  such that
\[
\mathbb{P}(\mathcal{A}_T \cap \mathcal{S}_T^c(C_{\lambda_2})) \le N^{-4 \delta - 2 \lambda_2}.
\]
%{\bf Zhenfu: Is there a mistake here? If let us assume this is true, then to get  the estimate for $M_3$, we will in the end get $\lambda_2 >1$.   }

Thus, since $\|k_N\|_{L^\infty(\R^3)}\leq C N^{2\delta}$ (\cite[Lemma 2.1]{HLP20}), we have

\[\begin{aligned}
\mathbb{L}_5 & \le 5  \mathbb{E}\bigg[ \bigg| \frac{1}{N-1} \sum_{j\neq 1} k^N (\bar x_t^{1,N} - \bar x_t^{j,N}) -\frac{1}{N-1} \sum_{j\neq 1} k^N ( x_t^{1,N} - x_t^{j,N})  \bigg|^2 \bigg| \mathcal{A}_T^c \bigg] \\
&\quad + 5 \mathbb{E}\bigg[\bigg| \frac{1}{N-1} \sum_{j\neq 1} k^N (\bar x_t^{1,N} - \bar x_t^{j,N}) -\frac{1}{N-1} \sum_{j\neq 1} k^N ( x_t^{1,N} - x_t^{j,N})  \bigg|^2 \bigg| \mathcal{A}_T \cap \mathcal{S}_T^c(C_{\lambda_2})\bigg]\\
&\quad +  5 \mathbb{E}\bigg[ \bigg| \frac{1}{N-1} \sum_{j\neq 1} k^N (\bar x_t^{1,N} - \bar x_t^{j,N}) -\frac{1}{N-1} \sum_{j\neq 1} k^N ( x_t^{1,N} - x_t^{j,N})  \bigg|^2  \bigg| \mathcal{A}_T \cap \mathcal{S}_T(C_{\lambda_2})\bigg]\\
&\le C   \|k^N\|_{L^\infty(\R^3)}^2\Big( \mathbb{P}(\mathcal{A}_T^c)+\mathbb{P}(\mathcal{A}_T \cap \mathcal{S}_T^c(C_{\lambda_2}))\Big)\\
&\quad + 2(C_{\lambda_2})^2 (\log N)^{\frac{3}{2}} N^{-2\lambda_2}  +  2(C_{\lambda_2})^2  (\log N)^4 \lt(N^{6\delta - 1-\lambda_1-4\lambda_2} + N^{3\lambda_1-2\lambda_2} + N^{2\delta-1} \rt)^2\\
&\le CN^{-2\lambda_2} (\log N)^{3/2}.
\end{aligned}\]
where we have used the fact $\lambda_1 \in (0, \lambda_2/3)$ and $\delta \in [1/3, \min\{ (\lambda_1 + 3\lambda_2+1)/6, (1-\lambda_2)/2\})$.\\

%{\bf Zhenfu: We should double check here. }

Therefore, we get the estimate for relative entropy
\[\begin{aligned}
\frac{d}{dt}&\mathcal{H}(f_t^{N} \ | \ \bar{f}_t^{\otimes N}) + \frac{\sigma}{2N}\iint_{\R^{3N}\times \R^{3N}} \sum_{i=1}^N \lt|\nabla_{v_i} \log \lt(\frac{f_t^{N}}{\bar f_t^{\otimes N}} \rt) \rt|^2 f_t^{N}\,dz_N\\
&\leq \frac{C}{\sigma}(N^{-2\lambda_2}(\log N)^{3/2}+N^{2(2\delta-1)}(\log N)^2)\leq \frac{C}{\sigma}N^{-2\lambda_2}(\log N)^{3/2},
\end{aligned}\]
where we chose $\delta<\frac{1-\lambda_2}{2}$. Together with the fact that the initial relative entropy $\mathcal{H}(f_0^{\otimes N}\ | \ f_0^{\otimes N})=0$, we finish the proof.

\end{proof}

%%%%%%%%%%%%%%%%%%%%%%%%%%%%%%%%%%%%%%%%%%%%%%%%%%%%%%%%%%%%%%%%%%%
%
%
%   Section 3: Conv. to VP
%
%
%
%%%%%%%%%%%%%%%%%%%%%%%%%%%%%%%%%%%%%%%%%%%%%%%%%%%%%%%%%%%%%%%%%%%

\section{Propagation of chaos in $L^1$: Towards Vlasov-Poisson}\label{MF:VP}
\setcounter{equation}{0}
In this section, we study the convergence from $N$-particle system \eqref{micro_reg} \eqref{kN_LP} to the Vlasov-Poisson equation  in $3$ dimension. After a short preparation given in subsection 3.1, the subsection 3.2 is dedicated to the proof of Theorem \ref{cor_VP}. Since our aim is to get the propagation of chaos result in the $L^1$ sense, we will give an alternative proof of it without using the result in Theorem \ref{cor_VP}. Actually, we can obtain it by a combination of Theorem \ref{main_thm1} and the relative entropy estimate between solutions of Vlasov-Poisson-Fokker-Planck and Vlasov-Poisson. The details of this alternative proof is given in subsection 3.3.

\subsection{Technical lemmas}
In this subsection, we provide technical lemmas for the proof of Theorem \ref{cor_VP}. For this, we consider the trajectories 

\bq\label{char_VP}
\begin{aligned}
d\tilde{x}_t &= \tilde{v}_t\,dt,\\
d\tilde{v}_t & = \intr k(\tilde{x}_t - x)\tilde\rho_t(x)\,dx dt,
\end{aligned}
\eq
to the Vlasov-Poisson equation
\bq\label{VPeq}
\pa_t \tilde{f}_{t} + v\cdot \nabla \tilde{f}_{t} + k \star \tilde\rho_{t} \cdot \nabla_v \tilde{f}_{t} = 0
\eq
with the McKean-Vlasov system  \eqref{MV_lim}. We first provide the following $L^\infty$-estimate.

\begin{lemma}\label{logf_bd}
For $T>0$, let $\tilde f_t$ be a smooth solution to \eqref{VPeq} on $[0,T]$ decaying fast at infinity satisfying 
\[
\|\nabla \tilde \rho\|_{L^1(0,T;L^1\cap L^\infty)} <\infty.
\]
\noindent
Then we have
\[
\|\nabla_{x,v} \log \tilde f_t\|_{L^\infty}\le C\|\nabla_{x,v} \log \tilde f_0\|_{L^\infty},
\]
where $C= C(T, \|\nabla \tilde \rho\|_{L^1(0,T;L^1\cap L^\infty)})$ is a positive constant.
\end{lemma}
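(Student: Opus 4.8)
The plan is to propagate the $L^\infty$ bound on $\nabla_{x,v}\log\tilde f_t$ along the characteristic flow of the Vlasov--Poisson equation \eqref{VPeq}. First I would write down the equation satisfied by $g_t := \nabla_{x,v}\log\tilde f_t$. Since $\tilde f_t$ solves $\partial_t \tilde f_t + v\cdot\nabla_x \tilde f_t + (k\star\tilde\rho_t)\cdot\nabla_v \tilde f_t = 0$, the quantity $u_t := \log\tilde f_t$ solves the same transport equation (the equation is a pure transport equation along the divergence-free-in-$(x,v)$ field $b_t(x,v) = (v, k\star\tilde\rho_t(x))$, so $\log\tilde f_t$ is transported too). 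Differentiating in $x$ and $v$ then gives, along the characteristics $(\tilde x_t,\tilde v_t)$ solving \eqref{char_VP},
\[
\frac{d}{dt} \big( \nabla_{x,v} u_t \big)(\tilde x_t,\tilde v_t) = - (\nabla_{x,v} b_t)^{\!\top}\, (\nabla_{x,v} u_t)(\tilde x_t,\tilde v_t),
\]
where the Jacobian $\nabla_{x,v} b_t$ is the block matrix with blocks $0$, $\mathrm{Id}$ (from $\nabla_v v$), $\nabla_x(k\star\tilde\rho_t)$, and $0$. The only genuinely dangerous block is $\nabla_x(k\star\tilde\rho_t)$; the other entries contribute bounded (indeed constant) terms.

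Next I would bound $\|\nabla_x(k\star\tilde\rho_t)\|_{L^\infty}$. Writing $k = \pm\nabla W$ with $-\Delta W = \delta_0$, we have $\nabla_x(k\star\tilde\rho_t) = \pm \nabla^2 W \star \tilde\rho_t$, which is a singular integral operator applied to $\tilde\rho_t$ and is \emph{not} bounded by $\|\tilde\rho_t\|_{L^\infty}$ alone. The standard fix — and presumably the reason the hypothesis is stated in terms of $\nabla\tilde\rho$ — is to move a derivative onto the density: $\nabla^2 W\star\tilde\rho_t = \nabla W \star \nabla\tilde\rho_t$, and then split the convolution integral over $|z|\le 1$ and $|z|>1$. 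On $|z|\le 1$ one uses $|\nabla W(z)|\lesssim |z|^{-(d-1)}$, which is integrable near the origin in dimension $d=3$, against $\|\nabla\tilde\rho_t\|_{L^\infty}$; on $|z|>1$ one bounds $|\nabla W(z)|\lesssim 1$ against $\|\nabla\tilde\rho_t\|_{L^1}$. This yields
\[
\|\nabla_x(k\star\tilde\rho_t)\|_{L^\infty} \;\le\; C\,\|\nabla\tilde\rho_t\|_{L^1\cap L^\infty},
\]
a quantity which is integrable on $[0,T]$ by hypothesis.

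Finally, plugging this into the ODE for $(\nabla_{x,v} u_t)(\tilde x_t,\tilde v_t)$ and using that $\|\nabla_{x,v} b_t\|_{L^\infty} \le C\big(1 + \|\nabla\tilde\rho_t\|_{L^1\cap L^\infty}\big)$, Gr\"onwall's inequality along each characteristic gives
\[
\big|(\nabla_{x,v} u_t)(\tilde x_t,\tilde v_t)\big| \;\le\; \|\nabla_{x,v}\log\tilde f_0\|_{L^\infty} \exp\!\Big( C\!\int_0^t \big(1 + \|\nabla\tilde\rho_s\|_{L^1\cap L^\infty}\big)\,ds\Big).
\]
Since the characteristic flow $(x_0,v_0)\mapsto(\tilde x_t,\tilde v_t)$ is a bijection of $\mathbb{R}^{2d}$ (it is volume-preserving, being the flow of a divergence-free field, and globally defined because $\tilde f_t$ is smooth and decays fast so the force $k\star\tilde\rho_t$ is globally Lipschitz in $x$ uniformly in $t$ — here one uses the log-Lipschitz estimate \eqref{log_lip} together with the $L^1\cap L^\infty$ control), taking the supremum over initial data gives the claimed bound with $C = C\big(T, \|\nabla\tilde\rho\|_{L^1(0,T;L^1\cap L^\infty)}\big)$. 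The main technical obstacle is the one indicated above: controlling $\nabla_x(k\star\tilde\rho_t)$, which is borderline singular, and it is resolved precisely by trading the extra derivative via the hypothesis on $\nabla\tilde\rho$ and exploiting that $d=3$ makes $|z|^{-(d-1)}$ locally integrable; a minor secondary point is justifying rigorously that $\log\tilde f_t$ and its gradient are transported, which follows from the assumed smoothness and fast decay of $\tilde f_t$.
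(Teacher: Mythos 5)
Your proposal is correct and follows the same overall strategy as the paper's proof: observe that $\log\tilde f_t$ is transported by the field $(v,\,k\star\tilde\rho_t)$, differentiate in $x$ and $v$ to get the coupled system for $\nabla_x\log\tilde f_t$ and $\nabla_v\log\tilde f_t$ (with source terms $-\nabla_x(k\star\tilde\rho_t)\cdot\nabla_v\log\tilde f_t$ and $-\nabla_x\log\tilde f_t$ respectively), and close with Gr\"onwall along the characteristics \eqref{char_VP}. The one place where you genuinely deviate is the control of $\|\nabla_x(k\star\tilde\rho_t)\|_{L^\infty}$: the paper invokes the classical estimate \eqref{pot_contr} (from Batt), which bounds this quantity by $(1+\|\tilde\rho_t\|_{L^\infty})\bigl(1+\log(1+\|\nabla\tilde\rho_t\|_{L^\infty})\bigr)+\|\tilde\rho_t\|_{L^1}$, whereas you move the derivative onto the density, writing $\nabla^2 W\star\tilde\rho_t=\nabla W\star\nabla\tilde\rho_t$, and split the convolution at $|z|=1$ to get the elementary bound $C\|\nabla\tilde\rho_t\|_{L^1\cap L^\infty}$. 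Both estimates produce a time-integrable Gr\"onwall coefficient under the stated hypothesis, so both yield the claimed constant $C(T,\|\nabla\tilde\rho\|_{L^1(0,T;L^1\cap L^\infty)})$. Your version is self-contained and matches the hypothesis most directly (it uses exactly $\nabla\tilde\rho_t\in L^1\cap L^\infty$); the paper's version is sharper in its dependence on $\nabla\tilde\rho_t$ (only logarithmic in the $L^\infty$ norm) at the price of also using $\|\tilde\rho_t\|_{L^1}$ and $\|\tilde\rho_t\|_{L^\infty}$, which are harmless under the smoothness and decay assumptions. One cosmetic remark: having bounded $\nabla_x(k\star\tilde\rho_t)$ in $L^\infty$, the force field is actually Lipschitz in $x$ for a.e.\ $t$, so the appeal to the log-Lipschitz estimate \eqref{log_lip} for global well-posedness of the characteristics is not needed, though it does no harm.
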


\begin{remark}
Instead of assuming boundedness of the gradient \(\nabla_{x,v} \log \tilde f_0\), one can relax this requirement by imposing polynomial growth conditions. For example, consider bounds of the form  
\[
|\nabla_{x,v} \log \tilde f_0(x, v)| \leq C(1 + |x|^\alpha + |v|^\alpha),
\]  
where \(C > 0\) and \(\alpha \geq 0\) are constants. Such conditions naturally arise in cases involving Gaussian-type distributions in $v$, where the logarithmic gradient exhibits polynomial growth in \(v\). With this relaxed condition, one should modify the following proofs with additional technical efforts. 

\end{remark}

\begin{proof}
From \eqref{VPeq}, one has

\[
\pa_t \log \tilde f_t + v \cdot \nabla \log \tilde f_t + k\star\tilde\rho_{t} \cdot \nabla_v \log \tilde f_t = 0.
\]
For $i,j=1,2,3$, we can get

\[
\pa_t (\pa_i\log \tilde f_t) + v \cdot \nabla (\pa_i \log \tilde f_t) + k\star \tilde\rho_t \cdot \nabla_v (\pa_i \log \tilde f_t) = -\pa_i (k\star \tilde\rho_t)\cdot \nabla_v \log \tilde f_t
\]
and
\[
\pa_t (\pa_{v_j} \log \tilde f_t) +v \cdot \nabla (\pa_{v_j} \log \tilde f_t)+ k\star {\tilde\rho_t \cdot \nabla_v } (\pa_{v_j} \log \tilde f_t)=- \pa_j \log \tilde f_t .
\]
Then along the characteristics, one obtains

\[
\frac{d}{dt}\|\nabla_v \log \tilde f_t\|_{L^\infty} \le \|\nabla \log \tilde f_t\|_{L^\infty}
\]
and
\[
\frac{d}{dt}\|\nabla_x \log \tilde f_t\|_{L^\infty}\le \|\nabla k\star\tilde \rho_t\|_{L^\infty} \|\nabla_v \log \tilde f_t\|_{L^\infty}
\]
and since we have (see \cite{B77} for example)

\bq\label{pot_contr}
\|\nabla k\star \tilde \rho_t \|_{L^\infty} \le C\lt[ (1+ \|\tilde \rho_t\|_{L^\infty})(1+\log (1+\|\nabla \tilde \rho_t\|_{L^\infty}) + \|\tilde \rho_t\|_{L^1}\rt],
\eq
we can use our assumption to have
\[
\frac{d}{dt}\|\nabla_{x,v}\log \tilde f_t\|_{L^\infty} \le C \|\nabla_{x,v}\log \tilde f_t\|_{L^\infty},
\]
which implies our desired estimate.

\end{proof}

\begin{remark}
Suppose $\tilde f_0$ is of the form
\[
\tilde f_0(x,v) = \frac{1}{(1+|x|^2)^\alpha} \cdot \frac{1}{(1+|v|^2)^\beta}
\] 
with $\alpha, \beta>0$ sufficiently large. Then
\[
|\nabla_x \log \tilde f_0| = \lt|\frac{2\alpha x}{1+|x|^2}\rt|, \quad |\nabla_v \log \tilde f_0| = \lt|\frac{2\beta v}{1+|v|^2}\rt|
\]
and hence we can have initial data satisfying our desired conditions.
\end{remark}
Now, we compare the trajectory \eqref{char_VP} with the Mckean-Vlasov equation \eqref{MV_lim} for the Vlasov-Poisson-Fokker-Planck equation.

\begin{lemma}\label{lem_diff2}
Let $(\tilde x_t, \tilde v_t)$ and $(x_t, v_t)$ be solutions to \eqref{char_VP} and \eqref{MV_lim} with $\sigma=\sigma_N$, respectively, corresponding to the same independent and identically distributed initial data. Once we have $\|\tilde\rho\|_{L^{1}(0,T; L^1\cap L^\infty)}, \|{\rho}\|_{L^1(0,T;L^1\cap L^\infty)} <\infty$, we get
\[
\sup_{0\le t\le T}\mathbb{E}\lt[\sqrt{\log N} |\tilde x_t - x_t| + |\tilde v_t - v_t| \rt]\le C\lt(\sqrt{\sigma_N} + N^{-\theta}(\log N) \rt)e^{C\sqrt{\log N}},
\]
where  $\theta>0$ is any constant and $C = C(\theta, T, \|\tilde\rho\|_{L^1(0,T;L^1\cap L^\infty)}, \|\rho\|_{L^1(0,T;L^1\cap L^\infty)})$ is a positive constant independent of $N$.
\end{lemma}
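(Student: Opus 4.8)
The plan is to estimate the difference process $\tilde D_t := \sqrt{\log N}\,|\tilde x_t - x_t| + |\tilde v_t - v_t|$ directly via a Gr\"onwall argument in expectation, treating the noise term $\sqrt{2\sigma_N}\,dB_t$ in \eqref{MV_lim} as an inhomogeneous perturbation of the deterministic characteristic system \eqref{char_VP}. Since both trajectories are driven by the \emph{same} limiting kernel $k$ convolved with the \emph{same} density $\rho_t$ associated to \eqref{MV_lim} -- wait, more precisely, one trajectory sees $\tilde\rho_t$ (the Vlasov-Poisson density) and the other sees $\rho_t$ (the VPFP density) -- I must further split the force difference as
\[
k\star\tilde\rho_t(\tilde x_t) - k\star\rho_t(x_t) = \bigl(k\star\tilde\rho_t(\tilde x_t) - k\star\tilde\rho_t(x_t)\bigr) + \bigl(k\star(\tilde\rho_t - \rho_t)\bigr)(x_t).
\]
The first bracket is handled by the log-Lipschitz estimate \eqref{log_lip}: $|k\star\tilde\rho_t(\tilde x_t) - k\star\tilde\rho_t(x_t)| \le C\|\tilde\rho_t\|_{L^1\cap L^\infty}|\tilde x_t - x_t|(1 - \log^-|\tilde x_t - x_t|)$, which after taking expectations and using $r(1-\log^- r)\lesssim \sqrt{\log N}\cdot r$ on the relevant range (or a Jensen/concavity argument as in the proof of $\mathbb M_1$) contributes a term $\lesssim \sqrt{\log N}\,\mathbb E[\tilde D_t]$ to the Gr\"onwall inequality, which is exactly what produces the $e^{C\sqrt{\log N}}$ factor.

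The second bracket, $k\star(\tilde\rho_t - \rho_t)$, is the genuinely new ingredient and is the main obstacle. It measures how far the Vlasov-Poisson density has drifted from the VPFP density, and it must be controlled in terms of $\sqrt{\sigma_N}$ (for the noise) plus the regularization error $N^{-\theta}$. I would bound $\|k\star(\tilde\rho_t-\rho_t)\|_{L^\infty}$ (or rather its expectation against the trajectory) by a Wasserstein-type quantity: using $k = \pm\nabla W$ and an $\dot H^{-1}$/$W_1$ or $W_2$ control as in the estimate of $\mathbb M_2$ (via \cite[Theorem 2.9]{Loe06}), one reduces to bounding $W_2(\tilde\rho_t,\rho_t)$, which in turn is controlled by $\mathbb E[\sqrt{\log N}|\tilde x_t - x_t| + |\tilde v_t-v_t|] = \mathbb E[\tilde D_t]$ itself (the coupling bound), closing the loop. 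One then needs the initial seed of the estimate: since the data are identical, $\tilde D_0 = 0$, so the only inhomogeneous source is the stochastic integral $\sqrt{2\sigma_N}\int_0^t dB_s$, whose expected modulus over $[0,T]$ is $O(\sqrt{\sigma_N})$ -- this is where the $\sqrt{\sigma_N}$ in the conclusion originates. (Strictly, one should be careful: $W$ in dimension $3$ is not globally Lipschitz, so the $\dot H^{-1}$ route needs $L^\infty$ bounds on both densities, which are supplied by the assumption $\|\tilde\rho\|_{L^1(0,T;L^1\cap L^\infty)}, \|\rho\|_{L^1(0,T;L^1\cap L^\infty)} < \infty$; and one may also need a small regularization parameter $N^{-\theta}$ entering because $k\star\rho_t$ for the VPFP trajectory in \eqref{MV_lim} may itself need to be compared against a mollified version -- this is the role of the free parameter $\theta$ and the $N^{-\theta}\log N$ term, which can be made arbitrarily favorable.)

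Assembling: after taking $\mathscr{\tilde D}_t := \sup_{0\le s\le t}\mathbb E[\tilde D_s]$ and collecting the three contributions (the noise source $C\sqrt{\sigma_N}$, the regularization error $CN^{-\theta}\log N$, and the two force terms each $\lesssim \sqrt{\log N}\,\mathscr{\tilde D}_s$), one arrives at an integral inequality of the form
\[
\mathscr{\tilde D}_t \le C\bigl(\sqrt{\sigma_N} + N^{-\theta}\log N\bigr) + C\sqrt{\log N}\int_0^t \mathscr{\tilde D}_s\,ds,
\]
and Gr\"onwall's lemma yields $\mathscr{\tilde D}_T \le C(\sqrt{\sigma_N} + N^{-\theta}\log N)e^{C\sqrt{\log N}\,T}$, which is the claimed bound. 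I expect the technical care to go into (i) justifying the It\^o/Gr\"onwall manipulation with the singular-but-log-Lipschitz drift and the supremum inside the expectation, exactly as in Step 3 of the proof of Lemma \ref{lem_diff3}, and (ii) the interchange between the $L^\infty$-bound on $k\star(\tilde\rho_t-\rho_t)$ and the $W_2$-coupling bound, making sure the constants stay independent of $N$ and of $\sigma_N\to 0$; these are the same patterns already used for $\mathbb M_1$, $\mathbb M_2$ and Lemma \ref{lem_diff3}, so the argument is essentially a recombination of tools already in the paper.
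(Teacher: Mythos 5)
Your skeleton (difference process, Gr\"onwall in expectation, the stochastic term contributing $\sqrt{2\sigma_N}\,\mathbb{E}|B_t|\lesssim\sqrt{\sigma_N}$, the log-Lipschitz force producing $e^{C\sqrt{\log N}}$) matches the paper, but the step where the two densities are compared has a genuine gap. You propose to bound $k\star(\tilde\rho_t-\rho_t)$ via $\dot H^{-1}$ and \cite[Theorem 2.9]{Loe06}, which requires $W_2(\tilde\rho_t,\rho_t)$, and then you close the loop by claiming $W_2(\tilde\rho_t,\rho_t)$ is controlled by $\mathbb{E}\big[\sqrt{\log N}|\tilde x_t-x_t|+|\tilde v_t-v_t|\big]$. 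That is false as stated: the coupling only gives $W_2(\tilde\rho_t,\rho_t)^2\le\mathbb{E}[|\tilde x_t-x_t|^2]$, a second moment, and the quantity you propagate is a first moment, which does not dominate it. Nor can you simply quote the paper's second-moment estimates (Proposition \ref{prop_diff}, Lemma \ref{lem_diff3}): those are proved for the regularized kernel $k^N$, whose cutoff is what makes the pointwise bounds integrable, whereas here both trajectories see the true Coulomb kernel. So, as written, the Gr\"onwall inequality you display is not established. (A repair is possible by upgrading the whole lemma to a second-moment estimate, redoing the log-Lipschitz part with squares and a distance-threshold splitting; then the $W_2$/Loeper route closes. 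But that is a different, and not obviously shorter, proof.)

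The paper avoids Wasserstein and $\dot H^{-1}$ altogether and stays at first moments: it writes $k\star(\tilde\rho_s-\rho_s)(\tilde x_s)=\mathbb{E}_{(y_s,\tilde y_s)}\big[k(\tilde y_s-\tilde x_s)-k(y_s-\tilde x_s)\big]$ with an independent copy $(y_s,\tilde y_s)$ of $(x_s,\tilde x_s)$, splits according to $|\tilde y_s-y_s|\gtrless N^{-\theta}$, uses the pointwise bound \eqref{ineq_k} on the far regime (the integral of $|z|^{-3}$ over $N^{-\theta}\le|z|\le1$ produces a $\log N$, which the $\sqrt{\log N}$ weight converts into $\sqrt{\log N}\,\mathfrak{D}(s)$) and the log-Lipschitz bound \eqref{log_lip} on the near regime, which yields the $N^{-\theta}\log N$ error; the term $k\star\rho_s(\tilde x_s)-k\star\rho_s(x_s)$ is treated by the same splitting. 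This also shows that your explanation of the $N^{-\theta}\log N$ term is off: no mollified kernel enters \eqref{MV_lim} in this lemma; $\theta$ is simply a free truncation threshold for distances in the coupling argument, and the precise choice matters (your alternative bound $r(1-\log^- r)\lesssim\sqrt{\log N}\,r$ fails for $N^{-\theta}<r<e^{-\sqrt{\log N}}$ and would degrade the error to $e^{-\sqrt{\log N}}$-size, which is too weak for the use of this lemma in Theorem \ref{cor_VP}).
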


\begin{proof}
We first have

\[\begin{aligned}
d(\tilde x_t - x_t) &= (\tilde v_t - v_t)dt\\
d(\tilde v_t - v_t) &=\lt[ k\star \tilde \rho_{t} (\tilde x_t) -  k\star \rho_{t}(x_t) \rt]dt - \sqrt{2\sigma_N} dB_t.
\end{aligned}\]
and define $\mathfrak{D}(t) := \mathbb{E}\lt[\sqrt{\log N} |\tilde x_t - x_t| + |\tilde v_t - v_t| \rt]$. Then Young's inequality implies

\[\begin{aligned}
\mathbb{E}\lt[ |\tilde x_t - x_t|\rt] &\le \int_0^t  \mathbb{E}\lt[|\tilde v_s - v_s| \rt]\,ds,\\
\mathbb{E}\lt[|\tilde v_t -v_t| \rt] &\le \int_0^t \mathbb{E} \lt[ |k\star\tilde\rho_{s}(\tilde x_s) - k\star\rho_{s}(x_s)| \rt]\,ds + \sqrt{2\sigma_N} \mathbb{E}\lt[ \lt| B_t \rt|\rt]\\
&\le \int_0^t \mathbb{E} \lt[ |k\star\tilde\rho_{s}(\tilde x_s) - k\star\rho_{s}(\tilde{x}_s)| \rt]\,ds\\
&\quad + \int_0^t \mathbb{E} \lt[ |k\star\rho_{s}(\tilde x_s) - k\star\rho_{s}(x_s)| \rt]\,ds + \sqrt{2\sigma_N t}\\
&\le \int_0^t \mathbb{E} \lt[ |k\star\tilde\rho_{s}(\tilde x_s) - k\star\rho_{s}(\tilde{x}_s)| \rt]\,ds\\
&\quad + \int_0^t \mathbb{E} \lt[ |k\star\rho_{s}(\tilde x_s) - k\star\rho_{s}(x_s)| \rt]\,ds + \sqrt{2\sigma_N t}\\
&=: L_1 + L_2 +  \sqrt{2\sigma_{N} t}.
\end{aligned}\]
To estimate $L_1$, we set an independent copy $(y_s, \tilde{y}_s)$ of $(x_s, \tilde{x}_s)$. Then we obtain
\[\begin{aligned}
L_1 &= \int_0^t \mathbb{E} \lt[ \mathbb{E}_{y_s, \tilde{y}_s}[|k(\tilde{y}_s - \tilde{x}_s) - k(y_s - \tilde{x}_s)| ]  \rt]\,ds\\
&= \int_0^t \mathbb{E} \lt[ \mathbb{E}_{y_s, \tilde{y}_s}\lt[|k(\tilde{y}_s - \tilde{x}_s) - k(y_s - \tilde{x}_s)| \ \Big | \ |\tilde{y}_s - y_s|>N^{-\theta} \rt]  \rt]\,ds\\
&\quad + \int_0^t \mathbb{E} \lt[\mathbb{E}_{y_s, \tilde{y}_s}\lt[|k(\tilde{y}_s - \tilde{x}_s) - k(y_s - \tilde{x}_s)| \ \Big | \ |\tilde{y}_s - y_s|\le N^{-\theta}\rt]  \rt]\,ds\\
&=: L_{11} + L_{12}.
\end{aligned}\]
For $L_{11}$, note that $|\tilde{y}_s - \tilde{x}_s| <N^{-\theta}/2$ implies $|y_s -\tilde{x}_s|\ge N^{-\theta}/2$ and $|y_s -\tilde{x}_s|< N^{-\theta}/2$ implies $|\tilde{y}_s - \tilde{x}_s|\ge N^{-\theta}/2$. So one gets, for any $\tilde{x}_s \in \R^3$,
\[
\{ (y_s, \tilde{y}_s) \ | \ |\tilde{y}_s - y_s|>N^{-\theta}, |\tilde{y}_s - \tilde{x}_s| <N^{-\theta}/2 \} \cap \{ (y_s, \tilde{y}_s) \ | \ |\tilde{y}_s - y_s|>N^{-\theta}, |y_s - \tilde{x}_s| <N^{-\theta}/2 \} = \emptyset
\]
Hence we use \eqref{ineq_k} to obtain

\begin{align*}
L_{11}&=\int_0^t\Bigg( \iiint_{\{|\tilde{y} -y|>N^{-\theta}, \ |\tilde{y} - \tilde{x}| < N^{-\theta}/2\}} + \iiint_{\{|\tilde{y}-y|>N^{-\theta}, \ |y - \tilde{x}| < N^{-\theta}/2\}} \\
&\hspace{0.6cm}- \iiint_{\{|y -\tilde{y}|>N^{-\theta}, \ \min\{|\tilde{y} - \tilde{x}|, |y-\tilde{x}|\} \ge N^{-\theta}/2\}}\Bigg) |k(\tilde{y} - \tilde{x}) - k(y - \tilde{x})| \tilde\rho_s(\tilde{y})\rho_s(y)\tilde\rho_s(\tilde x)\,d\tilde{y} dy d\tilde x ds\\
&\le \int_0^t \iiint_{\{|\tilde{y} -y|>N^{-\theta}, \ |\tilde{y} - \tilde{x}| < N^{-\theta}/2\}} \frac{|\tilde{y} - y|}{|\tilde{y} - \tilde{x}|^3}\tilde\rho_s(\tilde y) \rho_s(y)\tilde\rho_s(\tilde x)\,d\tilde y dy d\tilde xds\\
&\quad +  \int_0^t \iiint_{\{|\tilde{y} -y|>N^{-\theta}, \ |y - \tilde{x}| < N^{-\theta}/2\}} \frac{|\tilde{y} - y|}{|y - \tilde{x}|^3}\tilde\rho_s(\tilde y) \rho_s(y)\tilde\rho_s(\tilde x)\,d\tilde y dy d\tilde xds\\
&\quad + \int_0^t \iiint_{\{|y -\tilde{y}|>N^{-\theta}, \ \min\{|\tilde{y} - \tilde{x}|, |y-\tilde{x}|\} \ge N^{-\theta}/2\}}|k(\tilde{y} - \tilde{x}) - k(y - \tilde{x}) |\tilde\rho_s(\tilde{y})\rho_s(y)\tilde\rho_s(\tilde x)\,d\tilde{y} dy d\tilde xds\\
&\le C\log N \int_0^t \|\tilde\rho_s\|_{L^\infty}\iint_{\{ |\tilde y - y| >N^{-\theta}\}}|\tilde y - y| \tilde\rho_s(\tilde y)\rho_s(y)\,d\tilde y dy ds\\
&\quad + \int_0^t \iiint_{\{|y -\tilde{y}|>N^{-\theta}, \ N^{-\theta}/2\le\min\{|\tilde{y} - \tilde{x}|, |y-\tilde{x}|\} \le 1\}} + \int_0^t \iiint_{\{|y -\tilde{y}|>N^{-\theta}, \min\{|\tilde{y} - \tilde{x}|, |y-\tilde{x}|\} > 1\}}\\
&\qquad|\tilde y - y| \lt(\frac{1}{|\tilde y - \tilde x|^3} + \frac{1}{|y - \tilde x|^3} \rt)\tilde\rho_s(\tilde{y})\rho_s(y)\tilde\rho_s(\tilde x)\,d\tilde{y} dyd\tilde x ds\\
&\le C\log N \int_0^t  \|\tilde\rho_s\|_{L^1 \cap L^\infty}\iint_{\{ |\tilde y - y| >N^{-\theta}\}}|\tilde y - y| \tilde\rho_s(\tilde y)\rho_s(y)\,d\tilde y dy ds\\
&\le C\sqrt{\log N} \int_0^t \|\tilde\rho_s\|_{L^1\cap L^\infty} \mathfrak{D}(s)\,ds.
\end{align*}
For $L_{12}$, we use \eqref{log_lip} to yield

\[\begin{aligned}
L_{12}&= \int_0^t \iiint_{\{|\tilde y - y| \le N^{-\theta}\}}(k(\tilde y - \tilde x) - k(y - \tilde x))\tilde\rho_s (\tilde x) \tilde\rho_s(\tilde y) \rho_s(y)\,d\tilde y dy d\tilde x ds\\
&\le \int_0^t \iint_{\{|\tilde y - y|\le N^{-\theta}\}} \lt|\intr |k(\tilde y - \tilde x ) - k(y - \tilde x)| \tilde\rho_s (\tilde x)\,d\tilde x \rt|\tilde\rho_s(\tilde y) \rho_s(y)d\tilde y dy ds\\
&\le C\int_0^t \|\tilde\rho_s\|_{L^1\cap L^\infty}\iint_{\{|\tilde y - y|\le N^{-\theta}\}} |\tilde y - y| (1- \log^- |\tilde y - y|)\tilde\rho_s(\tilde y) \rho_s(y)\,d\tilde y dy ds\\
&\le C N^{-\theta}\log N \int_0^t \|\tilde\rho_s\|_{L^1\cap L^\infty}\,ds,
\end{aligned}\]
where $N$ and $\theta$ are chosen so that $N^{\theta} \ge e$.\\

\noindent For $L_2$, we similarly proceed as before to yield

\begin{align*}
L_2 &\le \int_0^t \mathbb{E} \lt[ |k\star\rho_{s}(\tilde x_s) - k\star\rho_{s}(x_s)|  \ \Big | \ |\tilde x_s - x_s| > N^{-\theta}\rt]\,ds\\
&\quad +\int_0^t \mathbb{E} \lt[ |k\star\rho_{s}(\tilde x_s) - k\star\rho_{s}(x_s)|  \ \Big | \ |\tilde x_s - x_s| \le N^{-\theta}\rt]\,ds\\
&\le C\sqrt{\log N}\int_0^t \|\rho_{s}\|_{L^1\cap L^\infty} \mathfrak{D}(s)\,ds + CN^{-\theta} (\log N)\int_0^t \|\rho_{s}\|_{L^1\cap L^\infty} \,ds.
\end{align*}
Thus, we gather all the estimates for $L_1$ and $L_2$ to yield

\[\begin{aligned}
\mathfrak{D}(t) &\le C\sqrt{\log N}\int_0^t \lt(\|\tilde\rho_{s}\|_{L^1\cap L^\infty}+ \|\rho_{s}\|_{L^1\cap L^\infty} \rt)\mathfrak{D}(s)\,ds\\
&\quad + CN^{-\theta} (\log N)\int_0^t \lt(\|\tilde\rho_{s}\|_{L^1\cap L^\infty}+ \|\rho_{s}\|_{L^1\cap L^\infty} \rt) \,ds + \sqrt{2\sigma_N t},
\end{aligned}\]
and use Gronwall's lemma to obtain

\[\begin{aligned}
\mathfrak{D}(t) &\le \sqrt{2\sigma_N t} + CN^{-\theta}(\log N) \int_0^t \lt(\|\tilde\rho_{s}\|_{L^1\cap L^\infty}+ \|\rho_{s}\|_{L^1\cap L^\infty} \rt) \,ds\\
&\quad + \int_0^t \lt(\sqrt{2\sigma_N s} + CN^{-\theta}(\log N)\int_0^s \lt(\|\tilde\rho_{\tau}\|_{L^1\cap L^\infty}+ \|\rho_{\tau}\|_{L^1\cap L^\infty} \rt)\,d\tau  \rt)\\
&\hspace{1.4cm}\times \exp\lt(C\sqrt{\log N}\int_s^t\lt(\|\tilde\rho_{\tau}\|_{L^1\cap L^\infty}+ \|\rho_{\tau}\|_{L^1\cap L^\infty} \rt)\,d\tau \rt)\,ds\\
&\le C\lt( \sqrt{\sigma_N} + N^{-\theta}(\log N)\rt) e^{C\sqrt{\log N}},
\end{aligned}\]
where $C = C(T, \|\tilde\rho\|_{L^1(0,T;L^1\cap L^\infty)}, \|\rho\|_{L^1(0,T;L^1\cap L^\infty)})$ is a positive constant independent of $N$ and this is our desired result.

\end{proof}

\subsection{Proof of Theorem \ref{cor_VP}}
Now we proceed to the proof of Theorem \ref{cor_VP}. Similarly to the previous proofs, we estimate
\[\begin{aligned}
\frac{d}{dt}\mathcal{H}_N(f_t^N | \tilde f_t^{\otimes N}) &= -\frac{\sigma_N}{N}\sum_{i=1}^N \int_{\R^{6N}} f_t^N |\nabla_{v_i} \log f_t^N|^2  \,dZ_N\\
&\quad - \frac1N\sum_{i=1}^N\int_{\R^{6N}}f_t^N \lt[\lt(\frac{1}{N-1}\sum_{j\neq i} k^N (x_i - x_j)  - k\star\tilde\rho_{t}(x_i)\rt)\cdot \nabla_{v_i}\log \tilde f_t^{\otimes N}\rt] \,dZ_N \\
&\quad + \frac{\sigma_N}{N} \int_{\R^{6N}} f_t^N \nabla_{v_i} \log f_t^N \nabla_{v_i} \log \tilde f_t^{\otimes N}\,dZ_N\\
&\le  -\frac{\sigma_N}{2N} \sum_{i=1}^N\int_{\R^{6N}} |\nabla_{v_i} \log f_t^N|^2 f_t^N\,dZ_N + \frac{\sigma_N}{2N} \sum_{i=1}^N\int_{\R^{6N}} |\nabla_{v_i} \log \tilde f_t^{\otimes N}|^2 f_t^N \,dZ_N\\
&\quad + \frac1N \sum_{i=1}^N\|\nabla_{v_i} \log \tilde f_t^{\otimes N}\|_{L^\infty} \int_{\R^{6N}} \lt|\frac{1}{N-1}\sum_{j\neq i}k^N (x_j - x_i) - k\star\tilde\rho_{t}(x_i) \rt| f_t^N \,dZ_N\\
&\le -\frac{\sigma_N}{2N} \sum_{i=1}^N\int_{\R^{6N}} |\nabla_{v_i} \log f_t^N|^2 f_t^N\,dZ_N + \frac{\sigma_N}{2N} \sum_{i=1}^N\|\nabla_{v_i} \log \tilde f_t^{\otimes N}\|_{L^\infty}^2\\
&\quad + \frac1N \sum_{i=1}^N\|\nabla_{v_i} \log \tilde f_t^{\otimes N}\|_{L^\infty} \int_{\R^{6N}} \lt| \frac{1}{N-1}\sum_{j\neq i}k^N (x_j - x_i) - k\star\tilde\rho_{t}(x_i) \rt|f_t^N \,dZ_N\\
\end{aligned}\]
and use again the exchangeability of the particle system \eqref{micro_reg} to yield

\[\begin{aligned}
\frac{d}{dt}&\mathcal{H}_N(f_t^N | \tilde f_t^{\otimes N}) + \frac{\sigma_N}{2N}\iint_{\R^{3N}\times \R^{3N}}\sum_{i=1}^N \lt|\nabla_{v_i} \log f_t^N \rt|^2 f_t^N dZ_N \\
&\le \frac{\sigma_N}{2}\|\nabla_v \log \tilde f_t\|_{L^\infty}^2 + \|\nabla_v \log \tilde f_t\|_{L^\infty}\mathbb{E}\lt[ \bigg| k\star\tilde\rho_t (x_t^{1,N}) - \frac{1}{N-1}\sum_{j\neq 1} k^N(x_t^{1,N} -x_t^{j,N})\bigg| \rt]\\
&=: \frac{\sigma_N}{2}\|\nabla_v \log \tilde f_t\|_{L^\infty}^2 + \|\nabla_v \log \tilde f_t\|_{L^\infty}\mathbb{K}.
\end{aligned}\]
where the expectation is taken over the law $f_t^N$. Now we split the estimates for $\mathbb{K}$ as follows:
\[\begin{aligned}
\mathbb{K} &\le 5\mathbb{E}\lt[ \lt| k\star\tilde\rho_t (x_t^{1,N}) - k\star\tilde\rho_t(\bar x_t^{1,N})\rt|\rt] +5\mathbb{E}\lt[ \lt| k\star\tilde\rho_t (\bar x_t^{1,N}) -k\star\bar\rho_t (\bar x_t^{1,N}) \rt|\rt]\\
&\quad + 5\mathbb{E}\lt[\lt|(k-k^N)\star\bar\rho_t (\bar x_t^{1,N}) \rt| \rt] + 5\mathbb{E}\lt[\bigg| k^N\star\bar\rho(\bar x_t^{1,N}) - \frac{1}{N-1}\sum_{j\neq 1}k^N(\bar x_t^{1,N} - \bar x_t^{j,N})\bigg| \rt]\\
&\quad +5\mathbb{E}\lt[\bigg| \frac{1}{N-1}\sum_{j\neq 1}k^N(\bar x_t^{1,N} - \bar x_t^{j,N})- \frac{1}{N-1}\sum_{j\neq 1}k^N( x_t^{1,N} -  x_t^{j,N}) \bigg|\rt]\\
&=: \sum_{i=1}^5 \mathbb{K}_i.
\end{aligned}\]
\noindent For $\mathbb{K}_1$ and $\mathbb{K}_3$-$\mathbb{K}_5$, the estimates for $\mathbb{M}_1$ and $\mathbb{M}_3$-$\mathbb{M}_5$ in the proof of Theorem \ref{main_thm1} imply
\[\begin{aligned}
&\mathbb{K}_1 \le CN^{-\delta}\sqrt{\log N},\qquad \mathbb{K}_3 \le CN^{-\delta},\\
&\mathbb{K}_4 \le CN^{-\delta}, \qquad
\mathbb{K}_5 \le N^{-\delta}(\log N).
\end{aligned}\]
\noindent For $\mathbb{K}_2$, we let $\tilde y_t$ and  $\bar y_t$ be independent copies of $\tilde x_t$ in \eqref{char_VP} and $\bar x_t$ in \eqref{MV_reg}, respectively. Then we have
\[\begin{aligned}
\mathbb{E}&[|k\star(\tilde\rho_t - \bar\rho_t)(\bar x_t^{1,N})|]\\
&= \mathbb{E}\Big[\mathbb{E}_{(\tilde y_t, \bar y_t)}\lt[|k(\tilde y_t - \bar x_t^{1,N}) -k(\bar y_t - \bar x_t^{1,N})|\rt]\Big]\\
&\le \iint_{|\tilde y - \bar y|\le N^{-1}} \lt(\intr |k(\tilde y - \bar x) - k(\bar y - \bar x)|\bar\rho_{t}(\bar x)\,d\bar x \rt) \tilde\rho_{t}(\tilde y) \bar\rho_{t}(\bar y)\,d\tilde y d\bar y\\
&\quad +\iint_{|\tilde y - \bar y|> N^{-1}} \lt(\intr |k(\tilde y - \bar x) - k(\bar y - \bar x)|\bar \rho_{t}(\bar x)\,d\bar x \rt) \tilde\rho_{t}(\tilde y) \bar \rho_{t}(\bar y)\,d\tilde y d\bar y\\
&\le CN^{-1}\log N \|\bar\rho_{t}\|_{L^1\cap L^\infty} + C\sqrt{\log N} \|\bar\rho_{t}\|_{L^1\cap L^\infty} \mathbb{E}\lt[  \sqrt{\log N}|\bar x_t - \tilde x_t|\rt]\\
&\le CN^{-1}(\log N) \|\bar\rho_{t}\|_{L^1\cap L^\infty} + C\sqrt{\log N}\|\bar\rho_{t}\|_{L^1\cap L^\infty}\lt(\mathbb{E}\lt[\sqrt{\log N} |\bar x_t - x_t| \rt] + \mathbb{E}\lt[\sqrt{\log N} |x_t - \tilde x_t| \rt] \rt)\\
&\le CN^{-1}(\log N) + C\lt(N^{-\delta}  + \sqrt{\sigma_N} + N^{-1}(\log N) \rt)\exp\lt(C\sqrt{\log N} \rt)
\end{aligned}\]
where $x_t$ satisfies \eqref{MV_lim} and we used Proposition \ref{prop_diff} and Lemma \ref{lem_diff2}.
Thus, we get
\[
\mathbb{K}_2 \le C\lt(N^{-\delta}  + \sqrt{\sigma_N} + N^{-1}(\log N) \rt)\exp\lt(C\sqrt{\log N} \rt)
\]
for sufficiently large $N$.

\vskip3mm
Therefore, we gather all the estimates now to yield
\[\begin{aligned}
\frac{d}{dt}&\mathcal{H}_N(f_t^N | \tilde f_t^{\otimes N}) + \frac{\sigma_N}{2N}\iint_{\R^{3N}\times \R^{3N}}\sum_{i=1}^N \lt|\nabla_{v_i} \log f_t^N \rt|^2 f_t^N dZ_N\\
&\le C\sigma_N + C\lt(N^{-\delta}+ \sqrt{\sigma_N} \rt)\exp\lt( C\sqrt{\log N}\rt)+ CN^{-\delta}(\log N)
\end{aligned}\]
and this completes the proof.

\subsection{Remarks on $L^1$-convergence}
Here, we mention that if one tries to focus only on the propagation of chaos results in $L^1$, we can simply compute relative entropy between two equations \eqref{main_eq} and \eqref{VPeq}.  Namely, let $f_t$ and $\tilde f_t$ be the solution to \eqref{main_eq} and \eqref{VPeq}, respectively, with the  same initial data $f_0$. Here, we estimate
\[\begin{aligned}
\frac{d}{dt}\mathcal{H}(f_t | \tilde f_t) &= -\sigma \intrr |\nabla_v \log f_t|^2 f_t \,dz + \intrr k\star (\rho_t - \tilde \rho_t)  f_t \nabla_v \log \tilde f_t\,dz\\
&\quad + \sigma \intrr f_t\nabla_v \log f_t \nabla_v \log \tilde f_t\,dz\\
&\le  -\frac\sigma2 \intrr |\nabla_v \log f_t|^2 f_t\,dz + \frac\sigma2 \intrr |\nabla_v \log \tilde f_t|^2 f_t\,dz\\
&\quad + \|\nabla_v \log \tilde f_t\|_{L^\infty} \intr \rho_t |k\star (\rho_t - \tilde\rho_t)|\,dx\\
&\le -\frac\sigma2 \intrr |\nabla_v \log f_t|^2 f_t\,dz + \frac\sigma2 \intrr |\nabla_v \log \tilde f_t|^2 f_t\,dz\\
&\quad + \|\nabla_v \log \tilde f_t\|_{L^\infty} \mathbb{E}[|k\star(\rho_t - \tilde\rho_t)(x_t)|],
\end{aligned}\]
and similarly to the estimates for $\mathbb{K}_2$ in the proof of Theorem \ref{cor_VP}, we can get

\[\begin{aligned}
&\mathbb{E}[|k\star(\rho_t - \tilde\rho_t)(x_t)|]\\
=& \mathbb{E}\Big[\mathbb{E}_{(y_t, \tilde y_t)}\lt[|k(y_t - x_t) -k(\tilde y_t - x_t)|\rt]\Big]\\
\le& \iint_{|y - \tilde y|\le N^{-1}} \lt|\intr |k(\tilde y - x) - k(y - x)|\rho_{t}(x)\,dx \rt| \tilde\rho_{t}(\tilde y) \rho_{t}(y)\,d\tilde y dy\\
&\quad {+}\iint_{|y - \tilde y|> N^{-1}} \lt|\intr |k(\tilde y - x) - k(y - x)|\rho_{t}(x)\,dx \rt| \tilde\rho_{t}(\tilde y) \rho_{t}(y)\,d\tilde y dy\\
\le& CN^{-1}\log N \|\rho_{t}\|_{L^1\cap L^\infty} + C\sqrt{\log N} \|\rho_{t}\|_{L^1\cap L^\infty} \mathbb{E}\lt[  \sqrt{\log N}|x_t - \tilde x_t|\rt]\\
\le& CN^{-1}(\log N) \|\rho_{t}\|_{L^1\cap L^\infty} + C\sqrt{\log N}(\sqrt{\sigma} + N^{-1}(\log N))e^{C\sqrt{\log N}}\|\rho_{t}\|_{L^1\cap L^\infty}.
\end{aligned}\]
Thus, we can get
\[
\mathcal{H}(f_t | \tilde f_t) \le C\sigma_N + C\lt(N^{-1}(\log N)+ \sqrt{\sigma_N} \rt)\exp\lt( C\sqrt{\log N}\rt).
\]
Therefore, we combine this with Theorem \ref{main_thm1} to yield

\[\begin{aligned}
\|f_t^N - \tilde f_t^{\otimes N}\|_{L^1}^2 &\le 2\|f_t^N - f_t^{\otimes N}\|_{L^1}^2 + 2 \|f_t^{\otimes N} - \tilde f_t^{\otimes N}\|_{L^1}^2\\
&\le C\mathcal{H}_N\lt(f_t^N \ | \ f_t^{\otimes N} \rt) + C\mathcal{H}_1(f_t \ | \ \tilde f_t)\\
&\le C\frac{\exp(C\sqrt{\log N})}{\sigma_N N^{2\delta}} + C\sigma_N + C\lt(N^{-1}(\log N)+ \sqrt{\sigma_N} \rt)\exp\lt( C\sqrt{\log N}\rt).
\end{aligned}\]
Thus, by choosing $\sigma_N := o(N^{-2\delta}\exp(C\sqrt{\log N})$, we have the desired propagation of chaos in $L^1$. Here, we also note that if $\sigma_N \gg N^{-2\delta} \exp(C\sqrt{\log N})$, we may expect that the $N$-particle distribution $f_t^N$ tends to the solution $\tilde f_t$ of Vlasov-Poisson without getting closer to Vlasov-Poisson-Fokker-Planck.


\begin{thebibliography}{10}

\bibitem{B77} J. Batt, Global symmetric solutions of the initial value problem of stellar dynamics, J. Differential Equations, 25, (1977), 342-364.

\bibitem{Brenier2000} Y. Brenier, Convergence of the Vlasov-Poisson system to the incompressible Euler equations." Communications in Partial Differential Equations 25, no. 3-4 (2000): 737-754.
\bibitem{BDJ24} Bresch, Didier, M. Duerinckx, and P-E. Jabin. "A duality method for mean-field limits with singular interactions." arXiv preprint arXiv:2402.04695 (2024).

\bibitem{BJS23} Bresch, D., Jabin, P.E. and Soler, J., 2023. A new approach to the mean-field limit of Vlasov-Fokker-Planck equations. Analysis \& PDE.

\bibitem{Boucht93}. Bouchut, F., Existence and uniqueness of a global smooth solution for the Vlasov-Poisson-FokkerPlanck system in three dimensions, J. Funct. Anal., 1993, 111(1): 239–258.

\bibitem{Degond86} Degond, P., Global existence of smooth solutions for the Vlasov-Fokker-Planck equation in 1 and
2 space dimensions, Ann. Sci. Ecole Norm. Sup. (4), 1986, 19(4): 519–542. ´

\bibitem{VD90} Victory, Jr. H. D., O’Dwyer, B. P., On classical solutions of Vlasov-Poisson Fokker-Planck systems,
Indiana Univ. Math. J., 1990, 39(1): 105–156.


\bibitem{bresch2019mean}
D. Bresch, P.-E. Jabin, and Z. Wang. On mean-field limits and quantitative estimates with a large class of singular kernels: application to the patlak--keller--segel model. {\em Comptes Rendus Mathematique}, 357(9):708--720, 2019.



\bibitem{CCJ21} J. A. Carrillo, Y.-P. Choi and J. Jung, Quantifying the hydrodynamic limit of Vlasov-type equations with alignment and nonlocal forces, Math. Models Methods Appl. Sci., 31, (2021). 327-408.

\bibitem{CCS19} J. A. Carrillo, Y.-P. Choi and S. Salem, Propagation of chaos for the Vlasov-Poisson-Fokker-Planck equation with a polynomial cut-off, Commun. Contemp. Math., 21, (2019). 1850039.

\bibitem{CHH24} L. Chen, A. Holzinger and X. Huo, Quantitative convergence in relative entropy for a moderately interacting particle system on $\mathbb{R}^d$,
arXiv:2311.01980v1.


\bibitem{CHJ24} L. Chen, A. Holzinger and A. J\"ungel, Fluctuations around the mean-field limit for attractive Riesz potentials in the moderate regime, preprint.

\bibitem{CJpre} Y.-P. Choi and J. Jung, On weak solutions to the kinetic Cucker-Smale model with singular communication weights, Proc. Amer. Math. Soc., to appear.

\bibitem{Defermos79} C.M. Dafermos. The second law of thermodynamics and stability. Arch. Rational Mech. Anal.,
70: 167-179, 1979.

\bibitem{Defermos20}  Dafermos, C. M., Hyperbolic Conservation Laws in Continuum Physics, Vol. 325 of Grundlehren
der Mathematischen Wissenschaften, Berlin: Springer-Verlag, 2000.

\bibitem{duerinckx16} M. Duerinckx. Mean-field limits for some Riesz interaction gradient flows. {\em SIAM Journal on Mathematical Analysis}, 48(3), 2269-2300, 2016.

\bibitem{FW23} Feng, X. and Wang, Z., 2023. Quantitative propagation of chaos for 2d viscous vortex model on the whole space. arXiv preprint arXiv:2310.05156.


\bibitem{figalli2008convergence} A. Figalli and R. Philipowski. Convergence to the viscous porous medium equation and propagation of chaos. {\em ALEA Lat. Am. J. Probab. Math. Stat}, 4:185--203, 2008.



\bibitem{golse2016dynamics} F. Golse. On the dynamics of large particle systems in the mean field limit. In {\em Macroscopic and large scale phenomena: coarse graining, mean field limits and ergodicity}, pages 1--144. Springer, 2016.

\bibitem{GNPS05} Goudon, Thierry, Juanjo Nieto, Frédéric Poupaud, and Juan Soler. "Multidimensional high-field limit of the electrostatic Vlasov–Poisson–Fokker–Planck system." Journal of Differential Equations 213, no. 2 (2005): 418-442.

\bibitem{HKI21} Han-Kwan, Daniel, and Mikaela Iacobelli. "From Newton’s second law to Euler’s equations of perfect fluids." Proceedings of the American Mathematical Society 149, no. 7 (2021): 3045-3061.

\bibitem{HJ07} M. Hauray and P.-E. Jabin, N-particles approximation of the Vlasov equations with
singular potential, Arch. Ration. Mech. Anal. 183(3) (2007) 489–524.

\bibitem{HJ15} M. Hauray and P.-E. Jabin, Particle approximation of Vlasov equations with singular
forces: Propagation of chaos, Ann. Sci. ´ Ec. Norm. Sup´er. (4) 48(4) (2015) 891–940.


\bibitem{HaurayMischler14} M. Hauray, and S. Mischler. On Kac's chaos and related problems. {\em Journal of Functional Analysis}, 266(10), 6055-6157, 2014.

\bibitem{HLW06} Xiao, L., Li, F. and Wang, S., 2006. Convergence of the Vlasov-Poisson-Fokker-Planck system to the incompressible Euler equations. Science in China Series A, 49, pp.255-266.

\bibitem{HLW08} Hsiao, L., Li, F. and Wang, S., 2008. Monokinetic Limits of the Vlasov-Poisson/Maxwell-Fokker-Planck System. Hyperbolic Problems: Theory, Numerics, Applications, pp.533-540.

\bibitem{HLP20} H. Huang, J.-G. Liu and P. Pickl, On the mean-field limit for the Vlasov–Poisson–Fokker–Planck system, J. Stat. Phys., 181, (2020), 1915-1965.


\bibitem{jabinWang2016} P.-E. Jabin, and Z. Wang. Mean field limit and propagation of chaos for Vlasov systems with bounded forces. {\em Journal of Functional Analysis}, 271(12), 3588-3627, 2016.

\bibitem{jabin2017mean} P.-E. Jabin and Z. Wang. Mean field limit for stochastic particle systems. In {\em Active Particles, Volume 1}, pages 379--402. Springer, 2017.

\bibitem{jabin2018quantitative} P.-E. Jabin and Z. Wang. Quantitative estimates of propagation of chaos for stochastic systems with ${W}^{-1,\infty}$ kernels. {\em Inventiones mathematicae}, 214(1):523--591, 2018.

\bibitem{Kang18} M.-J. Kang, From the Vlasov–Poisson equation with strong local alignment to the
pressureless Euler–Poisson system, Appl. Math. Lett. 79 (2018) 85–91.

\bibitem{KMT15} T. K. Karper, A. Mellet and K. Trivisa, Hydrodynamic limit of the kinetic Cucker–
Smale flocking model, Math. Models Methods Appl. Sci. 25 (2015) 131–163.
\bibitem{Lacker23} D. Lacker. Hierarchies, entropy, and quantitative propagation of chaos for mean field diffusions. {\em Probability and Mathematical Physics}, 4(2), 377-432, 2023. 

\bibitem{K14}  Kiessling, M.K.H., 2014. The microscopic foundations of Vlasov theory for jellium-like Newtonian N-body systems. Journal of Statistical Physics, 155(6), pp.1299-1328.

\bibitem{LP} D.  Lazarovici and P.  Pickl,  \emph{A mean field limit for the
		{V}lasov--{P}oisson system},  Arch.  Ration.  Mech.  Anal.  \textbf{225} (2017), 
	no. ~3,  1201--1231. 


	
\bibitem{LCZ}Y.  Li,  L.  Chen,  Z.  Zhang, \emph{ Convergence Towards the Population Cross-Diffusion System from Stochastic Many-Particle System},  Commun.  Math.  Res. 
	doi: 10. 4208/cmr. 2023-0002
	Vol.  40,  No.  1,  pp.  43-63. 


\bibitem{LP91} P. L. Lions, B. Perthame, Propagation of moments and regularity for the 3-dimensional Vlasov–Poisson system. Invent. Math. 105,  (1991) 415–430.

 \bibitem{Loe06} G. Loeper, Uniqueness of the solution to the Vlasov-Poisson system with bounded density, J. Math. Pures Appl., 86, (2006), 68--79.

\bibitem{Maj84} A.~Majda. {\em Compressible Fluid Flow and Systems of Conservation 
	Laws in Several Space Variables}. Springer, New York, 1984.

\bibitem{Masmoudi2001}Masmoudi, Nader. "From Vlasov-Poisson system to the incompressible Euler system." Communications in Partial Differential Equations 26, no. 9-10 (2001): 1913-1928.

\bibitem{Meleard87} S. M\'el\'eard, and S. Roelly-Coppoletta. A propagation of chaos result for a system of particles with moderate interaction. { \em Stochastic processes and their applications}, 26, 317-332, 1987.

\bibitem{Miclo01} L. Miclo, and P. Del Moral, P. Genealogies and increasing propagation of chaos for Feynman-Kac and genetic models. {\em The Annals of Applied Probability}, 11(4), 1166-1198, 2001.



\bibitem{oelschlager1985law} K. Oelschl{\"a}ger. A law of large numbers for moderately interacting diffusion processes. {\em Zeitschrift f{\"u}r Wahrscheinlichkeitstheorie und verwandte Gebiete}, 69(2):279--322, 1985.



\bibitem{oeschlager1987fluctuation} K. Oelschl{\"a}ger. A fluctuation theorem for moderately interacting diffusion processes. {\em Probability theory and related fields}, 74(4): 591--616, 1987.

\bibitem{oelschlager2001viscous} K. Oelschl{\"a}ger. A sequence of integro-differential equations approximating a viscous porous medium equation. {\em Zeitschrift f{\"u}r Analysis und ihre Anwendungen}, 20(1), 55-91, 2001.

\bibitem{oelschlager1990large} K. Oelschl{\"a}ger. Large systems of interacting particles and the porous medium equation. {\em Journal of differential equations}, 88(2):294--346, 1990.

\bibitem{oelschlager91} K. Oelschl{\"a}ger. On the connection between Hamiltonian many-particle systems and the hydrodynamical equations.{ \em Archive for rational mechanics and analysis }, 115, 297-310, 1991.

\bibitem{Olivera20} C. Olivera, A. Richard, and M. Tomasevic. Quantitative particle approximation of nonlinear Fokker-Planck equations with singular kernel. {\em arXiv preprint arXiv:2011.00537}, 2020.

\bibitem{Olivera20Burgers} C. Olivera, A. Richard, and M. Tomasevic, M. Quantitative approximation of the Keller-Segel and Burgers equations by moderately interacting particles. {\em arXiv preprint arXiv:2004.03177}, 2020.

\bibitem{P92} K. Pfaffelmoser, Global classical solutions of the Vlasov-Poisson system in three dimensionsfor general initial data, J. Differential Equations, 95, (1992)  281–303.

\bibitem{philipowski2007interacting} R. Philipowski. Interacting diffusions approximating the porous medium equation and propagation of chaos. {\em Stochastic processes and their applications}, 117(4):526--538, 2007.


\bibitem{serfaty2020mean} S. Serfaty. Mean field limit for coulomb-type flows. {\em Duke Mathematical Journal}, 169(15):2887--2935, 2020.


\bibitem{serfaty2018systems} S. Serfaty. Systems of points with coulomb interactions. In {\em Proceedings of the International Congress of Mathematicians: Rio de Janeiro 2018}, pages 935--977. World Scientific, 2018.


\bibitem{Sznitman91} A. S. Sznitman. Topics in propagation of chaos. {\em Lecture notes in mathematics}, 165-251, 1991.

\bibitem{Villani02} C. Villani. A review of mathematical topics in collisional kinetic theory. {\em Handbook of mathematical fluid dynamics}, 1(71-305), 3-8, 2002.

\bibitem{Wang05} Wang, S., 2005. Quasineutral limit of Euler–Poisson system with and without viscosity. Communications in Partial Differential Equations, 29(3-4), pp.419-456.

\bibitem{Y1991} H.-T. Yau, \emph{Relative entropy and hydrodynamics of Ginzburg-Landau models}, Lett. Math. Phys., \textbf{22} (1991), no. 1, 63–80.

\end{thebibliography}
\end{document}